\documentclass[%
reprint,
amsmath,amssymb,
aps,
pra,
superscriptaddress
]{revtex4-2}

\usepackage{comment}
\usepackage{graphicx}
\usepackage{dcolumn}
\usepackage{bm}
\usepackage{braket}
\usepackage{amsmath}
\usepackage{bbm}
\usepackage[dvipsnames]{xcolor}
\usepackage{mathrsfs}
\usepackage[colorlinks=true,linkcolor=black,citecolor=black,urlcolor=NavyBlue]{hyperref}
\usepackage{pifont}
\usepackage{hyperref}
\usepackage{tikz}
\usetikzlibrary{quantikz2}

% Commenting Commands

\definecolor{GPOcolor}{HTML}{B3C3E5}
\definecolor{NSOcolor}{HTML}{F8BEBD}
\definecolor{statecolor}{HTML}{F8BEBD}
\definecolor{channelcolor}{HTML}{F3D7CC}
\definecolor{dblue}{rgb}{.2,0.2,.8}

\newcommand{\nocontentsline}[3]{}
\newcommand{\tocless}[2]{\bgroup\let\addcontentsline=\nocontentsline#1{#2}\egroup}

\newcommand{\RT}{R_\mathrm{T}}
\DeclareMathAlphabet{\mathdutchcal}{U}{dutchcal}{m}{n}

\usepackage{amsthm}
\theoremstyle{definition}

\theoremstyle{plain}
\newtheorem{theorem}{Theorem}

%braket definitions 
\def\BraVert{\egroup\,\mid\,\bgroup}

\def\ketbra#1#2{\ket{#1\vphantom{#2}}\!\bra{#2\vphantom{#1}}}
%braket definitions

\def\bra#1{\mathinner{\langle{#1}|}}

\def\ket#1{\mathinner{|{#1}\rangle}}

\begin{document}

\title{Thermodynamic criteria for signalling in quantum channels}

\author{Yutong Luo}
\email{yuluo@tcd.ie}
\affiliation{School of Physics, Trinity College Dublin, Dublin 2, Ireland}
\affiliation{Trinity Quantum Alliance, Unit 16, Trinity Technology and Enterprise Centre, Pearse Street, Dublin 2, Ireland}

\author{Simon Milz}
\email{s.milz@hw.ac.uk}
\affiliation{Institute of Photonics and Quantum Sciences, Heriot-Watt University, Edinburgh EH14 4AS, 
United Kingdom}

\author{Felix C. Binder}
\email{felix.binder@tcd.ie}
\affiliation{School of Physics, Trinity College Dublin, Dublin 2, Ireland}
\affiliation{Trinity Quantum Alliance, Unit 16, Trinity Technology and Enterprise Centre, Pearse Street, Dublin 2, Ireland}

\date{\today}

\begin{abstract}
Signalling quantum channels are fundamental to quantum communication, enabling the transfer of information from input to output states. In contrast, thermalisation erases information about the initial state. This raises a crucial question: How does the thermalising tendency of a quantum channel constrain its signalling power and vice versa? 
In this work, we address this question by considering three thermodynamic tasks associated with a quantum channel: the generation, preservation, and transmission of athermality. We provide faithful measures for athermality generation and athermality preservation of quantum channels, and prove that their difference quantifies athermality transmission.  Analysing these thermodynamic tasks, we find that the signalling ability of a quantum channel is upper-bounded by its athermality preservation and lower-bounded by its athermality transmission, thereby establishing a fundamental relationship between signalling and thermodynamic properties of channels for quantum communication.
We demonstrate this interplay for the example of the quantum switch, revealing an explicit trade-off between the signalling ability and athermality of the quantum channels it can implement. 
\end{abstract}

\maketitle

\section{Introduction}
Quantum communication fundamentally relies on the ability of quantum channels to transmit information. It requires channels that are \textit{signalling}, such that changes in the input state lead to distinguishable changes at the output. In practice, quantum communication devices are subject to energetic and thermodynamic constraints~\cite{jarzyna_classical_2017,wilde_energy-constrained_2018, ding_quantum_2019, Narasimhachar2019, Korzekwa2022Encoding, Biswas2022Fluctuation} and unavoidably exposed to environmental interactions that induce thermal noise~\cite{BreuerOpenQS, schlosshauer_decoherence_2007}, driving systems toward thermal equilibrium and erasing information about the initial state. This introduces a fundamental tension between two competing effects: the signalling power of a channel and its thermalising tendency. While the thermodynamic cost of \textit{implementing} a quantum channel is well understood~\cite{faist2015minimal,faist2018fundamental,faist2021thermodynamic}, the \textit{interplay} between signalling and thermodynamic resources of a given quantum channel is physically distinct and remains to be fully characterised. Understanding this trade-off is essential for assessing the limitations of quantum communication under realistic, thermal noise. For incoherent states and dynamics, it was recently shown that a quantum channel can transmit $n$ bits of classical information if and only if it can transmit $n$ units of energy~\cite{hsieh2025dynamical}.

In this work, we consider the fully quantum case and reveal the interrelation between a quantum channel's signalling and thermodynamic capabilities by analysing three essential thermodynamic tasks: the generation, preservation, and transmission of athermality. Relating these thermodynamic tasks to resource-theoretic quantities~\cite{horodecki_quantumness_2013, Brandao2015, Coecke_mathematical_2016, QRTs_review, gour_2025}, we prove that the ability of a quantum channel to signal is thermodynamically constrained from below (above) by how much athermality it can transmit (preserve). We then derive an operational interpretation of these resources via a channel simulation problem, yielding the resource cost of implementing a channel and extending the work-cost framework discussed in Refs.~\cite{faist2018fundamental,faist2021thermodynamic}. An area where thermodynamic, signalling, and causality considerations concurrently come to the fore is the quantum switch~\cite{chiribella2013quantum}, which has attracted attention in both quantum communication and thermodynamics~\cite{ebler2018enhanced,liu2022thermodynamics,Cheng2024ExperiSwitch,matheus2023Reassess,zhen2025GenSwitch}. Using the thermodynamic resourcefulness of coherence~\cite{lostaglio2015Coherence, Streltsov2017, Francica2020} to control the causal order of two quantum channels, it is known to enhance their signalling~\cite{ebler2018enhanced, goswami_increasing_2020, chiribella_indefinite_2021} and athermal properties~\cite{felce2020quantum, liu2022thermodynamics, zhu_charging_2023}, at a thermodynamic cost. Our bounds extend these results by specifying the explicit trade-off between signalling power and athermality that the induced channel of a quantum switch exhibits in terms of invested resources. Together, our results allow us to gauge the thermodynamic and signalling abilities of quantum channels in a unified way.

The remainder of the paper is organised as follows. In Sec.~\ref{sec:resource_measures}, we introduce quantitative measures for the athermality and signalling ability of quantum channels, and define a joint measure that captures both resources collectively. In Sec.~\ref{sec:three_tasks}, we identify three thermodynamic tasks associated with a quantum channel -- generation, preservation and transmission of athermality, and show that these tasks are characterised by the defined measures. In Sec.~\ref{sec:thermal_criteria}, we formulate our thermodynamic criteria for signalling: signalling is enabled by athermality transmission but constrained by athermality preservation. To give further operational meaning for our measures, Sec.~\ref{sec:GPO_dilation} considers a channel simulation scenario and demonstrate that our measures quantify the minimal thermodynamic resources required to simulate an arbitrary quantum channel. In Sec.~\ref{sec:quantum_switch}, we apply the criteria for signalling to the quantum switch, showing that it trades between thermodynamic and signalling resources via an explicit trade-off relation. Finally, in Sec~\ref{sec:conclusion}, we summarise our results and discuss possible future directions.

Throughout, we denote the set of quantum states of system $X$ as $\mathcal{S}(X)$, and the set of quantum channels (i.e., CPTP maps) between input system $X$ and output system $Y$ as $\mathcal{O}(X\rightarrow Y)$. $X'$ denotes a copy of system $X$ with the same thermal state $\gamma_X$.

\section{Resource measures}\label{sec:resource_measures}
\subsection{Robustness of athermality and signalling}
In the language of resource theories, quantum states constitute static resources, whereas quantum operations between them provide dynamical resources.
To quantify the amount of resource in a general state/operation, we adopt the robustness measure of resources, which are widely used in the resource theories of entanglement~\cite{Vidal1999Robustness, steiner_generalized_2003, lami2023computable} and coherence~\cite{piani2016robustness,diaz2018using}, and in general resource theories~\cite{liu2019one,regula2021one,Takagi2019}.
Intuitively, robustness-based measures quantify how much noise one needs to admix before the resource possessed by a quantum state/operation disappears. Let $\mathcal{D}$ be the set of quantum states/operations and $\mathcal{F}\subset\mathcal{D}$ a free set inside $\mathcal{D}$ corresponding to some type of resource. For an object $X\in\mathcal{D}$, its robustness of resource is given by
\begin{align}
    R(X) &:= \min\left\{s\Big|\frac{X+sY}{1+s}=Z\in\mathcal{F},\, Y\in\mathcal{D}\right\},
    \label{eq:general_robustness_std}\\
    &=\min\left\{s|X\le (1+s)Z,\,Z\in\mathcal{F}\right\}
    \label{eq:general_robustness}
\end{align}
where $Y\in\mathcal{D}$ is a general ``noise'' to $X$, and in the second line $X\le (1+s)Z$ means that $(1+s)Z-X$ is a positive semidefinite operator or a completely positive map, depending on whether $\mathcal{D}$ is the set of quantum states or the set of quantum channels. Geometrically, $R(X)$ can be interpreted as the ``distance'' between $X$ and the free set $\mathcal{F}$, as is shown in Fig.~\ref{fig:general_robustness}. Besides, robustness measures can also be expressed in terms of the max-relative entropy $D_{\max}(X||Z):= \ln\min\{\lambda|X\le \lambda Z\}$~\cite{datta2009min}.
\begin{figure}
    \centering
    \includegraphics[width=0.85\linewidth]{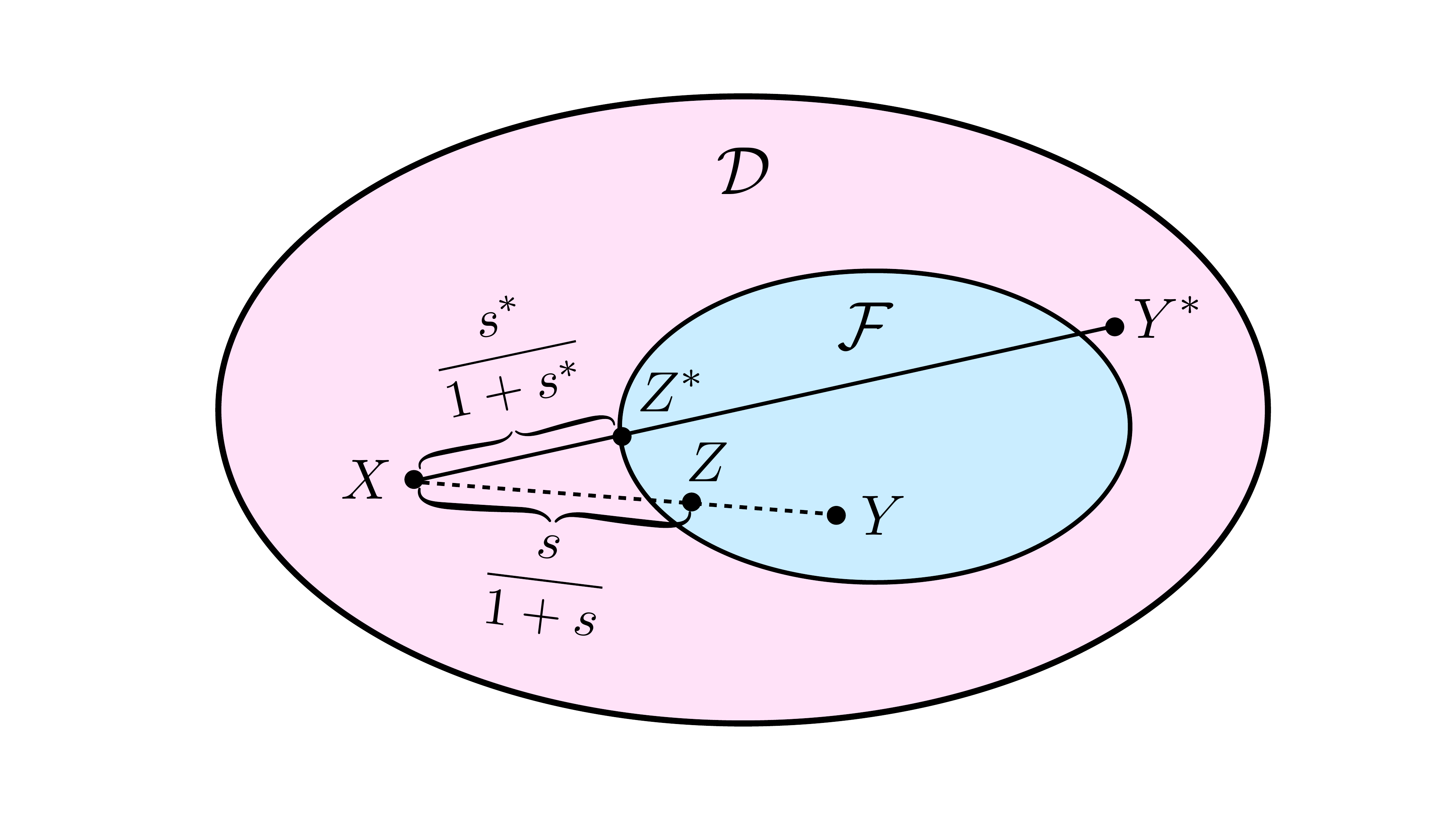}
    \caption{{\bf Geometric interpretation of robustness measure.} The distance between $X$ and the free state $Z$ corresponds to the factor $s/(1+s)$. Minimising $s$ finds the closest free state $Z^*$ to $X$, and hence, quantifies the distance between $X$ and the free set $\mathcal{F}$.}
    \label{fig:general_robustness}
\end{figure}

Here, we adopt the robustness of athermality $R_\mathrm{T}$ to measure the thermodynamic resource inherent to quantum states and channels. For a general state $\rho_A$, the robustness of athermality $R_{\rm T}(\rho_A)$ is defined as
\begin{align}
    R_\mathrm{T}(\rho_A) 
    &:=\min\left\{s|\rho_A\le (1+s) \gamma_A\right\}.
    \label{eq:R_T_st_def}
\end{align}
Throughout, we consider full-rank thermal states $\gamma$, thus avoiding divergence of $R_\mathrm{T}$. When $\rho_A$ is incoherent, $R_\mathrm{T}(\rho_A)$ relates to the necessary work for preparing $\rho_A$ from $\gamma_A$ (the \textit{work of formation})~\cite{horodecki2013fundamental}.
The set of athermality non-generating operations from $A$ to $B$ are the \textit{Gibbs-preserving operations}
\begin{align}
    \mathrm{GPO}(A\rightarrow B) := \left\{G | G(\gamma_A) = \gamma_B, \, G\in\mathcal{O}(A\rightarrow B)\right\}.
\end{align}
For a general channel $\Lambda\in\mathcal{O}(A\rightarrow B)$, the robustness of athermality $R_\mathrm{T}(\Lambda)$ is defined as 
\begin{equation}
    R_\mathrm{T}(\Lambda):= \min\left\{s| \Lambda \le (1+s) G,\, G\in \mathrm{GPO}(A\rightarrow{B})\right\},
    \label{eq:R_T_ch_def}
\end{equation}
We use $R_\mathrm{T}$ for both static and dynamical athermality because dynamical athermality can be reduced to static athermality (Theorem~\ref{thm:ER_ROA}). The type of athermality $R_\mathrm{T}$ will be unambiguous from its argument.

Turning to information transmission of a general quantum channel $\Lambda$, its signalling ability can be quantified by the robustness of signalling $R_\mathrm{S}$, where the free operations are non-signalling channels~\cite{fang2019quantum,takagi2020application,milz2022resource}.
The set of \textit{non-signalling operations} (NSOs) from $A$ to $B$ is given by trace-and-replace channels $\Phi_\sigma(\cdot) \equiv \mathrm{Tr}\{\cdot\}\sigma$ as 
\begin{align}
    \mathrm{NSO(A\rightarrow B)} := \left\{\Phi_\sigma | \sigma\in\mathcal{S}(B)\right\}.
\end{align}
For a general channel $\Lambda\in\mathcal{O}(A\rightarrow B)$, $R_\mathrm{S}(\Lambda)$ is defined as
\begin{align}
R_\mathrm{S}(\Lambda):= \min\left\{s|\Lambda\le (1+s)\Phi, \, \Phi\in\mathrm{NSO}(A\rightarrow B)\right\}.
\label{eq:R_S_def}
\end{align}
In Ref.~\cite{milz2022resource}, $R_\mathrm{S}(\Lambda)$ has been interpreted as a measure of the ``strength" of causal loops that can be closed by contracting $\Lambda$ with a general channel $\Omega\in\mathcal{O}(B\rightarrow A)$, and it has been shown that the logarithm of $R_\mathrm{S}$ asymptotically converges to the channel mutual information~\cite{fang2019quantum}.

\subsection{Joint measure of athermality and signalling of quantum channels} 

$R_\mathrm{T}$ and $R_\mathrm{S}$ measure athermality and signalling of a quantum channel, respectively, due to the two different free sets in their definitions, GPO and NSO.
Given the input system $A$ and the output system $B$, the intersection between $\mathrm{GPO}(A\rightarrow B)$ and $\mathrm{NSO}(A\rightarrow B)$ is the completely-thermalising channel
\begin{align}
    \Gamma(\cdot) \equiv \mathrm{Tr}\{\cdot\}\gamma_B.
\end{align}
Note, however, that $\Gamma$ is neither an extreme point in NSO nor in GPO. For NSO, this is straightforward. For GPO, see Ref.~\cite{giulio2024extremepoints} for the conditions on extreme points in GPO.
We thus define a joint measure that collectively gauges signalling and athermality of a general channel $\Lambda\in\mathcal{O}(A\rightarrow B)$ as
\begin{align}
    R(\Lambda||\Gamma) := \min\{s|\Lambda\le (1+s)\Gamma\}.
    \label{eq:R_def}
\end{align}
Given the close connection between robustness measures and the max-relative entropy, $R(\Lambda||\Gamma)$ can also be studied as a specific instance of a generalized channel divergence with respect to the completely thermalising channel~\cite{sohail_fundamental_2025,badhani2025thermodynamicsquantumprocessesoperational}.
Operationally, $R(\Lambda||\Gamma)$ measures the ``distance" between $\Lambda$ and the completely thermalising channel $\Gamma$, and therefore it follows from its definition that
\begin{align}
    R(\Lambda||\Gamma) &\ge \max\left\{R_\mathrm{T}(\Lambda),\, R_\mathrm{S}(\Lambda)\right\}.
    \label{eq:R(Lambda||Gamma)_lower_bound}
\end{align}
We note that equality holds when $\Phi\in\mathrm{NSO}(A\rightarrow B)$, such that $R_\mathrm{S}(\Phi) = 0$ and $R(\Phi||\Gamma) = R_\mathrm{T}(\Phi)$. However, when $G\in\mathrm{GPO}(A\rightarrow B)$, $R(G||\Gamma) \ge R_\mathrm{S}(G)$ in general.
Following the framework of resource preservability, where some free operations in a resource theory are considered as resourceful with respect to resource-destroying operations, owing to their ability to preserve resources~\cite{hsieh2020resource}, we set
\begin{align}
    P_\mathrm{T}(G) := R(G||\Gamma)\text{ for } G\in\mathrm{GPO}(A\rightarrow B),
    \label{eq:P_T_def}
\end{align}
which is the athermality preservability of the GPO $G$. 
To conclude this section, in Table~\ref{tab:resources_mainbody}, we summarise the four types of resources included in this work. For more properties of these resource measures, we refer to Appendix~\ref{sec:basic_properties}.
    \begin{table*}
    \centering
    \begin{ruledtabular}
    \begin{tabular}{cccc}
        \textrm{Resources} &  \textrm{Symbols} & \textrm{Types} &\textrm{Definitions}\\
        \colrule
         Athermality &  $R_\mathrm{T}$ & Static / Dynamical & Eq.~(\ref{eq:R_T_st_def}) / Eq.~(\ref{eq:R_T_ch_def}) \\
         Signalling & $R_\mathrm{S}$ & Dynamical & Eq.~(\ref{eq:R_S_def}) \\
         Joint resource & $R$ & Dynamical & Eq.~(\ref{eq:R_def}) \\
         Athermality preservability for GPOs & $P_\mathrm{T}$ & Dynamical & Eq.~(\ref{eq:P_T_def})
    \end{tabular}
    \end{ruledtabular}
    \caption{{Four resource measures considered in this work.}}
    \label{tab:resources_mainbody}
\end{table*}

\section{Generation, preservation and transmission of athermality}\label{sec:three_tasks}

We now introduce three thermodynamic tasks associated with a general quantum channel $\Lambda$ and show that they are characterised by the resource measures defined above. Considering a purification of $\gamma_A$ as input, for simplicity, we start with the thermofield double state~\cite{Takahashi1996ThermoField}~$\tilde{\gamma}_{AA'}\equiv \ket{\tilde{\gamma}}\!\bra{\tilde{\gamma}}_{AA'}$ with $\ket{\tilde{\gamma}}_{AA'} \equiv \sum_{i}\sqrt{g_i}\ket{i}_{A}\otimes\ket{i}_{A'}$. Here, $g_i$ is the $i$th population of $\gamma_A$, corresponding to energy eigenstate $\ket{i}_A$. We have $\mathrm{Tr}_{\{A,A'\}\text{\textbackslash} X}\{\tilde{\gamma}_{AA'}\} = \gamma_{X}$ for $X\in\{A, A'\}$. A generalisation to all purified thermal states on $AA'$ is given in Appendices~\ref{app:proof_Thm_ER_R} and \ref{app:proof_Thm_upper_bound_R}.

We consider the following three thermodynamic tasks and related resources (see Fig.~\ref{fig:thermal_tasks}):
\begin{enumerate}
    \item[I.] {\bf Athermality generation ($R_\mathrm{T}$):} The amount of athermality a quantum channel $\Lambda\in\mathcal{O}(A\rightarrow B)$ generates from the thermal state $\gamma_A$.
    \item[II.] {\bf Athermality preservation ($R$):} The amount of athermality a local quantum channel $\Lambda\in\mathcal{O}(A'\rightarrow B)$ can preserve when acting on $\tilde{\gamma}_{AA'}$. 
    \item[III.] {\bf Athermality transmission ($R - R_\mathrm{T}$):} The amount of athermality a local quantum channel $\Lambda\in\mathcal{O}(A'\rightarrow B)$ can transmit when acting on $\tilde{\gamma}_{AA'}$.
\end{enumerate}
\begin{figure}
    \centering
    \includegraphics[width=\linewidth]{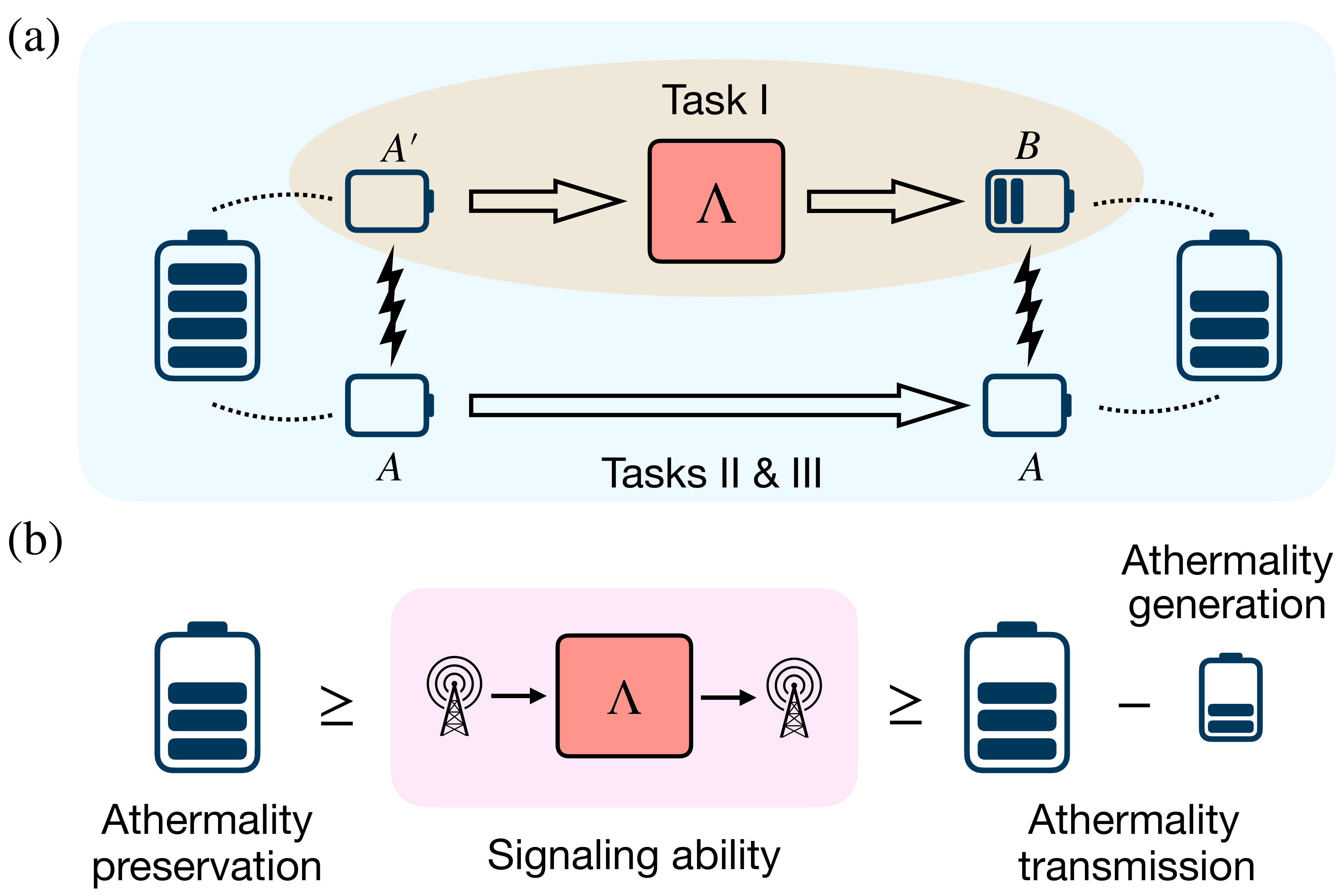}
    \caption{\textbf{Main results.} (a) Three thermodynamic tasks associated with a quantum channel $\Lambda\in\mathcal{O}(A'\rightarrow B)$ correspond to three quantifiable resources: Task I (athermality generation)  -- how much athermality can $\Lambda$ generate from a thermal input? Task II (athermality preservation) -- how much input athermality does $\Lambda$ preserve if $A'$ is part of a pure, locally thermal state? Task III (athermality transmission) -- how much of the initial global athermality from correlations between $A$ and $A'$ does $\Lambda$ transfer? Athermality is denoted by charged batteries and empty batteries indicate thermal states.
    (b) The signalling ability of the quantum channel $\Lambda$ is bounded by its thermodynamic capacities (Theorem~\ref{thm:R_S_bounds}).}
    \label{fig:thermal_tasks}
\end{figure}

The athermality generating ability of a quantum channel $\Lambda$ is equivalent to its dynamical athermality $R_\mathrm{T}(\Lambda)$:
\begin{theorem}\label{thm:ER_ROA}
    For a general channel $\Lambda\in\mathcal{O}(A\rightarrow B)$, the following equality holds:
    \begin{align}
        \RT(\Lambda) = \RT[\Lambda(\gamma_A)].
    \end{align}
\end{theorem}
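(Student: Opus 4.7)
The plan is to establish the identity by two inequalities, starting from the definitions in Eqs.~\eqref{eq:ROA_st_def} and \eqref{eq:ROA_ch_def}. I will denote $\lambda_\star^{\mathrm{ch}} := 1 + R_\mathrm{T}(\Lambda)$ and $\lambda_\star^{\mathrm{st}} := 1 + R_\mathrm{T}[\Lambda(\gamma_A)]$ for readability, and aim to show $\lambda_\star^{\mathrm{ch}} = \lambda_\star^{\mathrm{st}}$.

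The ``$\ge$'' direction is immediate: any feasible $(\lambda,G)$ for the channel program satisfies $\Lambda \le \lambda G$ with $G\in\mathrm{GPO}(A\to B)$, and evaluating both (completely positive) maps on $\gamma_A$ yields $\Lambda(\gamma_A) \le \lambda G(\gamma_A) = \lambda\gamma_B$ by the Gibbs-preserving property of $G$. Hence every feasible $\lambda$ for the channel problem is feasible for the state problem, so $\lambda_\star^{\mathrm{st}} \le \lambda_\star^{\mathrm{ch}}$.

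For the ``$\le$'' direction I plan an explicit construction: given any $\lambda$ with $\Lambda(\gamma_A)\le\lambda\gamma_B$, I will build a Gibbs-preserving $G\in\mathrm{GPO}(A\to B)$ such that $\Lambda\le\lambda G$. Since $\mathrm{Tr}[\lambda\gamma_B - \Lambda(\gamma_A)]=\lambda-1\ge 0$ and the operator is positive semidefinite, the state $\tau := (\lambda\gamma_B - \Lambda(\gamma_A))/(\lambda-1)$ is well-defined for $\lambda>1$, and I set
\begin{equation}
G := \tfrac{1}{\lambda}\Lambda + \bigl(1-\tfrac{1}{\lambda}\bigr)\Phi_\tau,
\end{equation}
where $\Phi_\tau(\cdot)=\mathrm{Tr}\{\cdot\}\tau$. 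A direct check gives $G(\gamma_A)=\gamma_B$, so $G\in\mathrm{GPO}(A\to B)$, and $\lambda G - \Lambda = (\lambda-1)\Phi_\tau$ is completely positive, so $\Lambda\le\lambda G$. The edge case $\lambda=1$ corresponds to $\Lambda(\gamma_A)=\gamma_B$, so $\Lambda$ is already Gibbs-preserving and one may take $G=\Lambda$. Therefore every feasible $\lambda$ for the state problem is feasible for the channel problem, yielding $\lambda_\star^{\mathrm{ch}} \le \lambda_\star^{\mathrm{st}}$.

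The only real subtlety is producing a concrete Gibbs-preserving witness from the scalar inequality on states; once one notices that $\Phi_\tau$ is both CPTP and sends $\gamma_A\mapsto\tau$ regardless of the input structure, the mixture with $\Lambda/\lambda$ works for free. Combining both inequalities establishes $R_\mathrm{T}(\Lambda)=R_\mathrm{T}[\Lambda(\gamma_A)]$.
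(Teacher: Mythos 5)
Your proof is correct and follows essentially the same route as the paper: the ``$\ge$'' direction by evaluating the CP map $\lambda G-\Lambda$ on $\gamma_A$, and the ``$\le$'' direction by mixing $\Lambda/\lambda$ with a trace-and-replace channel onto the residual state $\tau=(\lambda\gamma_B-\Lambda(\gamma_A))/(\lambda-1)$, which is exactly the optimal $\sigma_B^*$ appearing in the paper's standard-form argument. The construction, including the $\lambda=1$ edge case, checks out.
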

The proof is provided in Appendix~\ref{app:proof_Thm_ER_ROA}.
Hence, Task~I is quantified by $R_\mathrm{T}(\Lambda)$. 
We remark that similar relations between static and dynamic resources have been found for coherence~\cite{diaz2018using} and entanglement~\cite{lami2023computable}. 
Analogously, the joint dynamical resource measure $R$ can also be reduced to a static resource:
\begin{theorem}\label{thm:ER_R}
    For a general channel $\Lambda\in\mathcal{O}(A'\rightarrow B)$, the following equality holds:
    \begin{align}
        R(\Lambda||\Gamma) = R_\mathrm{T}[\mathcal{I}\otimes\Lambda(\tilde{\gamma}_{AA'})],
    \end{align}
    where $\mathcal{I}$ is the identity channel.
\end{theorem}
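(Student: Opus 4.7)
\textbf{Proof plan for Theorem~\ref{thm:ER_R}.} My strategy is to reduce the channel-level inequality defining $R(\Lambda||\Gamma)$ to a state-level inequality on the Choi-like image of the thermofield double, exploiting that $\ket{\tilde\gamma}_{AA'}$ is an invertible local transformation of the unnormalised maximally entangled vector $\ket{\Omega}_{AA'}=\sum_i\ket{i}_A\ket{i}_{A'}$.

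First I would invoke the Choi--Jamiolkowski correspondence: a Hermiticity-preserving linear map $\mathcal{N}:A'\to B$ is completely positive iff its Choi operator $C_\mathcal{N}:=(\mathcal{I}_A\otimes\mathcal{N})(\ketbra{\Omega}{\Omega}_{AA'})$ is positive semidefinite. Since $\ket{\tilde\gamma}_{AA'}=(\sqrt{\gamma_A}\otimes I_{A'})\ket{\Omega}_{AA'}$, for any Hermiticity-preserving $\mathcal{M}:A'\to B$ one obtains
\[
(\mathcal{I}_A\otimes\mathcal{M})(\tilde\gamma_{AA'})=(\sqrt{\gamma_A}\otimes I_B)\,C_\mathcal{M}\,(\sqrt{\gamma_A}\otimes I_B),
\]
and in particular, since $\Gamma(\cdot)=\mathrm{Tr}\{\cdot\}\gamma_B$ has Choi operator $I_A\otimes\gamma_B$, this specialises to $(\mathcal{I}_A\otimes\Gamma)(\tilde\gamma_{AA'})=\gamma_A\otimes\gamma_B$, the global thermal state on $AB$.

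Applying the displayed identity to $\mathcal{M}=\lambda\Gamma-\Lambda$, the defining condition $\lambda\Gamma-\Lambda\in\mathrm{CP}$ is equivalent to $\lambda(I_A\otimes\gamma_B)-C_\Lambda\ge 0$, and conjugating by $\sqrt{\gamma_A}\otimes I_B$ turns this into the equivalent statement $\lambda(\gamma_A\otimes\gamma_B)-(\mathcal{I}_A\otimes\Lambda)(\tilde\gamma_{AA'})\ge 0$, which is precisely the defining inequality for $R_\mathrm{T}[(\mathcal{I}_A\otimes\Lambda)(\tilde\gamma_{AA'})]$. Taking infima over $\lambda$ in both formulations then yields the claimed equality.

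The only point requiring care is that these two inequalities are genuinely equivalent rather than one-sided implications. This uses the standing full-rank assumption on $\gamma_A$, which makes $\sqrt{\gamma_A}\otimes I_B$ invertible; conjugation by it is therefore a bijection of the PSD cone, so the same $\lambda$ certifies both inequalities and the two minimisations attain the same value. I do not anticipate any substantive obstacle beyond this full-rank bookkeeping, since the argument is essentially a Choi-state rewriting of Eq.~(\ref{eq:R_def}) adapted to the Gibbs-weighted purification $\tilde\gamma_{AA'}$; the generalisation to arbitrary purifications, promised in the SM, should follow by composing with a local unitary on the reference $A$, which commutes through the conjugation step.
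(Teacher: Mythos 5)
Your proposal is correct and follows essentially the same route as the paper's proof: both rewrite the CP condition $\lambda\Gamma-\Lambda\ge 0$ as a Choi-operator inequality and conjugate by the invertible operator $\gamma_A^{1/2}\otimes\mathbbm{1}_B$ to convert it into the state-level inequality $\mathcal{I}\otimes\Lambda(\tilde{\gamma}_{AA'})\le\lambda\,\gamma_A\otimes\gamma_B$. The paper simply carries an extra local unitary $U_A$ through the same conjugation to cover all purifications of $\gamma_A$ at once, exactly as you anticipate in your closing remark.
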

A proof of a stronger version of this theorem is provided in Appendix~\ref{app:proof_Thm_ER_R}.
Therefore, $R(\Lambda||\Gamma)$ is equivalent to the output athermality when $\Lambda$ is locally applied on one part of the thermofield double state $\tilde{\gamma}_{AA'}$. To see that $R$ quantifies athermality \textit{preservation} (Task II) -- and not athermality generation, like $R_\mathrm{T}$ -- consider the following Theorem:
\begin{theorem}\label{thm:upper_bound_R}
    For a general channel $\Lambda\in\mathcal{O}(A\rightarrow B)$ with $\gamma_A = \gamma_B \equiv \gamma$, the following inequality holds:
    \begin{align}
        R(\Lambda||\Gamma) \le R_\mathrm{T}(\tilde{\gamma}_{AA'}) = \mathrm{Tr}\{\gamma^{-1}\} - 1,
    \end{align}
    where ${\rm Tr}\{\gamma^{-1}\}$ denotes the sum over the inverses of the eigenvalues of $\gamma$ and the equality holds when $\Lambda$ is unitary.
\end{theorem}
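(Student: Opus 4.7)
My plan is to use Theorem~\ref{thm:ER_R} to reduce the claim to a statement about the output state $\sigma_{AB}:=(\mathcal{I}\otimes\Lambda)(\tilde{\gamma}_{AA'})$, and then exploit a Stinespring dilation of $\Lambda$ to bound its operator structure. First, the right-hand side of the stated inequality is a direct computation: since $\ket{\tilde{\gamma}}=\sum_i\sqrt{g_i}\ket{i,i}$, applying $(\gamma\otimes\gamma)^{-1/2}$ yields $\sum_i g_i^{-1/2}\ket{i,i}$ of squared norm $\sum_i g_i^{-1}=\mathrm{Tr}\{\gamma^{-1}\}$, so $R_\mathrm{T}(\tilde{\gamma}_{AA'})=\mathrm{Tr}\{\gamma^{-1}\}-1$. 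By Theorem~\ref{thm:ER_R}, the remaining task is to bound the operator norm of $M_{AB}:=(\gamma\otimes\gamma)^{-1/2}\sigma_{AB}(\gamma\otimes\gamma)^{-1/2}$ by $\mathrm{Tr}\{\gamma^{-1}\}$.

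Next, I would fix a Stinespring isometry $V:A'\to B\otimes E$ with $\Lambda(\rho)=\mathrm{Tr}_E[V\rho V^\dagger]$ and introduce the (non-normalized) vector $\ket{\tilde{\Psi}}_{ABE}:=(\gamma^{-1/2}\otimes\gamma^{-1/2}\otimes I_E)(\mathcal{I}_A\otimes V)\ket{\tilde{\gamma}}=\sum_i\ket{i}_A\otimes W\ket{i}$, where $W:=(\gamma^{-1/2}\otimes I_E)V$. A short computation shows $M_{AB}=\mathrm{Tr}_E[\ket{\tilde{\Psi}}\!\bra{\tilde{\Psi}}]$, and because $\ket{\tilde{\Psi}}$ is a pure vector on $A\otimes B\otimes E$, this reduced operator shares its nonzero spectrum with the complementary marginal $\sigma_E:=\mathrm{Tr}_{AB}[\ket{\tilde{\Psi}}\!\bra{\tilde{\Psi}}]=\mathrm{Tr}_B[WW^\dagger]$. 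Hence it suffices to bound $\|\sigma_E\|_\infty$.

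The crux of the argument is to evaluate, for a unit vector $\ket{\eta}_E$, the quantity $\bra{\eta}\sigma_E\ket{\eta}=\mathrm{Tr}[\gamma^{-1}K(\eta)K(\eta)^\dagger]$ with $K(\eta):=(I_B\otimes\bra{\eta}_E)V$, and then to use the isometry property of $V$ to control $K(\eta)K(\eta)^\dagger$. Since $\ket{\eta}\!\bra{\eta}\le I_E$, one has $K(\eta)^\dagger K(\eta)=V^\dagger(I_B\otimes\ket{\eta}\!\bra{\eta})V\le V^\dagger V=I_A$, so $K(\eta)$ is a contraction and therefore also $K(\eta)K(\eta)^\dagger\le I_B$ (the two products share their nonzero singular values). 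Monotonicity of the trace against the PSD weight $\gamma^{-1}$ then gives $\bra{\eta}\sigma_E\ket{\eta}\le\mathrm{Tr}[\gamma^{-1}I_B]=\mathrm{Tr}\{\gamma^{-1}\}$, and taking the supremum over $\ket{\eta}$ delivers the inequality. Equality for unitary $\Lambda(\rho)=U\rho U^\dagger$ is then automatic: the environment is trivial, $K=U$, and $UU^\dagger=I_B$ saturates every step.

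I expect the main obstacle to lie exactly in this contraction step. A naive application of max-divergence data processing to $\mathcal{I}\otimes\Lambda$ only produces a bound relative to $\gamma\otimes\Lambda(\gamma)\neq\gamma\otimes\gamma$, and the crude trace bound $\|\sigma_E\|_\infty\le\mathrm{Tr}\sigma_E=\mathrm{Tr}[\gamma^{-1}\Lambda(I_A)]$ can easily exceed $\mathrm{Tr}\{\gamma^{-1}\}$; what saves the argument is that the Stinespring picture supplies the sharper one-sided inequality $K(\eta)K(\eta)^\dagger\le I_B$ rather than the weaker $K(\eta)K(\eta)^\dagger\le\Lambda(I_A)$.
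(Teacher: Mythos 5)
Your proof is correct, and it takes a genuinely different route from the paper's. The paper first pins down the unitary case by exhibiting the vector $\sum_i g_i^{-1}\ket{i}\otimes\ket{i}$ that witnesses saturation of $\mathrm{Tr}\{\gamma^{-1}\}(\mathbbm{1}\otimes\gamma)\ge\ket{\phi^+}\!\bra{\phi^+}$ (via a lemma characterising when an operator inequality is tight), and then lifts the bound to arbitrary channels by introducing the linear map $\Omega(J):=\mathrm{Tr}_B\{J\}\otimes\mathrm{Tr}\{\gamma^{-1}\}\gamma - J$, showing that its Choi operator factorises as $\ket{\phi^+}\!\bra{\phi^+}_{AA'}\otimes\left[\mathrm{Tr}\{\gamma^{-1}\}(\mathbbm{1}\otimes\gamma)-\ket{\phi^+}\!\bra{\phi^+}_{BB'}\right]\ge 0$, so that $\Omega$ is completely positive and $\Omega(J_\Lambda)\ge 0$ gives the claim. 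Your argument instead works at the level of a Stinespring dilation: the identities $M_{AB}=\mathrm{Tr}_E[\ket{\tilde{\Psi}}\!\bra{\tilde{\Psi}}]$ and $\sigma_E=\mathrm{Tr}_B[WW^\dagger]$ check out, the Schmidt argument equating the nonzero spectra of the two marginals is valid for unnormalised pure vectors, and the contraction step $K(\eta)^\dagger K(\eta)\le V^\dagger V=\mathbbm{1}$, hence $K(\eta)K(\eta)^\dagger\le\mathbbm{1}_B$, is exactly the sharp input needed — as you correctly observe, the weaker bound $K(\eta)K(\eta)^\dagger\le\Lambda(\mathbbm{1}_A)$ or a naive data-processing argument relative to $\gamma\otimes\Lambda(\gamma)$ would not suffice. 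What the paper's route buys is a reduction of the general case to a single computation for the identity channel, packaged as a universal CP map so that the bound $R(\Lambda||\Gamma)\le\mathrm{Tr}\{\gamma_B^{-1}\}-1$ holds verbatim even when $d_A\neq d_B$; what yours buys is a self-contained one-pass argument that treats general channels directly, avoids the Choi-operator CP-map construction, and makes the unitary saturation immediate ($E$ trivial, $KK^\dagger=\mathbbm{1}_B$, so $\sigma_E$ is the scalar $\mathrm{Tr}\{\gamma^{-1}\}$).
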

A proof of a stronger version of this theorem is provided in Appendix~\ref{app:proof_Thm_upper_bound_R}.
Combined with Theorem~\ref{thm:ER_R}, this yields a no-go relation:
\begin{align}
    R_\mathrm{T}[\mathcal{I}\otimes\Lambda(\tilde{\gamma}_{AA'})] \le R_\mathrm{T}(\tilde{\gamma}_{AA'}),
    \,\forall\, \Lambda\in\mathcal{O}(A'\rightarrow B).
\end{align}
That is,~no local channel can  increase the overall athermality of the thermofield double state $\tilde{\gamma}_{AA'}$, rendering $R(\Lambda||\Gamma)$ a quantifier of athermality \textit{preservation}.

An important property of $\tilde{\gamma}_{AA'}$ is that it is locally thermal in both subsystems with all athermality residing solely in the correlations between $A$ and $A'$. The transmission of this type of athermality is encapsulated in Task~III.

By Theorems~\ref{thm:ER_ROA} and~\ref{thm:ER_R}, $R(\Lambda||\Gamma)$ quantifies the total athermality in $\tilde{\gamma}_{AA'}$ that is preserved under the action of $\Lambda$, while $R_\mathrm{T}(\Lambda)$ quantifies the athermality that $\Lambda$ generates in the local system $A'$. The difference $R(\Lambda||\Gamma) - R_\mathrm{T}(\Lambda)$ thus quantifies the athermality contained in $\tilde{\gamma}_{AA'}$ transmitted via the local channel $\Lambda$. We conclude that Task~III is measured by $R - R_\mathrm{T}$.

\section{Thermodynamic criteria for signalling}\label{sec:thermal_criteria}

We now show how the signalling capability of a general quantum channel is limited by its thermodynamic properties.
\begin{theorem}\label{thm:R_S_bounds}
    Given a general channel $\Lambda\in\mathcal{O}(A\rightarrow B)$,
    \begin{align}
        R(\Lambda||\Gamma) \ge R_\mathrm{S}(\Lambda) \ge 2\left(g_{\min}^{(AB)}\right)^2[R(\Lambda||\Gamma)-R_\mathrm{T}(\Lambda)]^2,
        \label{eq:R_S_bounds}
    \end{align}
    where $g_{\min}^{(AB)}$ is the smallest population in the energy eigenbasis of the thermal state $\gamma_{AB} \equiv \gamma_A\otimes\gamma_B$. 
\end{theorem}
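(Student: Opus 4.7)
\textit{Proof proposal.} The first inequality $R(\Lambda\|\Gamma)\ge R_\mathrm{S}(\Lambda)$ is immediate from Eq.~\eqref{eq:R(Lambda||Gamma)_lower_bound}: the completely thermalising channel $\Gamma = \Phi_{\gamma_B}$ is itself a particular non-signaling operation, so the NSO-minimum defining $R_\mathrm{S}$ is upper bounded by the value attained at this single point.

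The plan for the second, substantive inequality is to reduce both sides to properties of the bipartite Choi-like state $\rho_{AB} := \mathcal{I}\otimes \Lambda(\tilde{\gamma}_{AA'})$ via Theorems~\ref{thm:ER_ROA} and~\ref{thm:ER_R}. Setting $\rho_B := \mathrm{Tr}_A\rho_{AB} = \Lambda(\gamma_{A'})$, these two results together yield
\begin{equation*}
R(\Lambda\|\Gamma)-R_\mathrm{T}(\Lambda) \;=\; R_\mathrm{T}(\rho_{AB})-R_\mathrm{T}(\rho_B).
\end{equation*}
The right-hand side quantifies athermality stored in the $A$--$B$ correlations beyond what resides in the local marginal. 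The strategy is to bound this excess in terms of how close $\rho_{AB}$ is to the product Choi state of an NSO, a distance that is itself controlled by $R_\mathrm{S}(\Lambda)$.

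Concretely, I envisage two steps. \textbf{Step 1:} using $1+R_\mathrm{T}(\rho) = \|\gamma^{-1/2}\rho\gamma^{-1/2}\|_\infty$, I decompose $\gamma_{AB}^{-1/2}\rho_{AB}\gamma_{AB}^{-1/2}=\mathbb{1}_A\otimes\gamma_B^{-1/2}\rho_B\gamma_B^{-1/2}+\hat{\chi}$ with correlator $\hat{\chi}=\gamma_{AB}^{-1/2}(\rho_{AB}-\gamma_A\otimes\rho_B)\gamma_{AB}^{-1/2}$; subadditivity of $\|\cdot\|_\infty$ gives $R(\Lambda\|\Gamma)-R_\mathrm{T}(\Lambda)\le \|\hat{\chi}\|_\infty$, and pulling the conjugation out of the norm produces $\|\hat{\chi}\|_\infty\le (g_{\min}^{(AB)})^{-1}\|\rho_{AB}-\gamma_A\otimes\rho_B\|_\infty$. \textbf{Step 2:} unpacking $R_\mathrm{S}$, pick $\sigma^*\in\mathcal{S}(B)$ attaining $\Lambda\le (1+R_\mathrm{S})\Phi_{\sigma^*}$; applying both sides to $\tilde{\gamma}_{AA'}$ yields $\rho_{AB}\le (1+R_\mathrm{S})\gamma_A\otimes\sigma^*$, which---rewritten as a convex decomposition of $\gamma_A\otimes\sigma^*$---supplies $\|\rho_{AB}-\gamma_A\otimes\sigma^*\|_1\le 2R_\mathrm{S}(\Lambda)$; contractivity of the partial trace and the triangle inequality then give $\|\rho_{AB}-\gamma_A\otimes\rho_B\|_1\le 4R_\mathrm{S}(\Lambda)$. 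Chaining Steps 1 and 2 through $\|\cdot\|_\infty\le\|\cdot\|_1$ already yields a linear estimate of the form $R-R_\mathrm{T}\le 4R_\mathrm{S}/g_{\min}^{(AB)}$.

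The main obstacle is promoting this linear control into the precise quadratic form $R_\mathrm{S}\ge 2(g_{\min}^{(AB)})^2[R-R_\mathrm{T}]^2$ stated in the theorem. The natural refinement is to invoke a Pinsker-type inequality---exploiting that $\log(1+R_\mathrm{T})$ equals a max-relative entropy and that $D_{\max}\ge D\ge \tfrac{1}{2}\|\cdot\|_1^2$---to convert the $L^1$-bound of Step 2 into a squared control of the athermality difference, and then to track the $g_{\min}^{(AB)}$-dependence through the $\gamma_{AB}^{-1/2}$-conjugations so as to produce the stated $2(g_{\min}^{(AB)})^2$ prefactor. Carefully matching constants in this sharpening step is the technical crux; absent it, only the weaker linear bound is recovered.
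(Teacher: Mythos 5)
The first inequality is handled correctly and matches the paper. For the second inequality, however, your proposal contains a genuine gap that you yourself flag: Steps 1 and 2 only deliver the linear estimate $R_\mathrm{S}(\Lambda)\ge \tfrac{1}{4}g_{\min}^{(AB)}\,[R(\Lambda||\Gamma)-R_\mathrm{T}(\Lambda)]$, and this does not imply the stated quadratic bound. The two bounds are incomparable: e.g.\ for $\Lambda=\mathcal{I}$ with $\gamma_A=\gamma_B=\mathbbm{1}/d$ one has $R-R_\mathrm{T}=d^2-1$ and $g_{\min}^{(AB)}=d^{-2}$, so the quadratic bound gives roughly $2$ while your linear bound gives roughly $1/4$. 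So the theorem is not established by what you have written. Moreover, the refinement you gesture at cannot be obtained by "converting the $L^1$-bound of Step 2": that step produces $\|\rho_{AB}-\gamma_A\otimes\rho_B\|_1\le 4R_\mathrm{S}$, an \emph{upper} bound on the trace distance in terms of $R_\mathrm{S}$, which is the wrong direction for deducing $R_\mathrm{S}\gtrsim \|\rho_{AB}-\gamma_A\otimes\rho_B\|_1^2$.

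The missing idea (and the paper's actual route) is to insert the channel mutual information at the thermofield-double input between the two quantities. Concretely, with $\rho_{AB}\equiv\mathcal{I}\otimes\Lambda(\tilde{\gamma}_{AA'})$ one shows
$R_\mathrm{S}(\Lambda)\ge\ln[1+R_\mathrm{S}(\Lambda)]=\min_{\sigma_B}D_{\max}(\rho_{AB}||\gamma_A\otimes\sigma_B)\ge \min_{\sigma_B}D(\rho_{AB}||\gamma_A\otimes\sigma_B)=I(A\!:\!B)_{\rho_{AB}}$,
using $D_{\max}\ge D$ and the identification of the NSO robustness with a max-relative entropy of the Choi operator; then Pinsker's inequality gives $I(A\!:\!B)_{\rho_{AB}}\ge\tfrac12\|\rho_{AB}-\gamma_A\otimes\rho_B\|_1^2$, and the continuity bound $\|\rho-\sigma\|_1\ge 2g_{\min}|R_\mathrm{T}(\rho)-R_\mathrm{T}(\sigma)|$ (the paper's Theorem on continuity of $R_\mathrm{T}$, which sharpens your Step 1 by the factor $1/2$ via the rank-one structure of the optimal dual variable) converts the squared trace distance into $4\left(g_{\min}^{(AB)}\right)^2[R(\Lambda||\Gamma)-R_\mathrm{T}(\Lambda)]^2$, using Theorems~\ref{thm:ER_ROA} and~\ref{thm:ER_R} together with multiplicativity of $R_\mathrm{T}$ to identify $R_\mathrm{T}(\rho_{AB})=R(\Lambda||\Gamma)$ and $R_\mathrm{T}(\gamma_A\otimes\rho_B)=R_\mathrm{T}(\Lambda)$. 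Your reduction to $R_\mathrm{T}(\rho_{AB})-R_\mathrm{T}(\rho_B)$ and your choice of the reference state $\gamma_A\otimes\rho_B$ are exactly right; what is missing is this entropic sandwich, which is where the quadratic dependence and the precise constant $2\left(g_{\min}^{(AB)}\right)^2$ come from. As a side remark, your linear bound appears to be correct and is a complementary (sometimes stronger, sometimes weaker) statement, but it is not the theorem.
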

The proof is provided in Appendix~\ref{app:proof_Thm_R_S_bounds}.
The upper and lower bounds on $R_\mathrm{S}(\Lambda)$ can be viewed as necessary and sufficient conditions for signalling channels, respectively:
\begin{align*}
    &\text{Athermality preservation }(R > 0) \,\Leftarrow\, R_\mathrm{S} > 0, \\
    &\text{Athermality transmission }(R - R_\mathrm{T} > 0) \,\Rightarrow\, R_\mathrm{S} > 0. 
\end{align*}
For GPOs, athermality transmitability is equivalent to athermality preservability. Recall the definition of $P_{\rm T}(G):=R(G||\Gamma)$ for $G\in{\rm GPO}$ [Eq.~(\ref{eq:P_T_def})].
For GPOs, Eq.~(\ref{eq:R_S_bounds}) becomes
\begin{align}
    P_\mathrm{T}(G) \ge R_\mathrm{S}(G) \ge 2 \left(g_{\min}^{(AB)}\right)^2 P_\mathrm{T}(G)^2.
    \label{eq:R_S_bound_GPO}
\end{align}
That is, non-zero athermality preservability guarantees non-zero signalling ability, and vice versa. According to the bounds, the completely-thermalising channel $\Gamma$ is the only GPO can never signal, as expected.

Rearranging the lower bound in Eq.~(\ref{eq:R_S_bounds}), we obtain
\begin{align}
    R(\Lambda||\Gamma) \le R_\mathrm{T}(\Lambda) + \left(\sqrt{2}\,g_{\min}^{(AB)}\right)^{-1}\!\sqrt{R_\mathrm{S}(\Lambda)}.
    \label{eq:R(Lambda||Gamma)_upper_bound}
\end{align}
Eq.~(\ref{eq:R(Lambda||Gamma)_upper_bound}) together with Eq.~(\ref{eq:R(Lambda||Gamma)_lower_bound}) bounds the joint measure $R(\Lambda||\Gamma)$ from both sides in terms of the two resources $R_\mathrm{T}(\Lambda)$ and $R_\mathrm{S}(\Lambda)$. In addition, $R$ and $R_\mathrm{T}$ are operationally directly connected through a channel simulation task, as we now detail.

\section{GPO dilation of a quantum channel}\label{sec:GPO_dilation}
We now show that the joint measure $R$ is not only a generalisation of $P_\mathrm{T}$ to general quantum channels, but also admits an operational interpretation in terms of $P_\mathrm{T}$. We consider a channel simulation scenario that we dub as \textit{GPO dilation} of quantum channels. A GPO dilation of a general channel $\Lambda\in\mathcal{O}(A\rightarrow B)$, as is depicted in Fig.~\ref{fig:channel_simulation_GPO}, can be viewed as the combination of a GPO $G\in\mathrm{GPO}(CA\rightarrow B)$ and an athermal state $\rho_C\in\mathcal{S}(C)$, such that $\Lambda(\cdot) = G(\rho_C \otimes \cdot)$. We emphasise that in a GPO dilation, the GPO $G$ needs not to be unitary, which is different from the Stinespring's representation of quantum channels~\cite{Stinespring1955}. Now, the following theorem illustrates that
$R(\Lambda||\Gamma)$ bounds the minimal athermality preservability of the GPO dilation of $\Lambda$. 
\begin{figure}
\centering
    \includegraphics[width=0.85\linewidth]{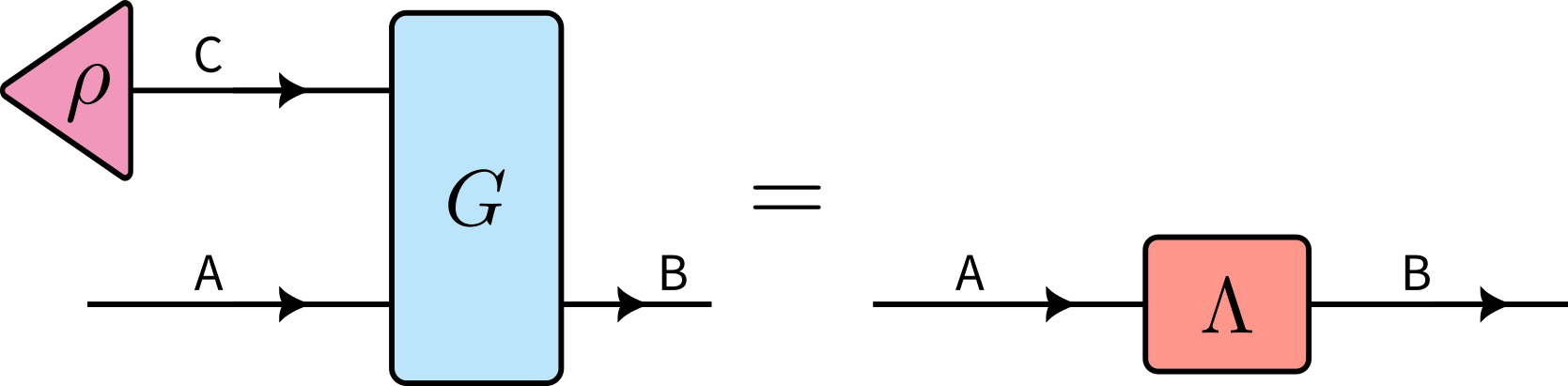}
    \caption{\textbf{GPO dilation.} A general channel $\Lambda$ can be dilated by a channel $G\in\mathrm{GPO}(CA\rightarrow B)$ with some state $\rho_C$.}
    \label{fig:channel_simulation_GPO}
\end{figure}
\begin{theorem}\label{thm:R_GPO_dilation}
Given a general quantum channel $\Lambda \in \mathcal{O}(A\rightarrow B)$, we define the set of GPOs which can simulate this channel with some state $\rho_C$ as $\mathcal{O}^{\,\mathrm{sim}}_\mathrm{GPO}(\Lambda)$. The following inequalities hold:
\begin{align}
    R(\Lambda||\Gamma) \le \min_{G\in\mathcal{O}^{\,\mathrm{sim}}_\mathrm{GPO}(\Lambda)} P_\mathrm{T}(G) \le \max\left\{1/R_\mathrm{T}(\Lambda), \, R(\Lambda||\Gamma)\right\}.
    \label{eq:R(Lambda||Gamma)_bounds_P_T(G)}
\end{align}
Especially, whenever $R_\mathrm{T}(\Lambda) R(\Lambda||\Gamma) \ge 1$, we have
\begin{align}
    R(\Lambda||\Gamma) = \min_{G\in\mathcal{O}^{\,\mathrm{sim}}_\mathrm{GPO}(\Lambda)} P_\mathrm{T}(G).
    \label{eq:R(Lambda||Gamma)_=_P_T(G)}
\end{align}
\end{theorem}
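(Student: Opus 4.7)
I will prove the two inequalities in Eq.~\eqref{eq:R(Lambda||Gamma)_bounds_P_T(G)} separately, after which Eq.~\eqref{eq:R(Lambda||Gamma)_=_P_T(G)} drops out as an immediate corollary. The lower bound is a short data-processing argument at the level of operator inequalities on channels, whereas the upper bound requires the explicit construction of a GPO dilation of $\Lambda$ whose joint-measure $P_\mathrm{T}$ is controlled in terms of $R_\mathrm{T}(\Lambda)$ and $R(\Lambda||\Gamma)$.

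\medskip

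\emph{Lower bound.} Fix any $G\in\mathcal{O}^{\,\mathrm{sim}}_\mathrm{GPO}(\Lambda)$ with accompanying control state $\rho_C$, so that $\Lambda(\sigma)=G(\rho_C\otimes\sigma)$ for all $\sigma$. By definition of $P_\mathrm{T}$, the map $(1+P_\mathrm{T}(G))\Gamma_{CA\to B}-G$ is completely positive, and composing it on the right with the CP preparation $\sigma\mapsto\rho_C\otimes\sigma$ preserves complete positivity, giving $(1+P_\mathrm{T}(G))\Gamma_{A\to B}-\Lambda$ CP. Hence $R(\Lambda||\Gamma)\le P_\mathrm{T}(G)$, and minimising over $G\in\mathcal{O}^{\,\mathrm{sim}}_\mathrm{GPO}(\Lambda)$ delivers the first inequality.

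\medskip

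\emph{Upper bound via an explicit dilation.} Set $\lambda:=1+R_\mathrm{T}(\Lambda)$; by Theorem~\ref{thm:ER_ROA} this is precisely the smallest scalar with $\Lambda(\gamma_A)\le\lambda\gamma_B$, so $\tilde\sigma:=(\lambda\gamma_B-\Lambda(\gamma_A))/(\lambda-1)$ is a bona fide density operator. I would introduce a qubit ancilla $C$ with thermal populations $g_0=1/\lambda$ and $g_1=1-1/\lambda$ in its energy eigenbasis $\{|0\rangle,|1\rangle\}$, and define the classically-controlled channel
\begin{equation*}
G(\omega_{CA}):=\Lambda\bigl(\langle 0|_{C}\,\omega_{CA}\,|0\rangle_{C}\bigr)+\Phi_{\tilde\sigma}\bigl(\langle 1|_{C}\,\omega_{CA}\,|1\rangle_{C}\bigr).
\end{equation*}
Then I would verify in turn: (i) $G$ is CPTP, the two branches being CP and, since $|0\rangle\langle 0|+|1\rangle\langle 1|=I_{C}$ and both $\Lambda$, $\Phi_{\tilde\sigma}$ are TP, so is $G$; (ii) $G(\gamma_{C}\otimes\gamma_{A})=\gamma_{B}$ by direct substitution of $g_0$, $g_1$, and the definition of $\tilde\sigma$, so $G\in\mathrm{GPO}(CA\to B)$; (iii) the choice $\rho_{C}=|0\rangle\langle 0|_{C}$ yields $G(\rho_{C}\otimes\sigma)=\Lambda(\sigma)$, placing $G$ in $\mathcal{O}^{\,\mathrm{sim}}_\mathrm{GPO}(\Lambda)$; (iv) the Choi operator of $G$ is block-diagonal in the $\{|i\rangle_{C'}\}$ basis with blocks $\chi_{\Lambda}$ and $\chi_{\Phi_{\tilde\sigma}}$, and $\chi_{\Gamma_{CA\to B}}$ is block-diagonal with identical blocks $I_{A'}\otimes\gamma_{B}$, so $P_\mathrm{T}(G)=\max\{R(\Lambda||\Gamma),R_\mathrm{T}(\tilde\sigma)\}$. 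Since $\Lambda(\gamma_A)\ge 0$ implies $\tilde\sigma\le(\lambda/(\lambda-1))\gamma_{B}$, one obtains $R_\mathrm{T}(\tilde\sigma)\le 1/R_\mathrm{T}(\Lambda)$, and therefore $\min_{G'}P_\mathrm{T}(G')\le P_\mathrm{T}(G)\le\max\{R(\Lambda||\Gamma),\,1/R_\mathrm{T}(\Lambda)\}$.

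\medskip

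\emph{Equality case and main obstacle.} When $R_\mathrm{T}(\Lambda)R(\Lambda||\Gamma)\ge 1$, the right-hand maximum collapses to $R(\Lambda||\Gamma)$, and the lower bound pins $\min_{G}P_\mathrm{T}(G)$ to this value, proving Eq.~\eqref{eq:R(Lambda||Gamma)_=_P_T(G)}. The crux of the argument is the dilation in step (iv)--(i): the single channel $G$ must simultaneously reproduce $\Lambda$ on the control input, map the global thermal state to $\gamma_{B}$, and keep $P_\mathrm{T}(G)$ small. The two-branch tilted-qubit construction achieves all three, with the ``correction'' branch $\Phi_{\tilde\sigma}$ absorbing precisely the Gibbs mismatch $\lambda\gamma_{B}-\Lambda(\gamma_A)$; the overhead $1/R_\mathrm{T}(\Lambda)$ then has a clean interpretation as the athermality cost of patching this mismatch. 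The degenerate situations $R_\mathrm{T}(\Lambda)=0$ (so $\Lambda$ is itself a GPO and a trivial $C$ suffices) and $\Lambda=\Gamma$ should be handled separately, but both reduce the claim to trivialities.
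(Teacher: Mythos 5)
Your proposal is correct and follows essentially the same route as the paper: the lower bound by composing the CP map $(1+P_\mathrm{T}(G))\Gamma-G$ with the preparation $\sigma\mapsto\rho_C\otimes\sigma$ (the paper phrases this equivalently as contracting the Choi operator with $\rho_C^{T}$), and the upper bound via the same tilted-qubit, two-branch measure-and-prepare dilation with correction state $\tilde\sigma=(\lambda\gamma_B-\Lambda(\gamma_A))/(\lambda-1)$ and the bound $R_\mathrm{T}(\tilde\sigma)\le 1/R_\mathrm{T}(\Lambda)$. Your block-diagonal Choi computation even gives the slightly sharper identity $P_\mathrm{T}(G)=\max\{R(\Lambda||\Gamma),R_\mathrm{T}(\tilde\sigma)\}$ where the paper settles for a chain of inequalities, but the resulting bound is the same.
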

The proof is provided in Appendix~\ref{app:GPO_dilation}.
We remark that when $\Lambda \equiv G\in\mathrm{GPO}(A\rightarrow B)$, although the upper bound diverges, the lower bound becomes tight. Besides, since $R(\Lambda||\Gamma)\ge R_\mathrm{T}(\Lambda)$, $R_\mathrm{T}(\Lambda)R(\Lambda||\Gamma)\ge 1,\,\forall\, R_\mathrm{T}(\Lambda)\ge 1$. Thus, Eq.~(\ref{eq:R(Lambda||Gamma)_=_P_T(G)}) holds when $R_\mathrm{T}(\Lambda)\notin (0,1)$.

In Fig.~\ref{fig:R_T_R_S_R}, we numerically investigate the typicality of the condition $R_\mathrm{T}(\Lambda)R(\Lambda||\Gamma)\ge 1$ for a random qubit channel $\Lambda$, where 20000 qubit channels are randomly sampled via the protocol in Ref.~\cite{kukulski2021generating} which gives a flat probability measure on the set of quantum channels. We found that this condition is satisfied by $17643/20000\approx 88.2\%$ channels, implying that $R(\Lambda||\Gamma) = \min_{G\in\mathcal{O}^{\,\mathrm{sim}}_\mathrm{GPO}(\Lambda)} P_\mathrm{T}(G)$ universally holds for most channels. However, whether the equality holds even for all channels with $1/R_{\rm T} > R$
remains as an open question for future investigation.
\begin{figure}
\centering
\includegraphics[width=\linewidth]{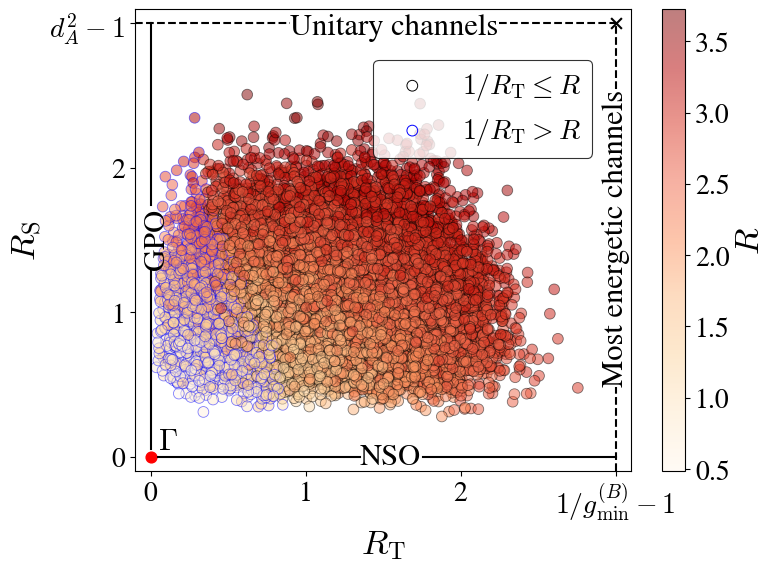}
\caption{{\bf Resource measures $R_\mathrm{T}$, $R_\mathrm{S}$ and $R$ of random qubit channels.}
$20000$ qubit channels in $\mathcal{O}(A\rightarrow B)$ with $d_A=d_B=2$ and $\gamma_A=\gamma_B = 0.75\ket{0}\!\bra{0}+0.25\ket{1}\!\bra{1}$ are randomly generated. Channels with $1/R_\mathrm{T} \le\!\!(>) R$ are marked by black (blue) edgecolor. The set GPO lies in $[0, d_A^2-1]$ on the line of $R_\mathrm{T}=0$ while the set NSO lies in $[0, 1/g_{\min}^{(B)}-1]$ on the line of $R_\mathrm{S}=0$. The completely thermalsing channel $\Gamma$ is at the origin. The highest $R_\mathrm{T}$ and $R_\mathrm{S}$ are achieved by the most energetic channels (sending $\gamma_A$ to $\ket{1}\!\bra{1}_B$) and unitary channels, respectively. The symbol \ding{53} denotes that there is no intersection between the most energetic channels and unitary channels.}
\label{fig:R_T_R_S_R}
\end{figure}

In accordance with Fig.~\ref{fig:R_T_R_S_R}, we report in Table~\ref{tab:resources_upperbounds_mainbody} the channels identified as the most resourceful under the three measures $R_{\rm T}$, $R_{\rm S}$ and $R$. The most energetic channels are defined as the channels mapping the thermal state to the highest energy eigenstate, which has the lowest thermal population when the thermal state is defined at a positive temperature.
\begin{table*}
    \centering
    \begin{ruledtabular}
    \begin{tabular}{cccc}
        \textrm{Resources} &  \textrm{Symbols} & \textrm{Most resourceful channels} &\textrm{Highest values}\\
        \colrule
         Athermality &  $R_\mathrm{T}$ & Most energetic channels & $1/g_{\rm min}^{(B)}-1$ (Theorems~\ref{appthm:ER_ROA} and~\ref{appthm:most_athermal_state}) \\
         Signalling & $R_\mathrm{S}$ & Unitary channels & $d_A^2 - 1$ (Ref.~\cite{takagi2020application}) \\
         Joint resource & $R$ & Unitary channels & $\mathrm{Tr}\{\gamma_B^{-1}\} - 1$ (Theorem~\ref{appthm:upper_bound_R})
    \end{tabular}
    \end{ruledtabular}
    \caption{Most resourceful channels in $\mathcal{O}(A\rightarrow B)$ with $d_A = d_B$ according to the three measures.}
    \label{tab:resources_upperbounds_mainbody}
\end{table*}

Additionally, we note that $R_\mathrm{T}(\Lambda)$ can also be interpreted via the GPO dilation of $\Lambda$. In Appendix~\ref{app:GPO_dilation}, we show that 
\begin{align}
    R_\mathrm{T}(\Lambda) = \min_{\rho_C\in\mathcal{S}_{\mathrm{GPO}}^{\,\mathrm{sim}}(\Lambda)}R_\mathrm{T}(\rho_C),
    \label{eq:R_T(Lambda)_=_R_T(rho_C)}
\end{align}
where $\mathcal{S}_{\mathrm{GPO}}^{\,\mathrm{sim}}(\Lambda)$ is the set of states that can simulate $\Lambda$ with a GPO using the protocol in Fig.~\ref{fig:channel_simulation_GPO}. This result can be interpreted as another manifestation of the equivalence between static and dynamical athermality.

As a consequence, $R_\mathrm{T}(\Lambda)$ and $R(\Lambda||\Gamma)$ are connected via the GPO dilation of $\Lambda$, yet they stem from distinct types of resources in the dilation: $R_{\rm T}(\Lambda)$ derives the requisite static athermality $R_{\rm T}(\rho_C)$, whereas $R(\Lambda||\Gamma)$ reflects the minimal athermality preservability $P_{\rm T}(G)$. 

We now demonstrate the interplay between the three resources $R_{\rm T}$, $R_{\rm S}$ and $R$, and showcase the application of the derived bounds via a channel simulation scenario involving the quantum switch~\cite{chiribella2013quantum}.

\section{Application: Resource analysis of quantum switch} \label{sec:quantum_switch}
The quantum switch $\mathdutchcal{S}$ realises the superposition of causal orders between quantum operations~\cite{chiribella2013quantum}. Intuitively, it transforms two quantum channels and the state of a control qubit into a channel $\Lambda_\mathdutchcal{S}$, where the input quantum channels are implemented in a superposition of their possible orderings if the control is in a superposition of $\ket{0}$ and $\ket{1}$ (see Fig.~\ref{fig:quantum_switch}). Formally, given a general quantum channel $\Omega\in\mathcal{O}(A\rightarrow B)$ with the Kraus representation $\Omega(\cdot) \equiv \sum_n K_n(\cdot) K_n^\dagger$, the Kraus representation of the channel implemented by the quantum switch $\mathdutchcal{S}[\Omega, \Omega]\in\mathcal{O}(AC\rightarrow BC')$, where $C'$ is a copy of $C$, can be written as $\mathdutchcal{S}[\Omega, \Omega](\cdot) \equiv \sum_m\sum_n S_{mn}(\cdot)S^\dagger_{mn}$, where the Kraus operator $S_{mn}$ is given by
\begin{align}
    S_{mn} \equiv \ket{0}\!\bra{0}_C\otimes K_n K_m + \ket{1}\!\bra{1}_C\otimes K_m K_n.
\end{align}
It is known that, with coherent control and two GPOs, the induced channel $\Lambda_\mathdutchcal{S}$ is both signalling and athermal~\cite{liu2022thermodynamics}. However, a comprehensive analysis of how the thermodynamic and signalling properties of $\Lambda_\mathdutchcal{S}$ depend on input control and channels remains lacking.
Considering the switch as a channel simulation task in Fig.~\ref{fig:quantum_switch}, the framework developed above is ideally suited to address this challenge.

Taking two identical signalling GPOs as input channels, we demonstrate that the joint resource $R$ of the induced channel $\Lambda_\mathdutchcal{S}$ is constrained by the supplied resources, while $R_\mathrm{T}$ and $R_\mathrm{S}$ exhibit a trade-off relation.
\begin{figure}
     \centering
     \includegraphics[width=0.9\linewidth]{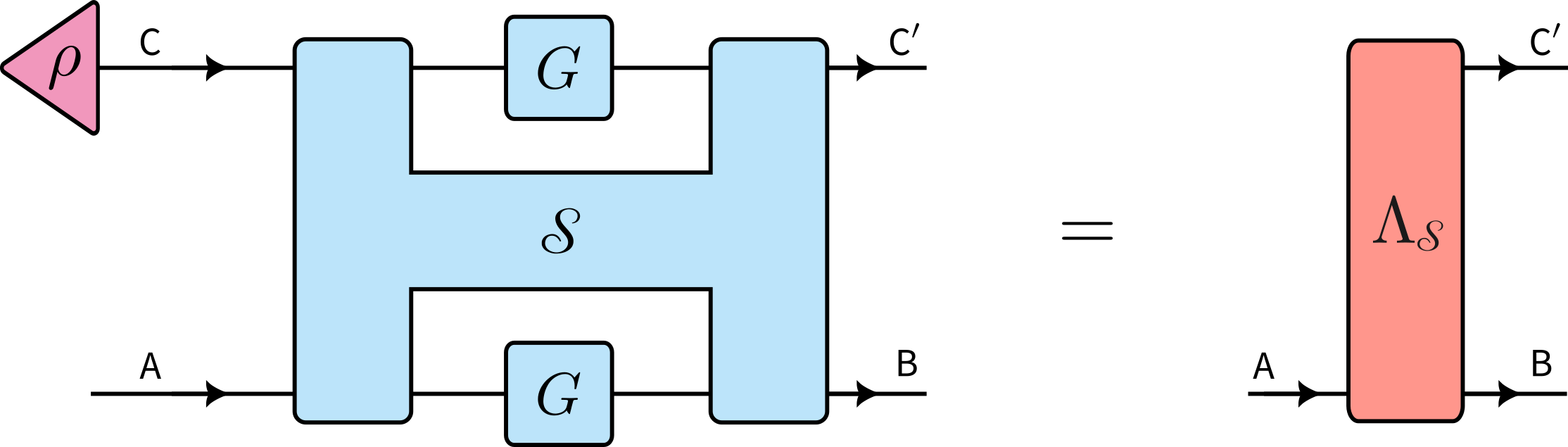}
    \caption{\textbf{Quantum switch. } A quantum switch $\mathdutchcal{S}$ with a control qubit $\rho_C$ and two GPOs as input can be considered as a channel simulation of $\Lambda_{\mathdutchcal{S}}$.}
    \label{fig:quantum_switch}
 \end{figure}

Like Ref.~\cite{liu2022thermodynamics}, we choose the initial state of the control qubit as $\rho_C = \ket{\psi}\!\bra{\psi}_C$, where 
\begin{align}
    \ket{\psi}_C \equiv \sqrt{\alpha}\ket{0}_C + \mathrm{e}^{\mathrm{i}\varphi}\sqrt{1-\alpha}\ket{1}_C,
    \label{eq:control_rho_C}
\end{align}
with $\alpha\in[0,1]$ and $\varphi\in[0,2\pi]$ (general states $\rho_C$ are considered in Appendix~\ref{app:general_rho_C}). The thermal state of the control qubit is $\gamma_C = \mathbbm{1}_C/2$, while all other systems have identical thermal states $\gamma$. To demonstrate the key features of how the quantum switch transforms signalling-athermality resources, we consider the following signalling GPO:
\begin{align}
    G\equiv s\Gamma + (1-s)\mathcal{I},
    \label{eq:signalling_GPO}
\end{align}
where $\mathcal{I}$ is the identity channel and $s\in[0,1]$ is the thermalising strength: at $s=0$, $G$ acts trivially as the identity channel (i.e., it cannot thermalise any athermal input), while at $s=1$, $G$ is completely thermalsing (i.e., it thermalises any input state). Define $\Lambda_\mathdutchcal{S}(\cdot) \equiv \mathdutchcal{S}[G,G](\rho_C\otimes \cdot) \in \mathcal{O}(A\rightarrow BC')$. We can then prove:
\begin{theorem}\label{thm:switch_bounds}
    Given the control state $\rho_C = \ketbra{\psi}{\psi}_C$ in Eq.~(\ref{eq:control_rho_C}) and 
    two input channels $G$ defined as in Eq.~(\ref{eq:signalling_GPO}), the following bounds hold
    \begin{align}
        R(\Lambda_\mathdutchcal{S}||\Gamma) &\le R_\mathrm{T}(\rho_C) + 2P_\mathrm{T}^2(G)/P_\mathrm{T}(\mathcal{I}), \label{eq:switch_bound_upper}\\
        R(\Lambda_\mathdutchcal{S}||\Gamma) &\ge R_\mathrm{S}(\Lambda_\mathdutchcal{S}) \ge (g_{\min}^4/2)[R(\Lambda_\mathdutchcal{S}||\Gamma) - R_\mathrm{T}(\Lambda_\mathdutchcal{S})]^2,
        \label{eq:switch_bound_lower}
    \end{align}
    where $R_\mathrm{T}(\rho_C) = 1,\,\forall\, \alpha \in[0,1],\, \varphi\in[0,2\pi]$ and 
    \begin{align}
        R_\mathrm{T}(\Lambda_\mathdutchcal{S}) = \sqrt{1-4(1-g_{\max}^2)[2-(1-g_{\max}^2)s^2]s^2\alpha(1-\alpha)},
        \label{eq:R_T(Lambda_S)}
    \end{align}
    for $\alpha, s\in[0,1]$, $\varphi\in[0,2\pi]$, and $g_{\min(\max)}$ the minimal (maximal) population in the energy eigenbasis of $\gamma$. The upper bound of $R(\Lambda_\mathdutchcal{S}||\Gamma)$ is saturated for either $\alpha$ or $s = 0, 1$.
\end{theorem}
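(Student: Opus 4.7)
\textit{Plan.} I will establish the four quantitative claims separately, converting each dynamical resource into a static one via Theorems~\ref{thm:ER_ROA} and~\ref{thm:ER_R}, exploiting the key fact that a maximally-mixed control annihilates coherent interference in the switch, so that $\mathdutchcal{S}[G,G](\gamma_C\otimes X) = (G\circ G)(X)\otimes\gamma_{C'}$ for any operator $X$ on $A$. The claim $R_\mathrm{T}(\rho_C)=1$ is immediate from Eq.~(\ref{eq:ROA_st_def}): for $\gamma_C=\mathbbm{1}/2$, the smallest $\lambda$ with $\ketbra{\psi}{\psi}\le\lambda\gamma_C$ is $\lambda=2$ for any unit vector $\ket{\psi}$, independent of $(\alpha,\varphi)$.

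For the explicit formula~(\ref{eq:R_T(Lambda_S)}), Theorem~\ref{thm:ER_ROA} reduces $R_\mathrm{T}(\Lambda_\mathdutchcal{S})$ to $R_\mathrm{T}[\Lambda_\mathdutchcal{S}(\gamma_A)]$. I fix the Kraus operators $\{\sqrt{1-s}\,\mathbbm{1}\}\cup\{\sqrt{sg_i}\ket{i}\!\bra{j}\}_{ij}$ of $G$ together with the switch Kraus $W_{nm}=K_nK_m\otimes\ketbra{0}{0}_C+K_mK_n\otimes\ketbra{1}{1}_C$, and expand the output. The control-diagonal blocks yield $(G\circ G)(\gamma_A)\otimes[\alpha\ketbra{0}{0}+(1-\alpha)\ketbra{1}{1}]=\gamma_B\otimes\mathrm{diag}(\alpha,1-\alpha)$ because $G$ is a GPO. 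Summing over the four Kraus-pair categories, the control-off-diagonal interference map evaluated on $\gamma_A$ yields $(1-s^2)\gamma+s^2\gamma^3$, diagonal in the energy basis with entries $m_i=g_i[(1-s^2)+s^2g_i^2]$. Thus $\Lambda_\mathdutchcal{S}(\gamma_A)$ splits into $2\times 2$ blocks on $C'$ indexed by $i\in B$; the block eigenvalue problem against $\lambda g_i/2\,\mathbbm{1}_{C'}$ gives $R_\mathrm{T}(\Lambda_\mathdutchcal{S})=\max_i\sqrt{1-4\alpha(1-\alpha)(1-\tilde m_i^2)}$ with $\tilde m_i:=(1-s^2)+s^2g_i^2$. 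Since $\tilde m_i$ is monotone increasing in $g_i$, the maximum is attained at $g_i=g_{\max}$; substituting $1-\tilde m_{g_{\max}}^2=s^2(1-g_{\max}^2)[2-s^2(1-g_{\max}^2)]$ yields Eq.~(\ref{eq:R_T(Lambda_S)}).

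For the upper bound~(\ref{eq:switch_bound_upper}), I work at the Choi-operator level. Starting from $\rho_C\le 2\gamma_C$ and tensoring with $\Phi^+_{A_1A}\ge 0$ gives $\rho_C\otimes\Phi^+\le 2\gamma_C\otimes\Phi^+$; applying the CP map $\mathcal{I}_{A_1}\otimes\mathdutchcal{S}[G,G]$ and using the decoupling property above on the right-hand side yields $C_{\Lambda_\mathdutchcal{S}}\le 2\,C_{G\circ G}\otimes\gamma_{C'}$. The decomposition $G\circ G=(1-s)^2\mathcal{I}+[1-(1-s)^2]\Gamma$ together with $\mathcal{I}\le[1+P_\mathrm{T}(\mathcal{I})]\Gamma$ gives the Choi inequality $C_{G\circ G}\le[1+(1-s)^2P_\mathrm{T}(\mathcal{I})]\,\mathbbm{1}\otimes\gamma_B$. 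Combining these, and using the elementary identity $P_\mathrm{T}(G)=(1-s)P_\mathrm{T}(\mathcal{I})$ to rewrite $(1-s)^2P_\mathrm{T}(\mathcal{I})=P_\mathrm{T}^2(G)/P_\mathrm{T}(\mathcal{I})$, one obtains $1+R(\Lambda_\mathdutchcal{S}||\Gamma)\le 2[1+P_\mathrm{T}^2(G)/P_\mathrm{T}(\mathcal{I})]$, i.e. Eq.~(\ref{eq:switch_bound_upper}). The lower bound chain~(\ref{eq:switch_bound_lower}) is an immediate application of Theorem~\ref{thm:R_S_bounds} to $\Lambda_\mathdutchcal{S}:A\to BC'$: since $\gamma_{BC'}=\gamma\otimes\mathbbm{1}/2$, the smallest population of $\gamma_A\otimes\gamma_{BC'}$ is $g_{\min}^{(AB)}=g_{\min}^2/2$, so $2(g_{\min}^{(AB)})^2=g_{\min}^4/2$. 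Saturation at $\alpha\in\{0,1\}$ or $s\in\{0,1\}$ is verified by direct inspection of the degenerate switch outputs, which reduce to product channels with no coherent interference so that each inequality in the chain is tight. The main subtlety lies in Step~2, namely the bookkeeping of the Kraus sum for $M_{12}(\gamma_A)$, which is what ultimately yields the clean closed form behind Eq.~(\ref{eq:R_T(Lambda_S)}).
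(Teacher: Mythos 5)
Your derivations of $R_\mathrm{T}(\rho_C)=1$, of the closed form~(\ref{eq:R_T(Lambda_S)}), and of the lower-bound chain~(\ref{eq:switch_bound_lower}) match the paper's in substance: the $R_\mathrm{T}$ formula is obtained exactly as in the paper by reducing to $R_\mathrm{T}[\Lambda_\mathdutchcal{S}(\gamma_A)]$, identifying the control-off-diagonal block $s^2\gamma^3+(1-s^2)\gamma$, and solving the resulting $2\times2$ positivity conditions (your $1-\tilde m_i^2=s^2(1-g_i^2)[2-s^2(1-g_i^2)]$ reproduces the paper's radicand, and the identification $g_{\min}^{(A,BC')}=g_{\min}^2/2$ giving the prefactor $g_{\min}^4/2$ is correct). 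For the upper bound~(\ref{eq:switch_bound_upper}) you take a genuinely different and leaner route: instead of the paper's convex decomposition $\Lambda_\mathdutchcal{S}=s^2\Lambda_\mathdutchcal{S}^{(\Gamma)}+2s(1-s)\rho_C\otimes\Gamma+(1-s)^2\rho_C\otimes\mathcal{I}$ combined with the nontrivial Lemma that $R(\Lambda_\mathdutchcal{S}^{(\Gamma)}\|\Gamma)=1$ (proved there via a Schur-complement computation), you push the operator inequality $\rho_C\le 2\gamma_C$ through the CP map $\mathcal{I}\otimes\mathdutchcal{S}[G,G]$ at the Choi level and use the decoupling identity $\mathdutchcal{S}[G,G](\gamma_C\otimes\cdot)=\gamma_{C'}\otimes(G\circ G)(\cdot)$, then bound $C_{G\circ G}$ via $G\circ G=(1-s)^2\mathcal{I}+[1-(1-s)^2]\Gamma$. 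I checked the decoupling identity and the subsequent chain; they are correct, and this buys you the inequality without any Schur-complement work.

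There is, however, a genuine gap in your treatment of the \emph{saturation} claim, which is part of the theorem. You assert that at $\alpha\in\{0,1\}$ or $s\in\{0,1\}$ the outputs ``reduce to product channels with no coherent interference so that each inequality in the chain is tight.'' This is true for $\alpha\in\{0,1\}$ and for $s=0$, but false for $s=1$ with $\alpha\in(0,1)$: there $\Lambda_\mathdutchcal{S}=\Lambda_\mathdutchcal{S}^{(\Gamma)}$ retains the interference term $\sqrt{\alpha(1-\alpha)}\,\mathrm{e}^{\mp\mathrm{i}\varphi}\ketbra{0}{1}_{C'}\otimes\sigma_{BA}$ and is not a product channel -- this residual coherence is precisely what makes the switch with thermalising inputs interesting. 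Your Choi argument only yields $R(\Lambda_\mathdutchcal{S}^{(\Gamma)}\|\Gamma)\le 1$; saturation requires the matching lower bound $R(\Lambda_\mathdutchcal{S}^{(\Gamma)}\|\Gamma)\ge 1$, which does not follow from ``inspection'' (e.g., $R_\mathrm{T}(\Lambda_\mathdutchcal{S}^{(\Gamma)})<1$ for $\alpha\in(0,1)$, so the trivial lower bound $R\ge R_\mathrm{T}$ is insufficient). The paper establishes it by exhibiting a kernel vector of $2\,\gamma_{C'}\otimes\gamma_B\otimes\gamma_A-\mathcal{I}\otimes\Lambda_\mathdutchcal{S}^{(\Gamma)}(\tilde\gamma_{AA'})$, namely via $\ket{\phi}=(\gamma^{-1/2}\otimes\mathbbm{1})\ket{\phi^+}$ and the characterisation in Lemma~\ref{lem:barely_positive}. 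You would need to supply this (or an equivalent) argument to close the $s=1$ case.
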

The proof is provided in Appendix~\ref{app:quantum_switch}.
The upper bound [Eq.~(\ref{eq:switch_bound_upper})] shows explicitly that the joint resource $R$, and thus the signalling robustness $R_\mathrm{S}$, of the induced channel $\Lambda_\mathdutchcal{S}$ is bounded by the thermodynamic resource input into the switch, i.e.,~the athermality $\rho_C$ and the athermality preservability of the GPO $G$ (Fig.~\ref{fig:quantum_switch}). This coincides with our intuition, because to activate the switch, the control qubit should be coherent, and therefore athermal, and the athermality preserving $G$ with non-vanishing signalling ability [Eq.~(\ref{eq:R_S_bound_GPO})] should contribute to the signalling ability of the induced channel $\Lambda_\mathdutchcal{S}$. Therefore, $R(\Lambda_\mathdutchcal{S}||\Gamma)$ is the correct characterisation of the resource converted by the switch from input to output. Notably, when $s = 0$ ($1$), $R(\Lambda_\mathdutchcal{S}||\Gamma) = R_\mathrm{T}(\rho_C) + 2P_\mathrm{T}(\mathcal{I})$ ($R_\mathrm{T}(\rho_C)$) -- precisely the resource inputs. On the other hand, the lower bound [Eq.~(\ref{eq:switch_bound_lower})] is a direct application of Theorem~\ref{thm:R_S_bounds} and can be understood as a trade-off between $R_\mathrm{S}(\Lambda_\mathdutchcal{S})$ and $R_\mathrm{T}(\Lambda_\mathdutchcal{S})$ (see below). In addition, Theorem~\ref{thm:switch_bounds} provides an analytical expression for $R_\mathrm{T}(\Lambda_\mathdutchcal{S})$. Consequently, all three types of resources -- $R$, $R_\mathrm{S}$, and $R_\mathrm{T}$ -- are either bounded or expressed analytically, providing a complete resource analysis of the quantum switch.

To show the trade-off between $R_\mathrm{T}$ and $R_\mathrm{S}$ more clearly, we now consider the case when both input GPOs $G = \Gamma$ (i.e., $s=1$), such that $\rho_C$ constitutes the only resource input. This case is a canonical example in previous demonstrations of the communication/thermodynamic capabilities of quantum switches with NSOs/GPOs~\cite{ebler2018enhanced,felce2020quantum}. By Theorem~\ref{thm:switch_bounds}, we have $R(\Lambda_\mathdutchcal{S}||\Gamma) = R_\mathrm{T}(\rho_C)$. Eq.~(\ref{eq:switch_bound_lower}) reduces to the following form: 
\begin{align}
    R_\mathrm{T}(\rho_C) \ge R_\mathrm{S}(\Lambda_\mathdutchcal{S}) \ge (g_{\min}^4/2)\left[R_\mathrm{T}(\rho_C) - R_\mathrm{T}(\Lambda_\mathdutchcal{S})\right]^2,
    \label{eq:switch_bounds_Gamma}
\end{align}
for all $\alpha\in [0,1]$. 
Consistently, the resource gain, $R_\mathrm{S}(\Lambda_\mathdutchcal{S})$ and $R_\mathrm{T}(\Lambda_\mathdutchcal{S})$, are both upper-bounded by $R_\mathrm{T}(\rho_C)$. The lower bound in Eq.~(\ref{eq:switch_bounds_Gamma}) hence indicates a trade-off between $R_\mathrm{S}(\Lambda_\mathdutchcal{S})$ and $R_\mathrm{T}(\Lambda_\mathdutchcal{S})$ constrained by $R_\mathrm{T}(\rho_C)$. Since the channel $\mathdutchcal{S}[G,G]$ is Gibbs-preserving~\cite{liu2022thermodynamics}, the control state $\rho_C$ and the channel $\mathdutchcal{S}[G,G]$ can be regarded as a GPO dilation of $\Lambda_\mathdutchcal{S}$. Recall that by Eq.~(\ref{eq:R_T(Lambda)_=_R_T(rho_C)}),
optimally, simulating $\Lambda_\mathdutchcal{S}$ only requires the amount of athermality as same as $R_{\rm T}(\Lambda_\mathdutchcal{S})$.
The excess term $\RT(\rho_C) - \RT(\Lambda_\mathdutchcal{S})$ thus is the \textit{athermality waste} in the channel simulation realised by the switch. The lower bound in Eq.~(\ref{eq:switch_bounds_Gamma}) shows that the signalling ability of the induced channel $\Lambda_\mathdutchcal{S}$ is actually activated by a non-vanishing athermality waste, rather than the athermality of $\rho_C$.
\begin{figure}
    \centering
    \includegraphics[width=1\linewidth]{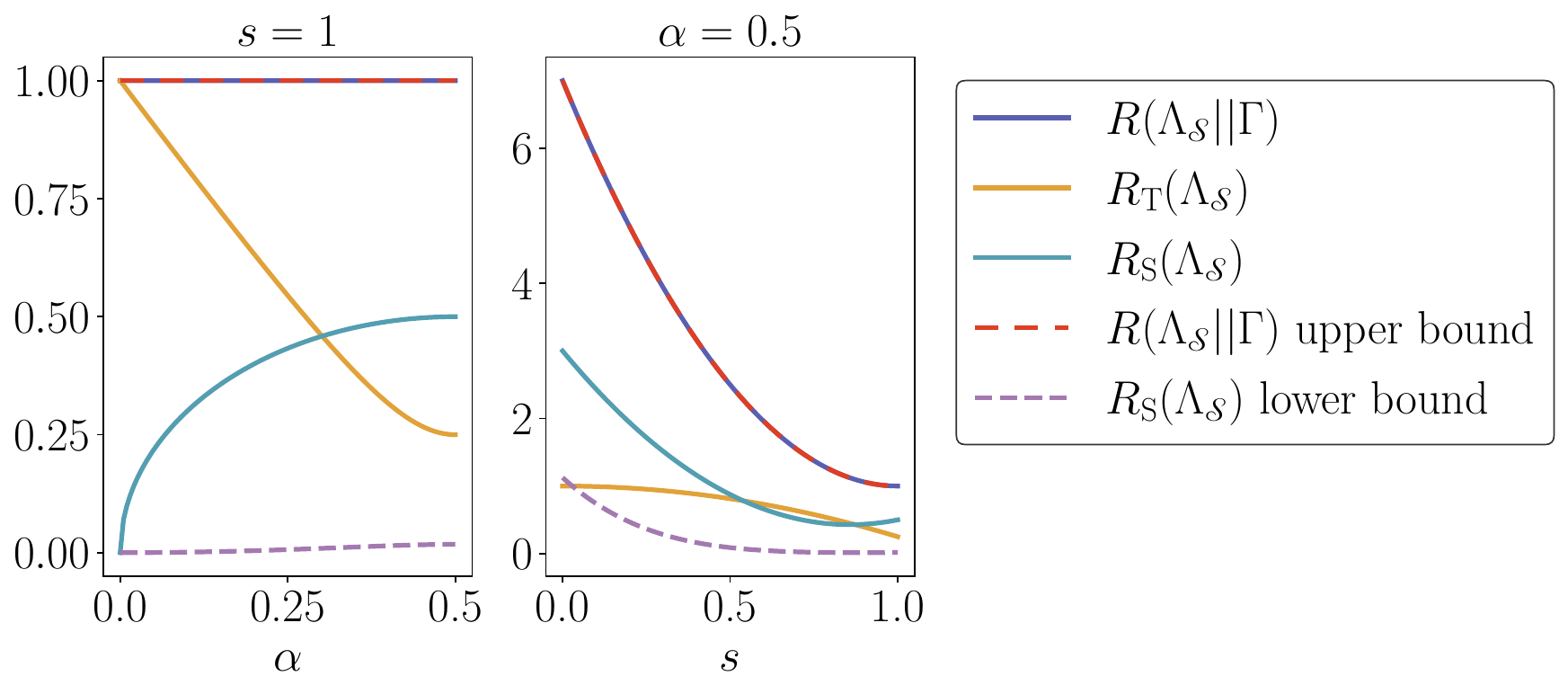}
    \caption{\textbf{Resource analysis. }$R$, $R_\mathrm{T}$, and $R_\mathrm{S}$ of the induced channel $\Lambda_\mathdutchcal{S}$ with their bounds/analytical expression given in Theorem~\ref{thm:switch_bounds} (for qubit systems $A$ and $B$  with $\gamma_A = \gamma_B = \mathbbm{1}/2$).}
    \label{fig:switch_example}
\end{figure}

In Fig.~\ref{fig:switch_example}, we exemplify Theorem~\ref{thm:switch_bounds} by considering qubits $A$ and $B$ with $\gamma_A = \gamma_B = \mathbbm{1}/2$, and plotting the resources of $\Lambda_\mathdutchcal{S}$ as functions of $\alpha$, and $s$, respectively. We find that the upper bound of $R(\Lambda_\mathdutchcal{S}||\Gamma)$ [Eq.~(\ref{eq:switch_bound_upper})] remains tight in both cases while the lower bound of $R_\mathrm{S}(\Lambda_\mathdutchcal{S})$ [Eq.~(\ref{eq:switch_bound_lower})] is only relatively tighter for more signalling input GPOs (smaller $s$). Despite the potential looseness of the lower bound, it provides an interpretation of the phenomenon that increased $R_\mathrm{S}(\Lambda_\mathdutchcal{S})$ comes at the expense of $R_\mathrm{T}(\Lambda_\mathdutchcal{S})$, aligning with their trends in both plots.

Finally, we remark that the above resource analysis is not specific to quantum switches. In Appendix~\ref{app:coherent_control}, we provide a similar analysis to coherent control of quantum channels, where instead of the order of two input channels, the implementation of one of the two channels is conditioned on the state of a control qubit~\cite{abbott2020communication}. In this setup, $R_\mathrm{T}$ and $R_\mathrm{S}$ are found to exhibit a similar trade-off.

\section{Discussion and Conclusion}\label{sec:conclusion}
In this work, we have provided a full-fledged characterisation of the signalling and thermodynamic properties of quantum channels and demonstrated how they mutually bound and condition each other. Our main results are the precise identification of the thermodynamic resources for the generation,
preservation, and transmission of athermality (Theorems~\ref{thm:ER_ROA},~\ref{thm:ER_R}, and~\ref{thm:upper_bound_R}), the characterisation of a quantum channel's signalling ability in terms of its capacity for athermality preservation and transmission (Theorem~\ref{thm:R_S_bounds}), and an operational interpretation of the resulting resource measures in terms of channel-dilatability by a GPO (Theorem~\ref{thm:R_GPO_dilation}). We finally used our results for an encompassing characerisation of resources in the quantum switch (Theorem~\ref{thm:switch_bounds}).

We note the relation of our findings to those of Ref.~\cite{hsieh2025dynamical}, which demonstrates that, in the asymptotic limit, information and energy transmission in specific incoherent setups are equivalent. Here, in contrast, we do not restrict ourselves to the classical case -- in particular, the considered states $\widetilde{\gamma}_{AA'}$ are entangled. On the other hand, the thermodynamic resource we analysed is athermality, a joint measure of work and coherence, while Ref.~\cite{hsieh2025dynamical} investigated extractable work. Finally, according to our measure of athermality transmitability, a signalling channel ($R_\mathrm{S} > 0)$ may have zero athermality transmitability ($R-R_\mathrm{T} = 0)$ (See Appendix~\ref{app:exp_signalling_ch_zero_athermality_transmitability} for examples).

Using the trade-off relations of athermality and signalling, we demonstrated the interplay of these properties for the paradigmatic example of the quantum switch. This resolves the open question posed in Ref.~\cite{liu2022thermodynamics}, showing that only limited activation of information capacity is possible when the switch is composed with GPOs, as the output resources are bounded by those at the input.

A natural next step is to analyse how the bounds we provide behave when thermal operations or related subsets of GPOs~\cite{Janzing2000, brandao2013resource, horodecki2013fundamental, CGPO2015,lostaglio_elementary_2018,Faist2015Gibbs, ding_exploring_2021, tajima2025Gibbs} are considered as thermodynamically resourceless channels instead of GPOs.  Since these channels are all contained within GPOs, our core bounds Eq.~(\ref{eq:R_S_bounds}) still hold -- albeit possibly in a looser sense. Another promising research direction is the study of general quantum supermaps as a complete framework for manipulating dynamical resources and their conversion of different types of resources~\cite{theurer_quantifying_2019, liu2019resource, Liu2020, gour_dynamical_2020, gour_dynamical_2020a, gour_entanglement_2021, gour_inevitable_2024}. Concretely, one could ask: What kind of supermaps do not increase the total resources $R_\mathrm{T}+R_\mathrm{S}$ but can exchange them? Noticing that $2R(\Lambda||\Gamma)\ge R_\mathrm{T}(\Lambda)+R_\mathrm{S}(\Lambda)$, one may think $\Gamma$-preserving supermaps are the direct answer, but we find that there exist counterexamples where $\Gamma$-preserving supermaps increase the sum $R_\mathrm{T}+R_\mathrm{S}$.

Having combined signalling and athermality, another possible direction is to study other resource theories where two distinct free sets have an intersection. In this case, the results from Ref.~\cite{Armin1992distance} may be applied to resource measures defined via distance metrics~\cite{li2020quantifying}.

\textit{Acknowledgements.---}
This publication has emanated from research conducted with the financial support of Taighde Éireann - Research Ireland under Grant number IRCLA/2022/3922. This publication was made possible through the support of Grant 62423 from the John Templeton Foundation. The opinions expressed in this publication are those of the authors and do not necessarily reflect the views of the John Templeton Foundation. YL is supported by China Scholarship Council (No.~202408060137). SM acknowledges funding from the European Union’s Horizon Europe research and innovation programme under the Marie Sk{\l}odowska-Curie grant agreement No.~101068332.

\onecolumngrid

\newtheorem{theoremApp}{Theorem}[section]
\newtheorem{lemmaApp}{Lemma}[section]
\newtheorem{corollaryApp}{Corollary}[theoremApp]
\numberwithin{figure}{section}
\appendix

\numberwithin{equation}{section}

\section{Resource measures}\label{app:resource_measures}
In this work, we focus on two quantum resources: athermality and signalling, both of which can be quantified in terms of robustness monotones~\cite{piani2016robustness,lami2023computable,liu2019one}. In Table~\ref{tab:resources}, we summarise the specific resource quantifiers considered in this paper. We remark that all these measures can be expressed as semidefinite programs (SDPs)~\cite{skrzypczyk2023semidefinite}. 
\begin{table}[b]
    \centering
    \begin{ruledtabular}
    \begin{tabular}{cccc}
        \textrm{Resources} &  \textrm{Symbols} & \textrm{Types} &\textrm{Definitions}\\
        \colrule
         Athermality &  $R_\mathrm{T}$ & Static / Dynamical & Eq.~(\ref{appeq:R_T_st_def}) / Eq.~(\ref{appeq:R_T_ch_def}) \\
         Signalling & $R_\mathrm{S}$ & Dynamical & Eq.~(\ref{appeq:R_S_def}) \\
         Joint resource & $R$ & Dynamical & Eq.~(\ref{appeq:R_def}) \\
         Athermality preservability for GPOs\footnote{GPOs: Gibbs-preserving operations} & $P_\mathrm{T}$ & Dynamical & Eq.~(\ref{appeq:P_T_def})
    \end{tabular}
    \end{ruledtabular}
    \caption{Resource measures considered in this paper.}
    \label{tab:resources}
\end{table}
As in the main body, we use $\mathcal{S}(X)$ to denote the set of quantum states of system $X$, and $\mathcal{O}(X\rightarrow Y)$ to denote the set of quantum operations (completely-positive and trace-preserving maps) between the input system $X$ and the output system $Y$. $X'$ denotes a copy of system $X$ with the same thermal state $\gamma_X$. The dimension of the system $X$ is denoted as $d_{X}$.

The standard forms of the robustness of athermality $R_\mathrm{T}$ for states and channels are given by
\begin{align}
    &R_\mathrm{T}(\rho_A) := \min\left\{s\Big|\frac{\rho_A + s\sigma_A}{1+s} = \gamma_A,\, \sigma_A\in\mathcal{S}(A)\right\}, \label{appeq:ROA_st_std}\\
    &R_\mathrm{T}(\Lambda) := \min\left\{s\Big|\frac{\Lambda + s\,\Omega}{1+s} = G,\,\Omega\in\mathcal{O}(A\rightarrow B),\, G\in\mathrm{GPO}(A\rightarrow B)\right\}, \label{appeq:ROA_ch_std}
\end{align}
where $\mathrm{GPO}(A\rightarrow B) := \left\{G | G(\gamma_A) = \gamma_B, \, G\in\mathcal{O}(A\rightarrow B)\right\}$ is the set of Gibbs-preserving operations (GPOs) from system $A$ to system $B$. We use $R_\mathrm{T}$ for both static and dynamical athermality because dynamical athermality can be reduced to static athermality (Theorem~\ref{appthm:ER_ROA}). The type of athermality will remain unambiguous from the argument of $R_\mathrm{T}$.
Following a similar argument to Section~III.A. in Ref.~\cite{piani2016robustness}, we can rewrite $R_\mathrm{T}$ as
\begin{align}
    &1 + R_\mathrm{T}(\rho_A) := \min\left\{\lambda |\rho_A\le \lambda \gamma_A\right\}, \label{appeq:R_T_st_def}\\
    &1 + R_\mathrm{T}(\Lambda) := \min\left\{\lambda|\Lambda \le \lambda G,\, G\in\mathrm{GPO}(A\rightarrow B)\right\},\label{appeq:R_T_ch_def}
\end{align}
where ``$X\le Y$" means that $Y-X$ is positive semidefinite for states and completely positive (CP) for channels. Throughout, we consider full-rank thermal states $\gamma$, thus avoiding divergence of $R_\mathrm{T}$.

The robustness of signalling $R_\mathrm{S}$ for channels is defined as
\begin{align}
    1 + R_\mathrm{S}(\Lambda) := \min\left\{\lambda | \Lambda \le \lambda\Phi, \, \Phi\in\mathrm{NSO}(A\rightarrow B)\right\},
    \label{appeq:R_S_def}
\end{align}
where $\mathrm{NSO}(A\rightarrow B):=\{\Phi_\sigma\equiv \mathrm{Tr}\{\cdot\}\sigma | \sigma\in\mathcal{S}(B)\}$ is the set of non-signalling operations (NSOs) between systems $A$ and~$B$.

Given the input system $A$ and the output system $B$, the only intersection between $\mathrm{GPO}(A\rightarrow B)$ and $\mathrm{NSO}(A\rightarrow B)$ is the completely thermalising channel: $\Gamma(\cdot) \equiv \mathrm{Tr}\{\cdot\}\gamma_B$. Note, however, that $\Gamma$ is neither an extreme point in NSO nor in GPO. For NSO, this is straightforward. For GPO, see Ref.~\cite{giulio2024extremepoints} for the conditions on extreme points in GPO.
We thus define a joint robustness measure that collectively gauges signalling and athermality of a general channel:
\begin{align}
    1 + R(\Lambda||\Gamma) := \min\{\lambda|\Lambda\le \lambda\Gamma\}.
    \label{appeq:R_def}
\end{align}
The relation among the three measures $R_\mathrm{T}$, $R_\mathrm{S}$ and $R$ is informally represented in Fig.~\ref{fig:NSO_GPO}.
\begin{figure}
    \centering
    \begin{tikzpicture}
        \draw[thick] (0,0) ellipse (1.2cm and 0.75cm);
        \fill[NSOcolor] (0,0) ellipse (1.2cm and 0.75cm);
        \draw[thick] (2.4,0) ellipse (1.2cm and 0.75cm);
        \fill[GPOcolor] (2.4,0) ellipse (1.2cm and 0.75cm);
        \fill[black] (1.2,0) circle (2pt);
        \node[left] at (1.45,-0.5) {$\Gamma$};
        \fill[black] (1.2,2.3) circle (2pt);
        \node[left] at (1.45, 2.6) {$\Lambda$};
        \draw[thick] (1.2,0) -- (1.2,2.3);
        \draw[thick] (1.2,2.3) -- (0.4, {0.707107});
        \node[left] at (0.8,1.7) {$R_\mathrm{S}(\Lambda)$};
        \draw[thick] (1.2,2.3) -- (2, {0.707107});
        \node[right] at (1.6,1.7) {$R_\mathrm{T}(\Lambda)$};
        \node[right] at (1.85,0.4) {GPO};
        \node[left] at (0.5, 0.4) {NSO};
        \draw[->] (1.3,1.1) to[out=30, in=160] (2.6,1.2);
        \node[right] at (2.55,1.15) {$R(\Lambda||\Gamma)$};
    \end{tikzpicture}
    \caption{Geometric sketch of the three resource monotones: $R_\mathrm{T}(\Lambda)$, $R_\mathrm{S}(\Lambda)$ and $R(\Lambda||\Gamma)$.}
    \label{fig:NSO_GPO}
\end{figure}
For $\Phi\in\mathrm{NSO}(A\rightarrow B)$,  $R_\mathrm{S}(\Phi) = 0$ and $R(\Phi||\Gamma) = R_\mathrm{T}(\Phi)$, while for $G\in\mathrm{GPO}(A\rightarrow B)$, $R(G||\Gamma) \ge R_\mathrm{S}(G)$ in general.  For GPOs, the joint measure $R$ reduces to the robustness of athermality preservability mentioned in Ref.~\cite{hsieh2020resource}:
\begin{align}
    P_\mathrm{T}(G) := R(G||\Gamma).
    \label{appeq:P_T_def}
\end{align}

\subsection{Basic properties}\label{sec:basic_properties}
The dual SDP of $R_\mathrm{T}$ for states is given as~\cite{watrous2020lecture2}
\begin{align}
    1 + R_\mathrm{T}(\rho_A) = \max\left\{\mathrm{Tr}\{S\rho_A\}\Big| S\ge 0, \, \mathrm{Tr}\{S\gamma_A\} = 1\right\},
    \label{appeq:R_T_st_dual}
\end{align}
for which strong duality holds, i.e.,~primal and dual SDPs yield the same optimal values~\cite{skrzypczyk2023semidefinite}.
As a resource monotone, $R_\mathrm{T}$ satisfies desired properties. Here, we show them explicitly, by following similar proofs as Ref.~\cite{diaz2018using}, where they are for the robustness of coherence.
\begin{enumerate}
    \item (Faithfulness) $\RT(\rho) = 0 \Leftrightarrow \rho = \gamma$. \\\textit{Proof.} Direct result from Eq.~\eqref{appeq:R_T_st_def}. \qedsymbol{}
    \item (Monotonicity) $\RT(\rho_A)\ge \RT[G(\rho_A)],\, \forall\, G\in \mathrm{GPO}(A\rightarrow B)$. \\
    \textit{Proof.} Let $\lambda^*\equiv1+\RT(\rho_A)$ be the optimal value that achieves the minimum. Since $G\in \mathrm{GPO}(A\rightarrow B)$ is completely positive, $\lambda^*\gamma_A - \rho_A \ge 0 \Rightarrow G(\lambda^*\gamma_A-\rho_A) = \lambda^* \gamma_B - G(\rho_A)\ge 0$. Consequently, $\lambda^*$ is in the feasible set of the minimisation defining $\RT[G(\rho_A)]$, such that $\lambda^*\ge 1+\RT[G(\rho_A)]$. Hence, $\RT(\rho_A)\ge \RT[G(\rho_A)]$. \qedsymbol{} 
    \item (Convexity) $\RT(\sum_i p_i\rho_i) \le \sum_i p_i \RT(\rho_i)$, for $\sum_i p_i = 1$ and $p_i\in [0,1], \forall\, i$.\\
    \textit{Proof.} Let $\lambda^*_i \equiv 1+\RT(\rho_i)$ be the optimal value for each $i$. $\lambda^*_i\gamma-\rho_i \ge 0, \forall\, i \Rightarrow \sum_i p_i\lambda^*_i\gamma - \sum_i p_i\rho_i \ge 0$. Consequently, $\sum_ip_i\lambda^*_i$ is a feasible for the minimisation defining $\RT(\sum_ip_i\rho_i)$, such that $\sum_ip_i\lambda^*_i\ge 1+\RT(\sum_ip_i\rho_i)$. Hence, $\RT(\sum_i p_i\rho_i) \le \sum_i p_i \RT(\rho_i)$. \qedsymbol{}
    \item (Multiplicity) $1+\RT(\rho_A\otimes\rho_B) = [1+\RT(\rho_A)][1+\RT(\rho_B)]$. \\
    \textit{Proof.} Let $\lambda_X^*$ be the optimal value such that $1+\RT(\rho_X)\equiv \lambda_X$, where $X = A, B$. $\lambda^*_X\gamma_X-\rho_X\ge 0, \forall\, X \Rightarrow \lambda_A^*\lambda_B^*\gamma_A\otimes\gamma_B \ge \rho_A\otimes\rho_B$. $\lambda_A^*\lambda_B^*$ is  feasible for the minimisation defining $\RT(\rho_A\otimes\rho_B)$, such that $\lambda_A^*\lambda_B^* = [1+\RT(\rho_A)][1+\RT(\rho_B)]\ge 1+\RT(\rho_A\otimes\rho_B)$. Let $S_X^*$ be the optimal operator of the dual SDP [Eq.~(\ref{appeq:R_T_st_dual})] for $X=A,B$, such that $1+\RT(\rho_X) = \mathrm{Tr}\{S^*_X\rho_X\}$. Now $S\equiv S^*_A\otimes S^*_B$ is feasible for the dual SDP of $\RT(\rho_A\otimes\rho_B)$, so that  $\mathrm{Tr}\{S\rho_A\otimes\rho_B\} = \mathrm{Tr}\{S^*_A\rho_A\}\mathrm{Tr}\{S^*_B\rho_B\} = [1+\RT(\rho_A)][1+\RT(\rho_B)] \le 1+\RT(\rho_A\otimes\rho_B)$. Hence, $1+\RT(\rho_A\otimes\rho_B) = [1+\RT(\rho_A)][1+\RT(\rho_B)]$. \qedsymbol{}
\end{enumerate}
By Theorem~\ref{appthm:ER_ROA} (See below), $R_\mathrm{T}(\Lambda) = R_\mathrm{T}[\Lambda(\gamma_A)],\,\forall\,\Lambda\in\mathcal{O}(A\rightarrow B)$. Thus, $R_\mathrm{T}$, as a measure of dynamical resource, satisfies the following properties:
\begin{enumerate}
    \item (Faithfulness) $\RT(\Lambda) = 0 \Leftrightarrow \Lambda\text{ is a GPO}$. 
    \item (Monotonicity) $\RT(\Lambda)\ge \RT(G\circ\Lambda\circ G'),\, \forall\, G\text{ and }G'\text{ are GPOs}$.  
    \item (Convexity) $\RT(\sum_i p_i\Lambda_i) \le \sum_i p_i \RT(\Lambda_i)$, for $\sum_i p_i = 1$ and $p_i\in [0,1], \forall\, i$.
    \item (Multiplicity) $1+\RT(\Lambda\otimes\Omega) = [1+\RT(\Lambda)][1+\RT(\Omega)]$. 
\end{enumerate}
In addition, the joint measure $R$ defined in Eq.~(\ref{appeq:R_def}) can be written in terms of the Choi states of channels~\cite{wilde_2013}:
\begin{align}
    1 + R(\Lambda||\Gamma) := \min \{\lambda | J_\Lambda \le \lambda J_\Gamma \},
\end{align}
where $J_X\equiv \mathcal{I}\otimes X(\ket{\phi^+}\!\bra{\phi^+})$ with $\ket{\phi^+}\equiv \sum_{i=0}^{d_A-1}\ket{i}\otimes\ket{i}$, is defined as the Choi state of the channel $X\in\mathcal{O}(A\rightarrow B)$. Comparing with Eq.~(\ref{appeq:R_T_st_def}), $R$ can be regarded as the $R_\mathrm{T}$ of $J_\Lambda$ with respect to the ``thermal state" $J_\Gamma$. Therefore, $R$, as well as $P_\mathrm{T}$ defined in Eq.~(\ref{appeq:P_T_def}), inherits the same properties as $R_\mathrm{T}$:
\begin{enumerate}
    \item (Faithfulness) $R(\Lambda||\Gamma) = 0 \Leftrightarrow \Lambda = \Gamma$. 
    \item (Monotonicity) $R(\Lambda||\Gamma)\ge R(G\circ\Lambda\circ G'||\Gamma),\, \forall\, G\text{ and }G'\text{ are GPOs}$.  
    \item (Convexity) $R(\sum_i p_i\Lambda_i||\Gamma) \le \sum_i p_i R(\Lambda_i||\Gamma)$, for $\sum_i p_i = 1$ and $p_i\in [0,1], \forall\, i$.
    \item (Multiplicity) $1+R(\Lambda\otimes\Omega||\Gamma) = [1+R(\Lambda||\Gamma)][1+R(\Omega||\Gamma)]$. 
\end{enumerate}
To show more properties of $R_\mathrm{T}$, we prove the following lemma first.
\begin{lemmaApp}\label{lem:S_rank_1}
    The optimal $S^*$ that maximises the dual SDP of the $R_\mathrm{T}$ of states [Eq.~(\ref{appeq:R_T_st_dual})] can always be chosen to be rank-1.
\end{lemmaApp}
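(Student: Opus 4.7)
The plan is to exploit the linearity of the dual SDP's objective together with the convexity of its feasible set: any optimal $S^*$ can be written as a convex combination of rank-1 feasible operators, and by linearity at least one element of the decomposition must achieve the same objective value. In effect, the argument reduces to saying that the maximum of a linear functional over a convex set is attained at an extreme point, where here the relevant extreme rays of the feasible cone are the rank-1 positive operators, appropriately normalised to satisfy $\mathrm{Tr}\{S\gamma_A\}=1$.

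First, I would take any optimal $S^*$ achieving the maximum in Eq.~(\ref{appeq:R_T_st_dual}) and write its spectral decomposition $S^* = \sum_i \lambda_i |v_i\rangle\langle v_i|$ with $\lambda_i \geq 0$ (as $S^* \geq 0$) and $\{|v_i\rangle\}$ an orthonormal eigenbasis. Because the paper restricts to full-rank thermal states $\gamma_A$, the numbers $c_i := \langle v_i | \gamma_A | v_i\rangle$ are all strictly positive, so the rank-1 operators $S_i := |v_i\rangle\langle v_i| / c_i$ are well defined, positive semidefinite, and satisfy $\mathrm{Tr}\{S_i \gamma_A\} = 1$. Each $S_i$ is therefore feasible for the dual SDP.

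Next, I would reinterpret $S^*$ as a convex mixture of these rank-1 feasible operators. Setting $p_i := \lambda_i c_i$, the feasibility condition $\mathrm{Tr}\{S^* \gamma_A\} = \sum_i \lambda_i c_i = 1$ becomes $\sum_i p_i = 1$ with $p_i \geq 0$, and one checks directly that $S^* = \sum_i p_i S_i$. By linearity of the dual objective,
\begin{equation}
1 + R_\mathrm{T}(\rho_A) = \mathrm{Tr}\{S^* \rho_A\} = \sum_i p_i \,\mathrm{Tr}\{S_i \rho_A\}.
\end{equation}
Hence there exists at least one index $i^*$ with $\mathrm{Tr}\{S_{i^*}\rho_A\} \geq \mathrm{Tr}\{S^*\rho_A\}$, and since $S_{i^*}$ is feasible this inequality must be an equality. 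Thus $S_{i^*}$ is a rank-1 optimiser, which is what we want.

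There is no real obstacle here beyond ensuring the normalisation step is well defined; the only way the argument could fail is if some eigenvector of $S^*$ lay in the kernel of $\gamma_A$, making $c_i = 0$, and this is precisely ruled out by the standing full-rank assumption on $\gamma_A$. The same scheme clearly applies to the analogous dual SDPs of $R(\Lambda||\Gamma)$ and $R_\mathrm{T}(\Lambda)$ via their Choi representations, should rank-1 optimisers be needed there as well.
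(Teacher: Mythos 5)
Your proof is correct and uses essentially the same argument as the paper: write the optimal dual variable as a convex combination of normalised rank-1 feasible operators and invoke linearity of the objective. The only cosmetic difference is that the paper first changes variables to $X=\gamma^{1/2}S\gamma^{1/2}$ so the feasible set becomes the set of density operators before decomposing, whereas you decompose $S^*$ spectrally and normalise each eigenprojector directly; both routes rely on the same full-rank assumption on $\gamma_A$ and yield the same conclusion.
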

\begin{proof}
    Since the Gibbs state $\gamma$ is full-rank by assumption, we can define a bijective map, $X\equiv \gamma^{1/2} S \gamma^{1/2}$, such that Eq.~(\ref{appeq:R_T_st_dual}) can be rewritten as
    \begin{equation}
        1 + \RT(\rho) = \max_{X}\left\{\text{Tr}\{\tilde{\rho}X\}\Big| X\ge 0,\, \text{Tr}\{X\}=1\right\},
        \label{appeq:R_T_st_dual_X}
    \end{equation}
    where we denote $\tilde{\rho} \equiv \gamma^{-1/2}\rho\gamma^{-1/2}$.
    The feasible set $\{X| X\ge 0, \,\text{Tr}\{X\}=1\}$ is the set of quantum states, meaning that any $X$ in the set can be written as a convex combination of rank-1 states. Let $X^*=\sum_ip_i\ket{\phi_i}\!\bra{\phi_i}$ for $\sum_i p_i = 1$ with $p_i\in [0,1], \forall\, i$, be the optimal operator, such that 
    \begin{align}
        1+\RT(\rho) = \text{Tr}\{\tilde{\rho}X^*\} = \sum_i p_i \text{Tr}\{\tilde{\rho}\ket{\phi_i}\!\bra{\phi_i}\}\le \max_{i}\text{Tr}\{\tilde{\rho}\ket{\phi_i}\!\bra{\phi_i}\}.
    \end{align}
    However, by Eq.~(\ref{appeq:R_T_st_dual_X}), $1+\RT(\rho)\ge\max_{i}\text{Tr}\{\tilde{\rho}\ket{\phi_i}\!\bra{\phi_i}\}$. We thus have
    $1+\RT(\rho)=\max_{i}\text{Tr}\{\tilde{\rho}\ket{\phi_i}\!\bra{\phi_i}\}$ and $X^*$ can be chosen to be rank-1. Note that $X^*$ does not have to be rank-1 since for some $\tilde{\rho}$, $\text{Tr}\{\tilde{\rho}\ket{\phi_i}\!\bra{\phi_i}\}$ can be independent of $i$, i.e.,~any convex combination of $\{\ket{\phi_i}\!\bra{\phi_i}\}_i$ is an optimal $X^*$.
    
    The optimal $S^*$ in Eq.~(\ref{appeq:R_T_st_dual}) is given by $S^*\equiv \gamma^{-1/2}X^*\gamma^{-1/2}$. 
    Since $X^*$ can always be chosen to be rank-1, such that $X^* \equiv \ket{\phi}\!\bra{\phi}$, $S^* \equiv \gamma^{-1/2}X^*\gamma^{-1/2} = \gamma^{-1/2}\ket{\phi}\!\bra{\phi}\gamma^{-1/2}$ is also rank-1.
\end{proof}
We now find the state with highest $R_\mathrm{T}$, i.e.,~the most athermal state.
\begin{theoremApp}[The most athermal state]\label{appthm:most_athermal_state}
    Given a non-degenerate, full-rank Gibbs state $\gamma\equiv\sum_{i=0}^{d-1}g_i\ket{i}\!\bra{i}$ with the populations $g_{d-1}<g_{d-2}<\dots<g_0$, the unique state with the maximal $R_\mathrm{T}$ is the state:
    $\rho^* \equiv \ket{d-1}\!\bra{d-1}$, with $1+\RT(\rho^*) = g_{d-1}^{-1}.$
\end{theoremApp}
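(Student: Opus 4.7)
The plan is to combine the dual SDP characterisation in Eq.~(\ref{appeq:R_T_st_dual}) with Lemma~\ref{lem:S_rank_1}, which together reduce the computation of $\RT(\rho)$ to the constrained vector optimisation
\begin{align}
    1 + \RT(\rho) = \max_{\ket{\psi}}\left\{\braket{\psi|\rho|\psi}\,\big|\,\braket{\psi|\gamma|\psi} = 1\right\}.
\end{align}
From this representation the proof proceeds in three short steps: prove a universal upper bound, exhibit a saturating state, and use non-degeneracy to pin down uniqueness.

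For the upper bound, I would expand $\ket{\psi} = \sum_i c_i \ket{i}$ in the energy eigenbasis so the constraint becomes $\sum_i g_i |c_i|^2 = 1$. Since $\rho \le \mathbbm{1}$ for every density operator, we get $\braket{\psi|\rho|\psi} \le \sum_i |c_i|^2$, and using $g_i \ge g_{d-1}$ the constraint then forces $\sum_i |c_i|^2 \le g_{d-1}^{-1}$. Hence $1 + \RT(\rho) \le g_{d-1}^{-1}$ for every state $\rho$. Attainment is immediate: plugging in $\rho^* = \ketbra{d-1}{d-1}$ together with the test vector $\ket{\psi} = g_{d-1}^{-1/2}\ket{d-1}$ satisfies the constraint and saturates every inequality, giving $1 + \RT(\rho^*) = g_{d-1}^{-1}$.

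For uniqueness, I would leverage the non-degeneracy hypothesis $g_{d-1} < g_i$ for $i < d-1$. Suppose some $\rho$ saturates the bound; then there is an optimal $\ket{\psi}$ at which both intermediate inequalities become equalities. Equality in $\sum_i |c_i|^2 \le g_{d-1}^{-1}\sum_i g_i|c_i|^2$ forces $c_i = 0$ whenever $g_i > g_{d-1}$, i.e.~for all $i \neq d-1$, so $\ket{\psi} \propto \ket{d-1}$. Equality in $\braket{d-1|\rho|d-1} \le 1$ together with $\mathrm{Tr}(\rho) = 1$ and $\rho \ge 0$ then forces $\rho = \ketbra{d-1}{d-1}$.

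The argument is mostly routine once the dual SDP plus the rank-1 reduction of Lemma~\ref{lem:S_rank_1} are in hand. The only real subtlety is the uniqueness step: without non-degeneracy, the equality $\sum_i |c_i|^2 = g_{d-1}^{-1}\sum_i g_i|c_i|^2$ only localises $\ket{\psi}$ onto the minimal-weight eigenspace, and the maximiser becomes the full family of pure states supported on that subspace — so non-degeneracy is essential for the claimed uniqueness.
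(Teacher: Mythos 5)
Your proposal is correct. It starts from the same place as the paper's proof --- the dual SDP of Eq.~(\ref{appeq:R_T_st_dual}) combined with the rank-1 reduction of Lemma~\ref{lem:S_rank_1} --- and your upper bound is essentially the paper's: the paper writes $1+\RT(\rho)=\max_{\ket{\phi}}\bra{\phi}\rho\ket{\phi}/\bra{\phi}\gamma\ket{\phi}$ and bounds numerator and denominator separately by $\bra{\phi}\rho\ket{\phi}\le\braket{\phi|\phi}$ and $\bra{\phi}\gamma\ket{\phi}\ge g_{d-1}\braket{\phi|\phi}$, which is exactly your $\rho\le\mathbbm{1}$ and $g_i\ge g_{d-1}$ step in the eigenbasis. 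Where you genuinely diverge is the uniqueness argument. The paper first invokes convexity of $\RT$ to reduce to pure candidates $\sigma=\ketbra{\psi}{\psi}$, then runs a contradiction: the optimal witness $\ket{\phi'}$ for such a $\sigma$ cannot be $\ket{d-1}$, so $\bra{\phi'}\gamma\ket{\phi'}>g_{d-1}$ by non-degeneracy, forcing the impossible $|\langle\phi'|\psi\rangle|>1$. You instead trace the equality conditions in your two-step chain directly for an arbitrary (possibly mixed) $\rho$: saturation of the second inequality kills every $c_i$ with $g_i>g_{d-1}$, localising the witness onto $\ket{d-1}$, and saturation of the first (i.e.\ $\bra{d-1}\rho\ket{d-1}=1$ with $\mathrm{Tr}\{\rho\}=1$ and $\rho\ge 0$) forces $\rho=\ketbra{d-1}{d-1}$ outright. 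This is cleaner: it avoids the detour through convexity and the slightly awkward $|\langle\phi'|\psi\rangle|>1$ contradiction, and it makes transparent exactly where non-degeneracy enters (your closing remark about the degenerate case is also accurate). Both arguments are valid; yours is the more economical of the two.
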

\begin{proof}
    We define $\lambda(\rho)\equiv 1+\RT(\rho)$. From the dual form of $1+\RT(\rho)$ [Eq.~(\ref{appeq:R_T_st_dual})] and Lemma~\ref{lem:S_rank_1}, we have
    \begin{equation}
        \lambda(\rho) = \max_{\ket{\phi}}\left\{\text{Tr}\{\rho S\}\Big|S = \frac{\ket{\phi}\!\bra{\phi}}{\text{Tr}\{\gamma\ket{\phi}\!\bra{\phi}\}}\right\} = \max_{\ket{\phi}}\frac{\bra{\phi}\rho\ket{\phi}}{\bra{\phi}\gamma\ket{\phi}}.
    \end{equation}
    We denote the value of $\lambda$ corresponding to the maximal $1+R_\mathrm{T}$ as $\lambda_{\max}$, such that
    \begin{equation}
        \lambda_{\max}\equiv\max_{\rho}\lambda(\rho)=\max_{\rho}\max_{\ket{\phi}}\frac{\bra{\phi}\rho\ket{\phi}}{\bra{\phi}\gamma\ket{\phi}}\le\frac{\max_{\rho}\max_{\ket{\phi}}\bra{\phi}\rho\ket{\phi}}{\min_{\ket{\phi}}\bra{\phi}\gamma\ket{\phi}}. 
    \end{equation}
    It is clear that $\max_{\rho}\max_{\ket{\phi}}\bra{\phi}\rho\ket{\phi}=1$. For the denominator, since $\bra{\phi}\gamma-g_{d-1} \mathbbm{1}\ket{\phi}\ge 0, \forall \ket{\phi}$ and the equality holds as $\ket{\phi} = \ket{d-1}$, we have $\min_{\ket{\phi}}\bra{\phi}\gamma\ket{\phi} = g_{d-1}$. Therefore, 
    \begin{equation}
        \lambda_{\max}=\max_{\rho}\max_{\ket{\phi}}\frac{\bra{\phi}\rho\ket{\phi}}{\bra{\phi}\gamma\ket{\phi}}\le \frac{1}{g_{d-1}}.
    \end{equation}
    Equality holds for $\rho = \rho^* \equiv \ket{d-1}\!\bra{d-1}$ and $\ket{\phi} = \ket{d-1}$, i.e., $\rho^*$ is the most athermal state with $1+R_\mathrm{T}(\rho^*) = \lambda_{\max} = g_{d-1}^{-1}$.

    Now we prove $\rho = \ket{d-1}\!\bra{d-1}$ is the only most athermal state. 
    Due to the convexity of $R_\mathrm{T}$, the most athermal state can only be mixed if it is a convex combination of pure states with the same $R_\mathrm{T}$. Therefore, to prove that $\rho$ is unique, it is enough to assume that there is another pure state $\sigma \equiv \ket{\psi}\!\bra{\psi}$ with $\lambda(\sigma) = g_{d-1}^{-1}$ and to show that $\sigma$ does not exist. Since $\sigma\neq \rho$, $\ket{\psi}\neq\ket{d-1}$.
    Let $\ket{\phi'}$ be the optimal state that gives
    \begin{equation}
        \frac{\bra{\phi'}\sigma\ket{\phi'}}{\bra{\phi'}\gamma\ket{\phi'}} = \frac{|\langle \phi'|\psi\rangle|^2}{\bra{\phi'}\gamma\ket{\phi'}} = \frac{1}{g_{d-1}}.
        \label{appeq:1/g_n_equality}
    \end{equation}
    We note that $\ket{\phi'}\neq \ket{d-1}$, since $\dfrac{|\langle d-1|\psi\rangle|^2}{\bra{d-1}\gamma\ket{d-1}} = \dfrac{|\langle d-1|\psi\rangle|^2}{g_{d-1}} < \dfrac{1}{g_{d-1}}$. However, since $\ket{\phi'}\neq\ket{d-1}$, we have $\bra{\phi'}\gamma\ket{\phi'}> g_{d-1}$ ($\ket{d-1}$ is the only state satisfying $\bra{d-1}\gamma\ket{d-1} = g_{d-1}$
    because $\gamma$ is non-degenerate). Eq.~(\ref{appeq:1/g_n_equality}) therefore requires $|\langle \phi'|\psi\rangle|>1$. Because such a state $\sigma$ cannot exist, $\rho = \ket{d-1}\!\bra{d-1}$ is the only state with the maximal $R_\mathrm{T}$, for which  $1+\RT(\rho) = g_{d-1}^{-1}$.
\end{proof}
We remark that when the Gibbs state $\gamma$ is defined at positive temperatures, the most athermal state $\rho^*\equiv\ketbra{d-1}{d-1}$ is the most energetic state corresponding to the highest energy level. However, some systems can equilibrate at negative temperatures (see, e.~g., Ref.~\cite{Baudin2023NegativeTemperature}), in which the most athermal state $\rho^*$ will be the ground state.

We further show that $R_\mathrm{T}$ can be related with commonly used distances, which is given by a continuity property. In the context of resource theories, the continuity of resource monotone refers to the property of the monotone guaranteeing that any two states that are close in some chosen distance are close in terms of their resources according to this monotone.
The continuity of robustness measures has been studied in Ref.~\cite{schluck2023continuity} and the continuity of $R_\mathrm{T}$ has been explicitly proven in Lemma~F.3 in Ref.~\cite{hsieh2020resource}. Here, we provide a self-contained proof for the continuity of $R_\mathrm{T}$, which improves the bound of Ref.~\cite{hsieh2020resource} by a factor $1/2$.
\begin{theoremApp}[Continuity of $R_\mathrm{T}$]\label{lem:ROA_continuity}
    For any two states $\rho$, $\sigma$ and a given Gibbs state $\gamma$, the following inequality holds:
    \begin{equation}
        \big|\RT(\rho) - \RT(\sigma)\big|\le \frac{1}{2 g_{\min}}||\rho-\sigma||_1,
    \end{equation}
    where $||A||_1 := \mathrm{Tr}\{\sqrt{A^\dagger A}\}$ is the trace norm and $g_{\min}$ is the minimal population of energy eigenbasis of $\gamma$.
\end{theoremApp}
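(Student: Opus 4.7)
The plan is to work with the dual SDP characterisation of $R_\mathrm{T}$ given in Eq.~(\ref{appeq:R_T_st_dual}), together with the Jordan--Hahn decomposition of $\rho-\sigma$ to pick up the factor of $1/2$. First I would assume without loss of generality that $R_\mathrm{T}(\rho)\ge R_\mathrm{T}(\sigma)$ and let $S^\star$ be an optimal operator in the dual SDP for $\rho$, so that $1+R_\mathrm{T}(\rho)=\mathrm{Tr}\{S^\star\rho\}$ with $S^\star\ge 0$ and $\mathrm{Tr}\{S^\star\gamma\}=1$. Since $S^\star$ is feasible (though not necessarily optimal) for the dual SDP of $\sigma$, one has $1+R_\mathrm{T}(\sigma)\ge \mathrm{Tr}\{S^\star\sigma\}$, and therefore
\begin{equation}
|R_\mathrm{T}(\rho)-R_\mathrm{T}(\sigma)|\le \mathrm{Tr}\{S^\star(\rho-\sigma)\}.
\end{equation}

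Next I would bound the operator norm of $S^\star$. Because $\gamma\ge g_{\min}\mathbbm{1}$ and $S^\star\ge 0$, the constraint $\mathrm{Tr}\{S^\star\gamma\}=1$ gives $1\ge g_{\min}\mathrm{Tr}\{S^\star\}\ge g_{\min}\lVert S^\star\rVert_\infty$, hence $\lVert S^\star\rVert_\infty\le 1/g_{\min}$. (Alternatively, one can invoke Lemma~\ref{lem:S_rank_1} to take $S^\star=\ket{\phi}\!\bra{\phi}/\bra{\phi}\gamma\ket{\phi}$ and use $\bra{\phi}\gamma\ket{\phi}\ge g_{\min}$; either argument gives the same bound.)

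To obtain the crucial factor of $1/2$, I would write the Jordan--Hahn decomposition $\rho-\sigma=P-Q$ with $P,Q\ge 0$ of orthogonal support, so that $\lVert\rho-\sigma\rVert_1=\mathrm{Tr}\{P\}+\mathrm{Tr}\{Q\}$ and $\mathrm{Tr}\{P\}=\mathrm{Tr}\{Q\}=\tfrac{1}{2}\lVert\rho-\sigma\rVert_1$ (using $\mathrm{Tr}\{\rho\}=\mathrm{Tr}\{\sigma\}$). Since $S^\star\ge 0$ and $Q\ge 0$ imply $\mathrm{Tr}\{S^\star Q\}\ge 0$, we get
\begin{equation}
\mathrm{Tr}\{S^\star(\rho-\sigma)\}=\mathrm{Tr}\{S^\star P\}-\mathrm{Tr}\{S^\star Q\}\le \mathrm{Tr}\{S^\star P\}\le \lVert S^\star\rVert_\infty\,\mathrm{Tr}\{P\}\le \frac{1}{2 g_{\min}}\lVert\rho-\sigma\rVert_1,
\end{equation}
where the second inequality is the standard trace/operator-norm Hölder bound for $P\ge 0$, and the last step combines the operator-norm bound with $\mathrm{Tr}\{P\}=\tfrac{1}{2}\lVert\rho-\sigma\rVert_1$. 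Combining with the first display yields the claim.

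The main obstacle is the factor $1/2$: a naive application of Hölder's inequality to $\mathrm{Tr}\{S^\star(\rho-\sigma)\}$ only gives $\lVert S^\star\rVert_\infty\,\lVert\rho-\sigma\rVert_1\le \lVert\rho-\sigma\rVert_1/g_{\min}$, which is the bound reported in Ref.~\cite{hsieh2020resource}. Exploiting positivity of $S^\star$ on the negative part $Q$ of the Jordan--Hahn decomposition is precisely what removes this suboptimal factor and yields the improved constant $1/(2g_{\min})$.
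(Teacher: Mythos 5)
Your proof is correct, and its skeleton is the same as the paper's: bound $|R_\mathrm{T}(\rho)-R_\mathrm{T}(\sigma)|$ by $\mathrm{Tr}\{S^\star(\rho-\sigma)\}$ via the dual SDP and feasibility of $S^\star$ for $\sigma$, then control $\lVert S^\star\rVert_\infty$ by $1/g_{\min}$ and exploit $\mathrm{Tr}\{\rho-\sigma\}=0$ to get the factor $1/2$. The execution of the last two steps differs slightly, and your version is marginally more self-contained: the paper first invokes its Lemma~\ref{lem:S_rank_1} to take $S^\star$ rank-$1$, writes $S^\star=\lVert S^\star\rVert_\infty\ket{\psi_\rho}\!\bra{\psi_\rho}$, and then bounds $|\bra{\psi_\rho}(\rho-\sigma)\ket{\psi_\rho}|\le\max_k|\mu_k|\le\tfrac12\lVert\rho-\sigma\rVert_1$ via the eigenvalues $\mu_k$ of $\rho-\sigma$; you instead bound $\lVert S^\star\rVert_\infty\le\mathrm{Tr}\{S^\star\}\le 1/g_{\min}$ using only positivity of $S^\star$ and $\gamma\ge g_{\min}\mathbbm{1}$, and obtain the $1/2$ from the Jordan--Hahn decomposition $\rho-\sigma=P-Q$ by discarding the nonnegative term $\mathrm{Tr}\{S^\star Q\}$. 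The two mechanisms are equivalent in substance (both rest on tracelessness of $\rho-\sigma$), but yours needs no structural information about the optimal dual variable, so it would survive even if the rank-$1$ lemma were unavailable. No gaps.
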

\begin{proof}
    Consider the dual SDP of $R_\mathrm{T}$ of states [Eq.~(\ref{appeq:R_T_st_dual})]. We denote the optimal operators as $S^*_\rho$ and $S^*_\sigma$, such that
    \begin{align}
        \mathrm{Tr}\{\rho S^*_\rho\} &= 1 + \RT(\rho), \label{appeq:ROA_rho}\\
        \mathrm{Tr}\{\sigma S^*_\sigma\} &= 1 + \RT(\sigma).\label{appeq:ROA_sigma}
    \end{align}
    Without loss of generality, we assume that $\RT(\rho)\ge \RT(\sigma)$, i.e.,~$\mathrm{Tr}\{\rho S^*_\rho\}\ge \mathrm{Tr}\{\sigma S^*_\sigma\}$.
    We then have
    \begin{align}
        \mathrm{Tr}\{\rho S^*_\rho\} - \mathrm{Tr}\{\sigma S^*_\sigma\} \le \mathrm{Tr}\{(\rho-\sigma)S^*_\rho\}.
    \end{align}
    By assumption, both sides of the above inequality are non-negative. We can take the absolute values of both sides and obtain
    \begin{align}
        \big|\mathrm{Tr}\{\rho S^*_\rho\} - \mathrm{Tr}\{\sigma S^*_\sigma\} \big| &\le \big| \mathrm{Tr}\{(\rho-\sigma)S^*_\rho\} \big| \nonumber\\
        &= \big| \bra{\psi_\rho}(\rho-\sigma)\ket{\psi_\rho}\big | \cdot ||S^*_\rho||_\infty \nonumber \\
        &\le \big| \bra{\psi_\rho}(\rho-\sigma)\ket{\psi_\rho}\big| \cdot g_{\min}^{-1} \nonumber \\
        &\le \frac{1}{2}||\rho-\sigma||_1 \cdot g_{\min}^{-1}.
    \end{align}
    Here, in the second line, we used that, according to Lemma~\ref{lem:S_rank_1}, $S^*_\rho$ can be chosen as rank-1, i.e.,~$S^*_\rho\equiv ||S^*_\rho||_\infty \ket{\psi_\rho}\!\bra{\psi_\rho}$, where $\ket{\psi_\rho}\!\bra{\psi_\rho}$ is a (normalized) state and $||X||_\infty$ is the operator norm of $X$ that equals to the largest eigenvalue of $X$. In the third line, according to the dual SDP of $R_\mathrm{T}$ [Eq.~(\ref{appeq:R_T_st_dual})], $\mathrm{Tr}\{\gamma S^*_\rho\} = ||S^*_\rho||_\infty \bra{\psi_\rho}\gamma\ket{\psi_\rho} = 1$. Since $\bra{\psi_\rho}\gamma\ket{\psi_\rho}\ge g_{\min}$, we have $||S^*_\rho||_\infty \le g_{\min}^{-1}$. In the forth line, we used the fact that
    \begin{align}
        \big| \bra{\psi_\rho}(\rho-\sigma)\ket{\psi_\rho}\big| \le \max_k |\mu_k| \le \sum_{k\in\{\ell|\mu_\ell\ge 0\}} \mu_k = \sum_{k\in\{\ell|\mu_\ell\le 0\}} |\mu_k| =\frac{1}{2}||\rho-\sigma||_1,
    \end{align}
    where $\mu_k$ is the $k$th eigenvalue of $(\rho-\sigma)$ satisfying $\sum_k \mu_k = 0$ due to $\mathrm{Tr}\{\rho-\sigma\} = 0$.
    Referring to Eqs.~(\ref{appeq:ROA_rho}) and (\ref{appeq:ROA_sigma}), we obtain $|\RT(\rho) - \RT(\sigma)|\le (g_{\min}^{-1}/2)||\rho-\sigma||_1$.
\end{proof}

\subsection{Proof of Theorem~\ref{thm:ER_ROA}}\label{app:proof_Thm_ER_ROA}
In this subsection, we prove Theorem~\ref{thm:ER_ROA}, which states that the robustness of athermality $R_\mathrm{T}$ of a general quantum channel $\Lambda$ is equal to the robustness of athermality $R_\mathrm{T}[\Lambda(\gamma)]$ of its output state with the input state being the Gibbs state. In other words, the dynamical athermality of a channel is equivalent to its static athermality generating power:
\begin{theoremApp}[{Equivalence between dynamical and static $R_\mathrm{T}$ (Theorem~\ref{thm:ER_ROA})}] \label{appthm:ER_ROA}
For a general channel $\Lambda \in \mathcal{O}(A\rightarrow B)$, the following equality holds:
\begin{align}
    R_\mathrm{T}(\Lambda) = R_\mathrm{T}[\Lambda(\gamma_A)].
    \label{appeq:ER_ROA}
\end{align}
\end{theoremApp}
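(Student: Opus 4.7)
The plan is to prove $R_\mathrm{T}(\Lambda)=R_\mathrm{T}[\Lambda(\gamma_A)]$ by establishing the two inequalities separately. One direction is immediate from the definitions: if $\Lambda\le\lambda G$ for some $G\in\mathrm{GPO}(A\to B)$, then $\lambda G-\Lambda$ is completely positive and hence sends $\gamma_A$ to a positive operator, giving $\Lambda(\gamma_A)\le\lambda\,G(\gamma_A)=\lambda\gamma_B$. Thus every $\lambda$ feasible for the channel-level problem is feasible for the state-level problem at the output, which yields $R_\mathrm{T}[\Lambda(\gamma_A)]\le R_\mathrm{T}(\Lambda)$.

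The substantive direction is $R_\mathrm{T}(\Lambda)\le R_\mathrm{T}[\Lambda(\gamma_A)]$, where one must produce an explicit Gibbs-preserving witness from only a state-level inequality. Fix $\lambda^\ast\equiv 1+R_\mathrm{T}[\Lambda(\gamma_A)]$, so $\lambda^\ast\gamma_B-\Lambda(\gamma_A)\ge 0$. I would use the ansatz
\begin{equation}
G \,:=\, \tfrac{1}{\lambda^\ast}\Lambda \,+\, \bigl(1-\tfrac{1}{\lambda^\ast}\bigr)\Phi_\sigma,
\qquad \sigma \,:=\, \frac{\lambda^\ast\gamma_B-\Lambda(\gamma_A)}{\lambda^\ast-1},
\end{equation}
where $\Phi_\sigma(\cdot)=\mathrm{Tr}\{\cdot\}\sigma$. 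The choice of $\sigma$ is forced by the requirement $G(\gamma_A)=\gamma_B$, and its normalization $\mathrm{Tr}\{\sigma\}=1$ is automatic from trace preservation of $\Lambda$. Positivity $\sigma\ge 0$ is precisely the state-level robustness condition, so $\Phi_\sigma$ is a legitimate NSO and $G$ is a convex combination of two CPTP maps, hence itself CPTP. By construction $G\in\mathrm{GPO}(A\to B)$, and rearranging the ansatz gives $\lambda^\ast G-\Lambda=(\lambda^\ast-1)\Phi_\sigma\ge 0$ in the CP order, which exhibits $\lambda^\ast$ as feasible for the channel-level problem.

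The edge case $\lambda^\ast=1$ (i.e., $\Lambda(\gamma_A)=\gamma_B$) makes $\sigma$ ill-defined but is trivial: $\Lambda$ is itself a GPO, so $R_\mathrm{T}(\Lambda)=0$ and equality holds. I do not foresee a serious obstacle; the construction is essentially dictated by the two constraints (CP positivity of $\lambda^\ast G-\Lambda$ and the Gibbs-preserving condition), and the only conceptual step is recognising that the ``residual'' trace-and-replace map built from the state-level witness is exactly what upgrades the inequality to the channel level. Once both inequalities are in hand, the theorem follows, and as a byproduct one obtains the useful structural observation that extremal witnesses for $R_\mathrm{T}(\Lambda)$ can always be taken as a mixture of $\Lambda$ itself with a single NSO.
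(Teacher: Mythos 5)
Your proof is correct and follows essentially the same route as the paper: the easy direction by applying the CP map $\lambda^\ast G-\Lambda$ to $\gamma_A$, and the substantive direction by mixing $\Lambda$ with the trace-and-replace channel built from the optimal state-level witness $\sigma=(\lambda^\ast\gamma_B-\Lambda(\gamma_A))/(\lambda^\ast-1)$, which is exactly the paper's $\sigma_B^\ast$. Your explicit handling of the edge case $\lambda^\ast=1$ is a small point the paper leaves implicit.
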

\begin{proof}
    We firstly show that $R_\mathrm{T}(\Lambda) \ge R_\mathrm{T}[\Lambda(\gamma_A)]$. Let $\lambda^* \equiv 1+R_\mathrm{T}(\Lambda)$ and $G^*\in\mathrm{GPO}(A\rightarrow B)$ be the optimal variable and GPO in the definition of $R_\mathrm{T}(\Lambda)$ [Eq.~(\ref{appeq:R_T_ch_def})], respectively. We have
    \begin{align}
        \lambda^* G^* - \Lambda \ge 0 \Rightarrow (\lambda^* G^* - \Lambda)(\gamma_A) = \lambda^* \gamma_B - \Lambda(\gamma_A) \ge 0.
    \end{align}
    Therefore, $\lambda^*$ is a feasible variable for the SDP of $R_\mathrm{T}[\Lambda(\gamma_A)]$ in Eq.~(\ref{appeq:R_T_st_def}), which gives us $1+R_\mathrm{T}(\Lambda) \equiv \lambda^* \ge 1 + R_\mathrm{T}[\Lambda(\gamma_A)]$, i.e.,~$R_\mathrm{T}(\Lambda) \ge R_\mathrm{T}[\Lambda(\gamma_A)]$.

    To prove that $R_\mathrm{T}(\Lambda) \le R_\mathrm{T}[\Lambda(\gamma_A)]$, we consider the standard form of $R_\mathrm{T}[\Lambda(\gamma_A)]$ [Eq.~(\ref{appeq:ROA_st_std})] and denote the optimal state as $\sigma_B^*$ which satisfies
    \begin{align}
        \frac{1}{1+R_\mathrm{T}[\Lambda(\gamma_A)]}\Lambda(\gamma_A) + \frac{R_\mathrm{T}[\Lambda(\gamma_A)]}{1+R_\mathrm{T}[\Lambda(\gamma_A)]}\sigma_B^* = \gamma_B.
        \label{appeq:cvx_mixture_Lambda(gamma)_sigma}
    \end{align}
    Define the NSO $\Phi_{\sigma_B^*} (\cdot) := \mathrm{Tr}\{\cdot\}\sigma_B^*\in \mathrm{NSO}(A\rightarrow B)$. By Eq.~(\ref{appeq:cvx_mixture_Lambda(gamma)_sigma}), we have
    \begin{align}
        \frac{1}{1+R_\mathrm{T}[\Lambda(\gamma_A)]}\Lambda + \frac{R_\mathrm{T}[\Lambda(\gamma_A)]}{1+R_\mathrm{T}[\Lambda(\gamma_A)]}\Phi_{\sigma_B^*} \in \mathrm{GPO}(A\rightarrow B).
    \end{align}
    According to the standard form of $R_\mathrm{T}(\Lambda)$ in Eq.~(\ref{appeq:ROA_ch_std}), we obtain $R_\mathrm{T}(\Lambda) \le R_\mathrm{T}[\Lambda(\gamma_A)]$. Hence, Eq.~(\ref{appeq:ER_ROA}) holds.
\end{proof}

\subsection{Proof of Theorem~\ref{thm:ER_R}}\label{app:proof_Thm_ER_R}
In this subsection, we prove Theorem~\ref{thm:ER_R}, which states that the joint resource measure $R$ of a general quantum channel is equal to the $R_\mathrm{T}$ of its output state when it is locally applied on the thermofield double state $\tilde{\gamma}_{AA'}$~\cite{Takahashi1996ThermoField} defined as $\tilde{\gamma}_{AA'}\equiv \ket{\tilde{\gamma}}\!\bra{\tilde{\gamma}}_{AA'}$ with $\ket{\tilde{\gamma}}_{AA'} \equiv \sum_{i}\sqrt{g_i}\ket{i}_{A}\otimes\ket{i}_{A'}$
where $g_i$ is the population of $\gamma_A$ of the $i$th energy eigenstate $\ket{i}_A$, satisfying $\mathrm{Tr}_{\{A,A'\}\text{\textbackslash} X}\{\tilde{\gamma}_{AA'}\} = \gamma_{X}$ for $X\in\{A, A'\}$. Specifically, we have the following more general theorem:
\begin{theoremApp}[Equivalence between the dynamical resource $R$ and static $R_\mathrm{T}$] \label{appthm:ER_R}
    For a general channel $\Lambda\in\mathcal{O}(A'\rightarrow B)$, the following equality holds:
    \begin{align}
        R(\Lambda||\Gamma) = R_\mathrm{T}[\mathcal{I}\otimes\Lambda(\tau_{AA'})],
        \label{appeq:ER_R}
    \end{align}
    where $A'$ is a copy of system $A$ and $\tau_{AA'}$ is a general purification of $\gamma_A$ on system $AA'$ satisfying $\mathrm{Tr}_{A'}\{\tau_{AA'}\} = \gamma_A$.
\end{theoremApp}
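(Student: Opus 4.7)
The plan is to prove the equality via the Choi-state representation of the joint resource $R$ and the static $R_\mathrm{T}$, by showing that their defining operator inequalities are equivalent whenever $\gamma_A$ is full rank. Writing Eq.~(\ref{appeq:R_def}) in terms of Choi states gives $1 + R(\Lambda||\Gamma) = \min\{\lambda \mid J_\Lambda \le \lambda J_\Gamma\}$, where $J_X \equiv \mathcal{I}\otimes X(\ket{\phi^+}\!\bra{\phi^+})$ and, in particular, $J_\Gamma = \mathbbm{1}_A \otimes \gamma_B$. On the other hand, Eq.~(\ref{appeq:R_T_st_def}) reads $1 + R_\mathrm{T}[\mathcal{I}\otimes\Lambda(\tau_{AA'})] = \min\{\mu \mid \mathcal{I}\otimes\Lambda(\tau_{AA'}) \le \mu\,\gamma_A\otimes\gamma_B\}$, since the ambient Gibbs state on $AB$ is $\gamma_A\otimes\gamma_B$.

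Next, I would parametrise any purification $\ket{\tau}_{AA'}$ of $\gamma_A$ as a local invertible transformation of $\ket{\phi^+}$: by uniqueness of purifications up to local unitaries on the purifying system, there exists a unitary $U$ on $A'$ with $\ket{\tau}_{AA'} = (\sqrt{\gamma_A}\otimes U)\ket{\phi^+}$. Applying the ricochet identity $(\mathbbm{1}_A\otimes U)\ket{\phi^+} = (U^T\otimes \mathbbm{1}_{A'})\ket{\phi^+}$ and using $U^T U^* = \mathbbm{1}$, a direct computation gives
\begin{align}
\mathcal{I}\otimes\Lambda(\tau_{AA'}) = (W\otimes\mathbbm{1}_B)\, J_\Lambda\, (W^\dagger\otimes\mathbbm{1}_B), \quad W := \sqrt{\gamma_A}\,U^T .
\end{align}
Specialising to $\Lambda=\Gamma$ recovers the expected identity $\gamma_A\otimes\gamma_B = (W\otimes\mathbbm{1}_B)\, J_\Gamma\, (W^\dagger\otimes\mathbbm{1}_B)$, which can also be checked directly since $\mathrm{Tr}_{A'}\tau_{AA'} = \gamma_A$.

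The final step exploits that $W$ is invertible: full-rankness of $\gamma_A$ makes $\sqrt{\gamma_A}$ invertible, so $W\otimes\mathbbm{1}_B$ is invertible, and congruence by an invertible operator preserves the positive-semidefinite order in both directions. Hence $\lambda J_\Gamma - J_\Lambda \ge 0$ if and only if $\lambda\,\gamma_A\otimes\gamma_B - \mathcal{I}\otimes\Lambda(\tau_{AA'}) \ge 0$, so the two minimisations in the previous paragraph have identical feasible sets and therefore identical optima, yielding $R(\Lambda||\Gamma) = R_\mathrm{T}[\mathcal{I}\otimes\Lambda(\tau_{AA'})]$.

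The main subtlety is the ``only if'' half of this last equivalence: without full-rankness of $\gamma_A$ (and hence invertibility of $W$), one could only conclude $R(\Lambda||\Gamma) \ge R_\mathrm{T}[\mathcal{I}\otimes\Lambda(\tau_{AA'})]$, and the equality could fail. The argument also makes manifest why the statement is independent of the particular purification chosen: the ambiguity in $U$ is absorbed entirely into the congruence $X\mapsto(W\otimes\mathbbm{1})\,X\,(W^\dagger\otimes\mathbbm{1})$ and therefore does not affect the associated SDP value.
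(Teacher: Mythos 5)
Your proposal is correct and follows essentially the same route as the paper's proof: both express $R(\Lambda||\Gamma)$ via the Choi operator with $J_\Gamma=\mathbbm{1}_A\otimes\gamma_B$, use the ricochet identity to write a general purification of $\gamma_A$ as $(\gamma_A^{1/2}U^T\otimes\mathbbm{1}_{A'})\ket{\phi^+}$, and exploit that congruence by the invertible operator $\gamma_A^{1/2}U^T$ (invertible because $\gamma_A$ is full rank) preserves the semidefinite order in both directions, making the two SDP feasible sets coincide. Your explicit remark on where full-rankness enters matches the implicit use of the same fact in the paper's first equivalence.
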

\begin{proof}
    Consider the Choi operator $J_\Lambda$
    and note that the Choi operator of $\Gamma$ is $J_\Gamma \equiv \mathbbm{1}_A\otimes\gamma_B$. We have 
    \begin{align}
        J_\Lambda \le \lambda J_\Gamma &\Leftrightarrow (\gamma_A^{1/2}U_A\otimes \mathbbm{1}_B)J_\Lambda (U_A^\dagger\gamma_A^{1/2}\otimes \mathbbm{1}_B) \le \lambda (\gamma_A^{1/2}U_A\otimes \mathbbm{1}_B)J_\Gamma(U_A^\dagger\gamma_A^{1/2}\otimes \mathbbm{1}_B)\\
        &\Leftrightarrow \mathcal{I}_A\otimes\Lambda[(\gamma_A^{1/2}U_A\otimes\mathbbm{1}_{A'})\ket{\phi^+}\!\bra{\phi^+}_{AA'}(U_A^\dagger\gamma_A^{1/2}\otimes\mathbbm{1}_{A'})] \le \lambda \gamma_A\otimes\gamma_B\\
        &\Leftrightarrow \mathcal{I}_A\otimes\Lambda[\mathcal{I}_A\otimes \mathcal{U}_{A'}(\tilde{\gamma}_{AA'})]\le \lambda \gamma_A\otimes\gamma_B\\
        &\Leftrightarrow \mathcal{I}_A\otimes\Lambda(\tau_{AA'}) \le \lambda \gamma_A\otimes\gamma_B, 
    \end{align}
    where $U_A$ is a general unitary operator on system $A$, $\tilde{\gamma}_{AA'} \equiv \ket{\tilde{\gamma}}\!\bra{\tilde{\gamma}}_{AA'}$ with $\ket{\tilde{\gamma}}_{AA'}=(\gamma_A^{1/2}\otimes\mathbbm{1}_{A'})\ket{\phi^+}_{AA'} = (\mathbbm{1}_A\otimes\gamma_{A'}^{1/2})\ket{\phi^+}_{AA'}$ is the thermofield double state, $\mathcal{U}_{A'}(\cdot) \equiv U_{A'}^T(\cdot)(U_{A'}^T)^\dagger$ is a unitary channel acting on system $A'$ and $\tau_{AA'}\equiv\mathcal{I}_A\otimes\mathcal{U}_{A'}(\tilde{\gamma}_{AA'})$.
    The third equivalence relation is from the fact that
    $(X_A\otimes\mathbbm{1}_{A'})\ket{\phi^+}_{AA'} = (\mathbbm{1}_A\otimes X_{A'}^T)\ket{\phi^+}_{AA'}$,
    for any operator $X$ on system $A$ (or $A'$).
    Since any purification of $\gamma_A$ on system $AA'$ can be written as $\mathcal{I}_A\otimes \mathcal{V}_{A'}(\tilde{\gamma}_{AA'})$ with a unitary channel $\mathcal{V}_{A'}$~\cite{kirkpatrick2006schrodinger}, $\tau_{AA'}$ is a general purification of $\gamma_A$ satisfying $\mathrm{Tr}_{A'}\{\tau_{AA'}\} = \gamma_A$.
    According to the definitions of $R$ [Eq.~(\ref{appeq:R_def})] and $R_\mathrm{T}$ of states [Eq.~(\ref{appeq:R_T_st_def})], Eq.~(\ref{appeq:ER_R}) holds.
\end{proof}
Theorem~\ref{thm:ER_R} directly follows from Theorem~\ref{appthm:ER_R} by setting $\tau_{AA'} = \tilde{\gamma}_{AA'}$.
Furthermore, two important implications follow from Theorem~\ref{appthm:ER_R}:
\begin{corollaryApp}\label{appcorol:all_purfications_R_T}
    When systems $A$ and $B$ have the same dimension, all purifications of $\gamma_A$ on system $AA'$ have the same athermality with respect to the thermal state $\gamma_{AB}\equiv\gamma_A\otimes\gamma_B$, which is equal to $R(\mathcal{I}||\Gamma)$.
\end{corollaryApp}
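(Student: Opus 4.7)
The plan is to recognise the corollary as an immediate specialisation of Theorem~\ref{appthm:ER_R} to the identity channel, and then to exploit the fact that the left-hand side of the resulting identity does not depend on the purification chosen.

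First I would observe that the hypothesis $d_A = d_B$ is precisely what is needed so that the identity channel $\mathcal{I} \in \mathcal{O}(A' \rightarrow B)$ is well-defined (via the canonical isomorphism between $A'$ and $B$, both of which are $d_A$-dimensional). Applying Theorem~\ref{appthm:ER_R} to this particular choice $\Lambda = \mathcal{I}$ yields
\begin{equation}
    R(\mathcal{I}||\Gamma) = R_\mathrm{T}[\mathcal{I}_A \otimes \mathcal{I}(\tau_{AA'})] = R_\mathrm{T}(\tau_{AA'}),
\end{equation}
where the athermality on the right is computed with respect to $\gamma_A \otimes \gamma_B$, consistent with the reference thermal structure implicit in the joint measure $R(\,\cdot\,||\Gamma)$ (whose output thermal state is $\gamma_B$).

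Second, I would note that the left-hand side $R(\mathcal{I}||\Gamma)$ is a single number that is fixed once the channel $\mathcal{I}$ and the output thermal state $\gamma_B$ are specified; in particular, it carries no dependence on $\tau_{AA'}$. The identity above must therefore hold for \emph{every} purification $\tau_{AA'}$ of $\gamma_A$, which forces $R_\mathrm{T}(\tau_{AA'})$ to take the same value on all such purifications and to coincide with $R(\mathcal{I}||\Gamma)$.

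The argument is essentially a one-line consequence of Theorem~\ref{appthm:ER_R}, and there is no serious obstacle. The only point requiring mild care is making precise the identification of $A'$ with $B$ used to define $\mathcal{I}$: since $\mathcal{I}$ is not required to be Gibbs-preserving, one does not need $\gamma_A = \gamma_B$, and the corollary's conclusion is naturally phrased in terms of athermality with respect to $\gamma_A \otimes \gamma_B$ rather than $\gamma_A \otimes \gamma_A$.
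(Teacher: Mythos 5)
Your argument is correct and is essentially the paper's own proof: the corollary is obtained by setting $\Lambda = \mathcal{I}$ in Theorem~\ref{appthm:ER_R} and observing that the left-hand side $R(\mathcal{I}||\Gamma)$ is independent of the purification $\tau_{AA'}$. Your additional remark about the role of $d_A = d_B$ and the reference state $\gamma_A\otimes\gamma_B$ is accurate but not a departure from the paper's route.
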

\begin{proof}
    When systems $A$ and $B$ have the same dimension, the identity map $\mathcal{I}\in\mathcal{O}(A'\rightarrow B)$. The statement directly follows from Theorem~\ref{appthm:ER_R} by setting $\Lambda = \mathcal{I}$.
\end{proof}
\begin{corollaryApp}\label{appcorol:all_unitaries_R}
    When systems $A$ and $B$ have the same dimension, all unitary channels from $A'$ to $B$ have the same $R$ equal to $R(\mathcal{I}||\Gamma)$.
\end{corollaryApp}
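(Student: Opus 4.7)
The plan is to reduce Corollary~\ref{appcorol:all_unitaries_R} to the immediately preceding Corollary~\ref{appcorol:all_purfications_R_T} by applying Theorem~\ref{appthm:ER_R}. The key observation is that the image of the thermofield double state under a local unitary is again a purification of $\gamma_A$, so the dynamical resource $R$ of a unitary channel can be re-expressed as the static athermality of a purification of $\gamma_A$, and all such purifications are equally athermal by Corollary~\ref{appcorol:all_purfications_R_T}.

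Concretely, I would first take an arbitrary unitary channel $\mathcal{U}\in\mathcal{O}(A'\rightarrow B)$ (well-defined because $\dim A = \dim B$ and $A'$ is a copy of $A$). Applying Theorem~\ref{appthm:ER_R} with the thermofield-double purification $\tau_{AA'}=\tilde{\gamma}_{AA'}$ yields
\begin{align}
R(\mathcal{U}\|\Gamma) = R_\mathrm{T}\!\left[\mathcal{I}\otimes\mathcal{U}(\tilde{\gamma}_{AA'})\right].
\end{align}

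Next, I would check that $\mathcal{I}\otimes\mathcal{U}(\tilde{\gamma}_{AA'})$ is a purification of $\gamma_A$ on $AB$. Purity is preserved under unitary action, and partial tracing over $B$ gives $\mathrm{Tr}_B[\mathcal{I}\otimes\mathcal{U}(\tilde{\gamma}_{AA'})] = \mathrm{Tr}_{A'}[\tilde{\gamma}_{AA'}] = \gamma_A$ by the cyclic/unitary invariance of the partial trace on $A'$. Hence this pure state lies in the set of purifications of $\gamma_A$ on $AB$.

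Finally, Corollary~\ref{appcorol:all_purfications_R_T} tells us that every such purification has the same $R_\mathrm{T}$ with respect to $\gamma_{AB}$, equal to $R(\mathcal{I}\|\Gamma)$. Combining with the displayed equation gives $R(\mathcal{U}\|\Gamma) = R(\mathcal{I}\|\Gamma)$, which is the claim. There is no real obstacle here: the only point worth stating carefully is the identification of $\mathcal{I}\otimes\mathcal{U}(\tilde{\gamma}_{AA'})$ as a purification of $\gamma_A$ on the bipartite system $AB$ (rather than on $AA'$), which is where the assumption $\dim A = \dim B$ is used.
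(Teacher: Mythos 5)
Your proposal is correct and follows essentially the same route as the paper: apply Theorem~\ref{appthm:ER_R} to write $R(\mathcal{U}\|\Gamma)$ as the static athermality of $\mathcal{I}\otimes\mathcal{U}(\tilde{\gamma}_{AA'})$, note this is a purification of $\gamma_A$, and invoke Corollary~\ref{appcorol:all_purfications_R_T}. Your explicit check that the output is a purification on $AB$ (rather than $AA'$) is in fact slightly more careful than the paper's wording.
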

\begin{proof}
    By Theorem~\ref{appthm:ER_R}, $R(\mathcal{U}||\Gamma) = R_\mathrm{T}[\mathcal{I}\otimes\mathcal{U}(\tau_{AA'})]$, for all unitary channels $\mathcal{U}\in\mathcal{O}(A'\rightarrow B)$. Since $\mathcal{I}\otimes\mathcal{U}(\tau_{AA'})$ is a purification of $\gamma_A$ on $AA'$, $R_\mathrm{T}[\mathcal{I}\otimes\mathcal{U}(\tau_{AA'})] = R(\mathcal{I}||\Gamma)$ due to Corollary~\ref{appcorol:all_purfications_R_T}. Thus, $R(\mathcal{U}||\Gamma) = R(\mathcal{I}||\Gamma)$, for all unitary channels $\mathcal{U}\in\mathcal{O}(A'\rightarrow B)$.
\end{proof}

\subsection{The most resourceful channels measured by \texorpdfstring{$R$}{R} (Proof of Theorem~\ref{thm:upper_bound_R})}\label{app:proof_Thm_upper_bound_R}
In this section, we find the most resourceful channels according to the measure $R$, which will have the largest combined resource of signalling and athermality. Before we present the result, we prove the following lemma first:
\begin{lemmaApp}\label{lem:barely_positive}
    Given two operators $X$ and $Y$ where $Y$ is positive, the following statements are equivalent:
    \begin{enumerate}
        \item $\min\{f|X\le fY\} = 1$.
        \item $\exists \ket{\psi},\, \bra{\psi}Y - X\ket{\psi} = 0$ and $Y - X \ge 0$.
    \end{enumerate}
\end{lemmaApp}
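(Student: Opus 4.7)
The plan is to reduce the problem to a statement about the largest eigenvalue of a Hermitian operator by conjugating with $Y^{-1/2}$. Since $Y>0$ by assumption (throughout the paper $\gamma$ is full-rank), the map $Z := Y^{-1/2} X Y^{-1/2}$ is well-defined and Hermitian, and the operator inequality $X \le f Y$ is equivalent to $Z \le f \mathbbm{1}$. Consequently,
\begin{equation}
\min\{f \mid X \le f Y\} \;=\; \lambda_{\max}(Z),
\end{equation}
so the lemma reduces to showing that $\lambda_{\max}(Z)=1$ if and only if $\mathbbm{1}-Z\ge 0$ together with the existence of a vector $|\phi\rangle$ with $\langle\phi|(\mathbbm{1}-Z)|\phi\rangle=0$. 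This is a standard statement about the top of the spectrum of a Hermitian operator, and the correspondence $|\phi\rangle = Y^{1/2}|\psi\rangle$ will translate it back to the original variables.

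For the direction (1)$\Rightarrow$(2), I would start from $\lambda_{\max}(Z)=1$. This immediately gives $\mathbbm{1}-Z\ge 0$, hence $Y-X\ge 0$. Taking a top eigenvector $|\phi\rangle$ with $Z|\phi\rangle=|\phi\rangle$ and defining $|\psi\rangle := Y^{-1/2}|\phi\rangle$, one obtains
\begin{equation}
\langle\psi|(Y-X)|\psi\rangle \;=\; \langle\phi|(\mathbbm{1}-Z)|\phi\rangle \;=\; 0,
\end{equation}
which is exactly the desired witness. For (2)$\Rightarrow$(1), the hypothesis $Y-X\ge 0$ gives $\mathbbm{1}-Z\ge 0$, so $\lambda_{\max}(Z)\le 1$. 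Conversely, given $|\psi\rangle$ with $\langle\psi|(Y-X)|\psi\rangle=0$, set $|\phi\rangle := Y^{1/2}|\psi\rangle$; this is nonzero because $Y^{1/2}$ is invertible and $|\psi\rangle\neq 0$ (a zero vector is not a meaningful witness). Then $\langle\phi|(\mathbbm{1}-Z)|\phi\rangle=0$, and positivity of $\mathbbm{1}-Z$ forces $(\mathbbm{1}-Z)|\phi\rangle=0$, i.e., $\phi$ is an eigenvector of $Z$ with eigenvalue $1$. Hence $\lambda_{\max}(Z)\ge 1$, and combining the two bounds yields $\lambda_{\max}(Z)=1$, i.e., $\min\{f\mid X\le fY\}=1$.

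I expect no serious obstacle: the proof is a short linear-algebra argument, and the only subtlety is keeping track of the invertibility of $Y^{1/2}$ (which is guaranteed by $Y>0$). If one wanted to relax to $Y\ge 0$, one would need to restrict the argument to the support of $Y$ and handle the null space of $Y$ separately by an additional compatibility condition on $X$; however, this is not needed for the applications in the paper, where $Y$ is always a full-rank Gibbs state or the identity. I would therefore state and prove the lemma under $Y>0$ and move on.
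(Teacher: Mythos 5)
Your proof is correct, but it takes a genuinely different route from the paper's. You conjugate by $Y^{-1/2}$ and reduce both statements to the single spectral condition $\lambda_{\max}(Y^{-1/2}XY^{-1/2})=1$, after which the equivalence is the standard fact that a positive semidefinite Hermitian operator annihilates any vector on which its quadratic form vanishes. The paper instead argues directly with the quadratic forms and by contradiction: for $(1)\Rightarrow(2)$ it assumes $\bra{\psi}(Y-X)\ket{\psi}>0$ for all $\ket{\psi}$ and exhibits the explicitly smaller feasible value $f = 1 - \min_{\psi}\bra{\psi}(Y-X)\ket{\psi}/\max_{\psi}\bra{\psi}Y\ket{\psi}$, contradicting minimality; for $(2)\Rightarrow(1)$ it evaluates $f^*Y-X$ on the witness and finds a strictly negative expectation value whenever $f^*<1$. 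Your version is shorter and makes the structure transparent (the optimal $f$ is the top eigenvalue of $Z=Y^{-1/2}XY^{-1/2}$ and the witness is the corresponding eigenvector transported by $Y^{\pm1/2}$), at the price of invoking the spectral theorem; the paper's version stays at the level of quadratic forms but needs the existence of the extremizing vectors (compactness of the unit sphere in finite dimension). One point in your favour: you are explicit that the lemma requires $Y>0$, and indeed the statement is false for merely positive semidefinite $Y$ (take $Y=\mathrm{diag}(1,0)$ and $X=\mathrm{diag}(1/2,0)$: condition 2 holds with $\ket{\psi}=\ket{1}$, yet $\min\{f|X\le fY\}=1/2$). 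The paper's proof silently uses $\bra{\psi}Y\ket{\psi}>0$ for the witness, which is only guaranteed in the positive definite case; since every application has $Y$ built from full-rank Gibbs states, this is harmless, but your explicit hypothesis is the cleaner formulation.
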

\begin{proof}
    For the direction ($1\Rightarrow 2$), it is straightforward that $\min\{f|X\le fY\} = 1 \Rightarrow Y - X \ge 0$. To show the existence of such a state $\ket{\psi}$, we assume that $\forall \ket{\psi}, \bra{\psi}Y - X\ket{\psi} > 0$. Define $\ket{\psi_{\min}} \equiv \arg\min_{\ket{\psi}} \bra{\psi}Y - X\ket{\psi}$ and $\ket{\psi_{\max}} \equiv \arg\max_{\ket{\psi}} \bra{\psi}Y\ket{\psi}$. Then, $\forall \ket{\psi}$,
    \begin{align}
    \bra{\psi}\left[\left(1-\frac{\bra{\psi_{\min}}Y-X\ket{\psi_{\min}}}{\bra{\psi_{\max}}Y\ket{\psi_{\max}}}\right)Y-X\right]\ket{\psi} & = \bra{\psi}Y-X\ket{\psi} - \frac{\bra{\psi}Y\ket{\psi}}{\bra{\psi_{\max}}Y\ket{\psi_{\max}}}\bra{\psi_{\min}}Y-X\ket{\psi_{\min}}\\& \ge \bra{\psi}Y-X\ket{\psi} -  \bra{\psi_{\min}}Y-X\ket{\psi_{\min}} \ge 0.
    \end{align}
    Therefore,
    \begin{align}
        \min\{f|X \le fY\} \le 1 - \frac{\bra{\psi_{\min}}Y-X\ket{\psi_{\min}}}{\bra{\psi_{\max}}Y\ket{\psi_{\max}}} < 1,
    \end{align}
    which contradicts $\min\{f|X\le fY\} = 1$. The direction ($1 \Rightarrow 2$) is proven.

    For the direction ($2 \Rightarrow 1$), since $Y - X\ge 0$, we have $\min\{f|X \le fY\} \le 1$. Assume that $f^* \equiv \min\{f|X \le fY\} < 1$. For the state $\ket{\psi}$ satisfying $\bra{\psi}Y-X\ket{\psi} = 0$, we have
    \begin{align}
        \bra{\psi}f^*Y-X\ket{\psi} &= \bra{\psi}[(f^*-1)Y+Y-X]\ket{\psi} = (f^*-1)\bra{\psi}Y\ket{\psi} < 0,
    \end{align}
    because $\bra{\psi}Y\ket{\psi} > 0$ for a positive $Y$. This contradicts the assumption that $\bra{\psi}f^*Y -X\ket{\psi} \ge 0,\,\forall \ket{\psi}$. Therefore, $f^*$ must be 1. The direction ($2 \Rightarrow 1$) is proven.
\end{proof}
\begin{theoremApp}[{Channels with the highest $R$}]\label{appthm:upper_bound_R}
    For a general channel $\Lambda\in\mathcal{O}(A\rightarrow B)$, the follow inequality holds:
    \begin{align}
        R(\Lambda||\Gamma) \le \mathrm{Tr}\{\gamma_B^{-1}\} - 1,
    \end{align}
    where the equality holds when $d_A = d_B$ and $\Lambda$ is unitary.
\end{theoremApp}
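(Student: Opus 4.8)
The plan is to reduce the claim to a bound on the operator norm of a single positive operator and then to exploit trace preservation through an operator Cauchy--Schwarz inequality. Writing the joint measure through Choi operators as $1+R(\Lambda||\Gamma)=\min\{\lambda\mid J_\Lambda\le\lambda J_\Gamma\}$ with $J_\Gamma=\mathbbm{1}_A\otimes\gamma_B$, and using that $\gamma_B$ is full rank, I would conjugate by $\mathbbm{1}_A\otimes\gamma_B^{-1/2}$ to obtain
\begin{equation}
1+R(\Lambda||\Gamma)=\|M\|_\infty,\qquad M:=(\mathbbm{1}_A\otimes\gamma_B^{-1/2})\,J_\Lambda\,(\mathbbm{1}_A\otimes\gamma_B^{-1/2})\ge 0,
\end{equation}
so the entire problem becomes controlling the largest eigenvalue of $M$. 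It is worth noting at the outset that the naive estimate $\|M\|_\infty\le\mathrm{Tr}\{M\}=\mathrm{Tr}\{\gamma_B^{-1}\Lambda(\mathbbm{1}_A)\}$ is too weak, since this trace can exceed $\mathrm{Tr}\{\gamma_B^{-1}\}$; the key is that only a single effective Kraus direction contributes to the operator norm.

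Next I would fix a Kraus representation $\Lambda(\cdot)=\sum_\alpha K_\alpha(\cdot)K_\alpha^\dagger$ with $\sum_\alpha K_\alpha^\dagger K_\alpha=\mathbbm{1}_A$ and set $L_\alpha:=\gamma_B^{-1/2}K_\alpha$, so that $M=\sum_\alpha\kket{L_\alpha}\!\bbra{L_\alpha}$ with $\kket{L_\alpha}:=(\mathbbm{1}_A\otimes L_\alpha)\ket{\phi^+}$. Writing $M=VV^\dagger$ for the map $V\ket{\alpha}=\kket{L_\alpha}$ gives $\|M\|_\infty=\max_{\|c\|=1}\|\sum_\alpha c_\alpha\kket{L_\alpha}\|^2$. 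Since $\kket{L}$ is linear in $L$, one has $\sum_\alpha c_\alpha\kket{L_\alpha}=\kket{L_c}$ with $L_c:=\sum_\alpha c_\alpha L_\alpha=\gamma_B^{-1/2}K_c$ and $K_c:=\sum_\alpha c_\alpha K_\alpha$, and the identity $\bbrakket{L_c}{L_c}=\mathrm{Tr}\{L_c^\dagger L_c\}$ yields
\begin{equation}
1+R(\Lambda||\Gamma)=\max_{\|c\|=1}\mathrm{Tr}\{\gamma_B^{-1}K_c K_c^\dagger\},\qquad K_c=\sum_\alpha c_\alpha K_\alpha.
\end{equation}

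The crux is then a contraction lemma: for every unit vector $c$ the combined Kraus operator $K_c$ satisfies $K_c^\dagger K_c\le\mathbbm{1}_A$. This follows by applying the triangle and Cauchy--Schwarz inequalities, for every unit $\ket{u}\in A$, to $\|K_c\ket{u}\|^2=\|\sum_\alpha c_\alpha K_\alpha\ket{u}\|^2\le\|c\|^2\sum_\alpha\|K_\alpha\ket{u}\|^2=\bra{u}(\sum_\alpha K_\alpha^\dagger K_\alpha)\ket{u}=1$, where trace preservation enters in the last equality. Hence $K_c$ has singular values at most one, so $K_c K_c^\dagger\le\mathbbm{1}_B$, and by monotonicity of $Z\mapsto\mathrm{Tr}\{\gamma_B^{-1}Z\}$ on positive operators we get $\mathrm{Tr}\{\gamma_B^{-1}K_c K_c^\dagger\}\le\mathrm{Tr}\{\gamma_B^{-1}\}$ for every $c$, which is exactly $R(\Lambda||\Gamma)\le\mathrm{Tr}\{\gamma_B^{-1}\}-1$. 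I expect this reduction-plus-contraction step to be the main obstacle, because it requires recognizing that it is the operator norm (not the trace) of $M$ that controls $R$, that this norm is attained along a single normalized Kraus combination $K_c$, and that trace preservation forces every such combination to be a contraction. For the equality case I would take $d_A=d_B$ and $\Lambda=\mathcal{U}$ unitary: there is a single Kraus operator $U$, so $K_c=\mathrm{e}^{\mathrm{i}\theta}U$ and $K_c K_c^\dagger=\mathbbm{1}_B$, producing $\mathrm{Tr}\{\gamma_B^{-1}\}$; equivalently this matches $R(\mathcal{U}||\Gamma)=R(\mathcal{I}||\Gamma)$ via Corollary~\ref{appcorol:all_unitaries_R} together with the rank-one evaluation $\bra{\tilde\gamma}(\gamma_A^{-1}\otimes\gamma_B^{-1})\ket{\tilde\gamma}=\mathrm{Tr}\{\gamma^{-1}\}$.
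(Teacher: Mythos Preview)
Your argument is correct and complete, but it follows a genuinely different route from the paper's proof. The paper first establishes the equality case by invoking Corollary~\ref{appcorol:all_unitaries_R} (all unitaries share the same $R$) and then explicitly computing $R(\mathcal{V}||\Gamma)$ for one particular unitary via Lemma~\ref{lem:barely_positive} (the ``barely positive'' criterion). For the general upper bound, the paper constructs an auxiliary map $\Omega(J):=\mathrm{Tr}_B\{J\}\otimes\mathrm{Tr}\{\gamma_B^{-1}\}\gamma_B-J$, verifies that its Choi operator is positive (hence $\Omega$ is CP), and concludes $\Omega(J_\Lambda)\ge 0$ for every channel. By contrast, you reformulate $1+R(\Lambda||\Gamma)$ as the operator norm of $(\mathbbm{1}_A\otimes\gamma_B^{-1/2})J_\Lambda(\mathbbm{1}_A\otimes\gamma_B^{-1/2})$, reduce it via a Kraus decomposition to $\max_{\|c\|=1}\mathrm{Tr}\{\gamma_B^{-1}K_cK_c^\dagger\}$, and then use the contraction property $K_c^\dagger K_c\le\mathbbm{1}_A$ (which follows from trace preservation through Cauchy--Schwarz) to bound this by $\mathrm{Tr}\{\gamma_B^{-1}\}$. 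Your approach is more self-contained and elementary---it bypasses both Lemma~\ref{lem:barely_positive} and the explicit CP-map construction, and treats the inequality and the equality case in one stroke. The paper's CP-map certificate, on the other hand, is a structural witness that $\mathrm{Tr}\{\gamma_B^{-1}\}J_\Gamma-J_\Lambda\ge 0$ holds uniformly over all Choi operators, which may be useful in contexts where one wants a channel-independent positivity statement rather than a norm estimate.
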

\begin{proof}
    The idea of the proof is to firstly show that when $d_A = d_B \equiv d$, for any unitary channel $\mathcal{U} \in\mathcal{O}(A\rightarrow B)$, $R(\mathcal{U}||\Gamma) = \mathrm{Tr}\{\gamma_B^{-1}\} - 1$, and then to show that in general $R(\Lambda||\Gamma) \le \mathrm{Tr}\{\gamma_B^{-1}\} - 1, \,\forall\, \Lambda\in\mathcal{O}(A\rightarrow B)$.
    
    By Corollary~\ref{appcorol:all_unitaries_R}, we have $R(\mathcal{U}||\Gamma) = R(\mathcal{I}||\Gamma)$, for all unitary channels $\mathcal{U}\in\mathcal{O}(A'\rightarrow B)$. Thus, for the first step, when $d_A = d_B \equiv d$, we choose the specific unitary channel $\mathcal{V}$ such that $\mathcal{I}_A\otimes\mathcal{V}(\ket{\phi^+}\!\bra{\phi^+}_{AA'}) = \ket{\phi^+}\!\bra{\phi^+}_{AB}\equiv \sum_{i,j=0}^{d-1}\ket{i}\!\bra{j}_A\otimes\ket{i}\!\bra{j}_{B}$ where $\{\ket{k}_A\}_k$ and $\{\ket{k}_B\}_k$ are eigenbases of $\gamma_A$ and $\gamma_B$, respectively. We aim to show that $R(\mathcal{V}||\Gamma) = \mathrm{Tr}\{\gamma_B^{-1}\} - 1$. According to the definition of $R$ [Eq.~(\ref{appeq:R_def})], it means that
    \begin{align}
        \min \left\{\lambda \Big| \frac{1}{\mathrm{Tr}\{\gamma_B^{-1}\}}J_\mathcal{V} \le \lambda J_\Gamma \right\} = 
        \min \left\{\lambda \Big| \frac{1}{\mathrm{Tr}\{\gamma_B^{-1}\}}\ket{\phi^+}\!\bra{\phi^+}_{AB} \le \lambda\mathbbm{1}_A\otimes\gamma_B\right\} = 1,
    \end{align}
    By Lemma~\ref{lem:barely_positive}, it is equivalent to show that
    \begin{align*}
        \exists \ket{\psi},\, \bra{\psi}\left(\mathbbm{1}_A\otimes\gamma_B - \frac{1}{\mathrm{Tr}\{\gamma_B^{-1}\}}\ket{\phi^+}\!\bra{\phi^+}_{AB} \right)\ket{\psi} = 0 \text{ and } \mathbbm{1}_A\otimes\gamma_B - \frac{1}{\mathrm{Tr}\{\gamma_B^{-1}\}}\ket{\phi^+}\!\bra{\phi^+}_{AB} \ge 0.
    \end{align*}
    Consider the (unnormalised) state $\ket{\gamma_B^{-1}}\equiv \sum_{i=0}^{d-1} 1/g^{(B)}_i\ket{i}_A\otimes\ket{i}_B$, where $g^{(B)}_k$ is the population of the $k$th eigenstate of $\gamma_B$. We have
    \begin{align}
        \bra{\gamma_B^{-1}}\left(\mathbbm{1}_A\otimes\gamma_B - \frac{1}{\mathrm{Tr}\{\gamma_B^{-1}\}}\ket{\phi^+}\!\bra{\phi^+}_{AB} \right)\ket{\gamma_B^{-1}} = \mathrm{Tr}\{\gamma_B^{-1}\} - \frac{1}{\mathrm{Tr}\{\gamma_B^{-1}\}}\left(\mathrm{Tr}\{\gamma_B^{-1}\}\right)^2 = 0.
    \end{align}
    Besides, defining $\ket{\gamma_B^{-1/2}}\equiv \sum_{i=0}^{d-1}1/\sqrt{g_i^{(B)}}\ket{i}_A\otimes\ket{i}_B$, we have 
    \begin{align}
        \mathbbm{1}_A\otimes\gamma_B - \frac{1}{\mathrm{Tr}\{\gamma_B^{-1}\}}\ket{\phi^+}\!\bra{\phi^+}_{AB} = (\mathbbm{1}_A\otimes\gamma_B^{1/2})\left(\mathbbm{1}_A\otimes\mathbbm{1}_B - \frac{1}{\mathrm{Tr}\{\gamma_B^{-1}\}}\ket{\gamma_B^{-1/2}}\!\bra{\gamma_B^{-1/2}}\right)(\mathbbm{1}_A\otimes\gamma_B^{1/2}) \ge 0,
        \label{appeq:I_A_otimes_gamma_B>phi_AB/Tr(gamma_B^-1)}
    \end{align}
    because $\mathbbm{1}_A\otimes\mathbbm{1}_B - \ket{\gamma_B^{-1/2}}\!\bra{\gamma_B^{-1/2}}/\mathrm{Tr}\{\gamma_B^{-1}\} \ge 0$ following from the fact that $\ket{\gamma_B^{-1/2}}\!\bra{\gamma_B^{-1/2}}/\mathrm{Tr}\{\gamma_B^{-1}\}\in\mathcal{S}(AB)$ is a pure quantum state. Hence, using Lemma~\ref{lem:barely_positive}, we conclude that $R(\mathcal{V}||\Gamma) = \mathrm{Tr}\{\gamma_B^{-1}\} - 1$, and therefore, $R(\mathcal{U}||\Gamma) = \mathrm{Tr}\{\gamma_B^{-1}\} - 1$ for all unitary channels $\mathcal{U}\in\mathcal{O}(A\rightarrow B)$.

    We now prove that, without conditioning on dimensions, $R(\Lambda||\Gamma) \le \mathrm{Tr}\{\gamma^{-1}_B\}-1$ for any channel $\Lambda$ in general. By the definition of $R$ in Eq.~(\ref{appeq:R_def}), this is equivalent to proving that for any channel $\Lambda$ we have $\mathrm{Tr}\{\gamma^{-1}_B\}(\mathbbm{1}_A\otimes\gamma_B) \ge J_\Lambda$. Note that $J_\Lambda \in \mathcal{L}(A\otimes B)$ where $\mathcal{L}(X)$ denotes the set of linear operators on the Hilbert space $X$. We consider the map $\Omega: \mathcal{L}(A\otimes B) \rightarrow \mathcal{L}(A\otimes B)$ defined as
        \begin{align}
            \Omega(J) := \mathrm{Tr}_B\{J\}\otimes\mathrm{Tr}\{\gamma^{-1}_B\}\gamma_B - J.
        \end{align}
        If the map $\Omega$ is CP, we have $\Omega(J_\Lambda) = \mathbbm{1}_A\otimes\mathrm{Tr}\{\gamma^{-1}_B\}\gamma_B - J_\Lambda \ge 0$ and thus compete our proof. To show that $\Omega$ is CP, we consider the Choi operator of $\Omega$ given by $J_\Omega \equiv \mathcal{I}\otimes\Omega(\Phi^+)\in\mathcal{L}(A\otimes B\otimes A'\otimes B')$ where $\Phi^+\equiv \sum_{i,k=0}^{d_A-1}\sum_{j,\ell=0}^{d_B-1}\ket{i}\!\bra{k}_{A}\otimes\ket{j}\!\bra{\ell}_{B}\otimes\ket{i}\!\bra{k}_{A'}\otimes\ket{j}\!\bra{\ell}_{B'}$. Noticing that
        \begin{align}
            \mathcal{I}\otimes\mathrm{Tr}_{B'}\{\Phi^+\}\otimes\mathrm{Tr}\{\gamma^{-1}_{B'}\}\gamma_{B'} = \sum_{i,k=0}^{d_A-1}\ket{i}\!\bra{k}_{A}\otimes\mathbbm{1}_{B}\otimes\ket{i}\!\bra{k}_{A'}\otimes\mathrm{Tr}\{\gamma^{-1}_{B'}\}\gamma_{B'},
        \end{align}
        we have
        \begin{align}
            J_\Omega &= \mathcal{I}\otimes\mathrm{Tr}_{B'}\{\Phi^+\}\otimes\mathrm{Tr}\{\gamma^{-1}_{B'}\}\gamma_{B'} - \Phi^+\\
            &= \sum_{i,k=0}^{d_A-1}\ket{i}\!\bra{k}_{A}\otimes\mathbbm{1}_{B}\otimes\ket{i}\!\bra{k}_{A'}\otimes\mathrm{Tr}\{\gamma_{B'}^{-1}\}\gamma_{B'} - \sum_{i,k=0}^{d_A-1}\sum_{j,\ell=0}^{d_B-1}\ket{i}\!\bra{k}_{A}\otimes\ket{j}\!\bra{\ell}_{B}\otimes\ket{i}\!\bra{k}_{A'}\otimes\ket{j}\!\bra{\ell}_{B'}\\
            &=  \sum_{i,k=0}^{d_A-1}\ket{i}\!\bra{k}_{A}\otimes\ket{i}\!\bra{k}_{A'}\otimes\mathbbm{1}_{B}\otimes\mathrm{Tr}\{\gamma_{B'}^{-1}\}\gamma_{B'} - \sum_{i,k=0}^{d_A-1}\sum_{j,\ell=0}^{d_B-1}\ket{i}\!\bra{k}_{A}\otimes\ket{i}\!\bra{k}_{A'}\otimes\ket{j}\!\bra{\ell}_{B}\otimes\ket{j}\!\bra{\ell}_{B'}\\
            &= \sum_{i,k=0}^{d_A-1}\ket{i}\!\bra{k}_{A}\otimes\ket{i}\!\bra{k}_{A'}\otimes\left[\mathrm{Tr}\{\gamma_{B'}^{-1}\}(\mathbbm{1}_B\otimes\gamma_{B'}) - \sum_{j,\ell=0}^{d_B-1}\ket{j}\!\bra{\ell}_{B}\otimes\ket{j}\!\bra{\ell}_{B'}\right]\\
            &= \ket{\phi^+}\!\bra{\phi^+}_{AA'}\otimes\left[\mathrm{Tr}\{\gamma_{B'}^{-1}\}(\mathbbm{1}_B\otimes\gamma_{B'}) - \ket{\phi^+}\!\bra{\phi^+}_{BB'}\right].
        \end{align}
        Since we have shown that $R(\mathcal{I}||\Gamma) = \mathrm{Tr}\{\gamma^{-1}_B\} - 1$, the term $\mathrm{Tr}\{\gamma^{-1}_{B'}\}(\mathbbm{1}_B\otimes\gamma_{B'}) - \ket{\phi^+}\!\bra{\phi^+}_{BB'} \ge 0$. Therefore $J_\Omega \ge 0$ and $\Omega$ is CP. We thus prove that $\Omega(J_\Lambda) = \mathbbm{1}_A\otimes\mathrm{Tr}\{\gamma^{-1}_B\}\gamma_B - J_\Lambda \ge 0$ for any channel $\Lambda$, which means that $R(\Lambda||\Gamma)\le \mathrm{Tr}\{\gamma^{-1}_B\} - 1,\, \forall\, \Lambda\in\mathcal{O}(A\rightarrow B)$.
\end{proof}
Theorem~\ref{thm:upper_bound_R} is a special case of Theorem~\ref{appthm:upper_bound_R} with the additional conditions $d_A = d_B$ and $\gamma_A = \gamma_B \equiv \gamma$.
\section{Channel mutual information, \texorpdfstring{$R_\mathrm{T}$}{RT} and \texorpdfstring{$R_\mathrm{S}$}{RS}}
In this section, we consider the case shown in Fig.~\ref{fig:channal_mutual_info}.
\begin{figure}
    \centering
    \includegraphics[width=0.4\linewidth]{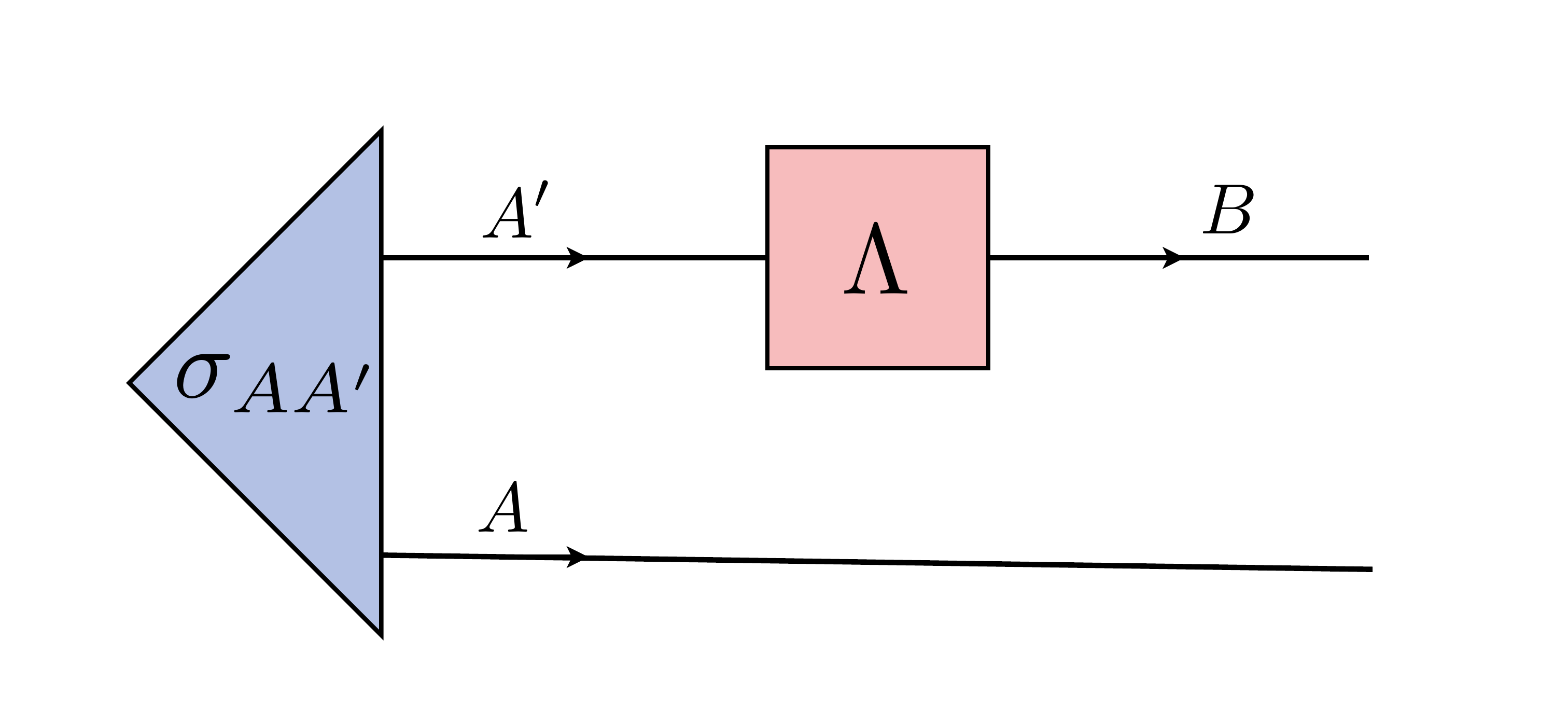}
    \caption{The scenario in the definition of the channel mutual information.}
    \label{fig:channal_mutual_info}
\end{figure}
Given system $B$ and system $A$ with its copy $A'$, the mutual information of the channel $\Lambda\in\mathcal{O}(A'\rightarrow B)$ is defined as~\cite{wilde_2013}
\begin{align}
    I(\Lambda) := \max_{\sigma_{AA'}\in\mathcal{S}(AA')}I(A:B)_{\mathcal{I}\otimes\Lambda(\sigma_{AA'})}=\max_{\sigma_{AA'}\in\mathcal{S}(AA')} D(\sigma_{AB}^\mathrm{out}||\sigma_A^\mathrm{out}\otimes\sigma_{B}^\mathrm{out}),
    \label{eq:channel_mutual_info_def}
\end{align}
where $\sigma_{AB}^\mathrm{out} \equiv \mathcal{I}\otimes\Lambda(\sigma_{AA'})$ is the output state, $\sigma_X^{\rm out} \equiv \mathrm{Tr}_{\{A,B\}\text{\textbackslash} X}\{\sigma_{AB}^{\rm out}\}$ with $X\in\{A,B\}$ and $D(\rho_1||\rho_2)\equiv \mathrm{Tr}\{\rho_1\ln\rho_1-\rho_1\ln\rho_2\}$ is the relative entropy between $\rho_1$ and $\rho_2$.
\begin{theoremApp}[Lower bound on channel mutual information by $R$ and $R_\mathrm{T}$]\label{thm:I_lower_bound_ROA}
    Given a general channel $\Lambda\in\mathcal{O}(A'\rightarrow B)$, the following bounds hold
    \begin{align}
        I(\Lambda) \ge I(A:B)_{\mathcal{I}\otimes\Lambda(\tilde{\gamma}_{AA'})} \ge 2 \left(g_{\min}^{(AB)}\right)^2\left[R(\Lambda||\Gamma) - R_\mathrm{T}(\Lambda)\right]^2.
        \label{appeq:lower_bound_mutual_info}
    \end{align}
    Here, $\tilde{\gamma}_{AA'}\equiv \ket{\tilde{\gamma}}\!\bra{\tilde{\gamma}}_{AA'}$ is the thermofield double state with
    $\ket{\tilde{\gamma}}_{AA'} \equiv \sum_{i}\sqrt{g_i}\ket{i}_A\otimes\ket{i}_{A'},$ 
    where $g_i$ is the population of $\gamma_A$ of the $i$th energy level $\ket{i}_A$, and $g_{\min}^{(AB)}$ is the smallest population for the thermal state $\gamma_{AB}\equiv \gamma_A\otimes\gamma_{B}$.
\end{theoremApp}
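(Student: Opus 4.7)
The plan is to break the chain of inequalities into two independent pieces. The first inequality, $I(\Lambda) \ge I(A:B)_{\mathcal{I}\otimes\Lambda(\tilde{\gamma}_{AA'})}$, is immediate: the channel mutual information is defined in Eq.~(\ref{eq:channel_mutual_info_def}) as a maximisation over all bipartite input states $\sigma_{AA'}$, and restricting this maximisation to the specific choice $\sigma_{AA'} = \tilde{\gamma}_{AA'}$ yields exactly the right-hand side.

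The main work goes into the second inequality. I will identify the output state $\sigma_{AB}^{\rm out} \equiv \mathcal{I}\otimes\Lambda(\tilde{\gamma}_{AA'})$ and compute its marginals. Since the thermofield double state has $\gamma_A$ on $A$ and $\gamma_{A'}$ on $A'$, the marginals of $\sigma_{AB}^{\rm out}$ are $\gamma_A$ on $A$ and $\Lambda(\gamma_{A'})$ on $B$, so that
\begin{align}
I(A:B)_{\sigma_{AB}^{\rm out}} = D\bigl(\sigma_{AB}^{\rm out}\,\big\|\,\gamma_A\otimes\Lambda(\gamma_{A'})\bigr).
\end{align}
Setting $\tau_{AB} \equiv \gamma_A\otimes\Lambda(\gamma_{A'})$, I would apply Pinsker's inequality to obtain $D(\sigma_{AB}^{\rm out}\|\tau_{AB}) \ge \tfrac{1}{2}\|\sigma_{AB}^{\rm out}-\tau_{AB}\|_1^2$, thereby converting the relative entropy into a trace distance.

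Next I would invoke the continuity property of $R_\mathrm{T}$ (Theorem~\ref{lem:ROA_continuity}), applied on the composite system $AB$ with reference thermal state $\gamma_{AB} = \gamma_A\otimes\gamma_B$, to get the reverse-direction bound
\begin{align}
\|\sigma_{AB}^{\rm out}-\tau_{AB}\|_1 \ge 2 g_{\min}^{(AB)}\bigl|R_\mathrm{T}(\sigma_{AB}^{\rm out}) - R_\mathrm{T}(\tau_{AB})\bigr|,
\end{align}
where both robustness values are taken with respect to $\gamma_{AB}$. It remains to identify the two static robustnesses with the claimed dynamical quantities. For $\sigma_{AB}^{\rm out}$ this is exactly Theorem~\ref{thm:ER_R}, which gives $R_\mathrm{T}(\sigma_{AB}^{\rm out}) = R(\Lambda\|\Gamma)$. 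For $\tau_{AB} = \gamma_A\otimes\Lambda(\gamma_{A'})$, I would combine Theorem~\ref{thm:ER_ROA} with the multiplicativity of $1+R_\mathrm{T}$ on product states and the fact that $R_\mathrm{T}(\gamma_A) = 0$ to conclude $R_\mathrm{T}(\tau_{AB}) = R_\mathrm{T}[\Lambda(\gamma_{A'})] = R_\mathrm{T}(\Lambda)$. Since $\Gamma$ is itself a GPO, one has $R(\Lambda\|\Gamma) \ge R_\mathrm{T}(\Lambda)$, so the absolute value can be dropped. Chaining everything together yields Eq.~(\ref{appeq:lower_bound_mutual_info}).

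The conceptual steps are all in place; I expect no single step to be genuinely hard. The most delicate point is the bookkeeping around the reference Gibbs state when comparing robustnesses on $AB$ versus on $B$ alone, which is precisely where multiplicativity of $R_\mathrm{T}$ is needed to bridge Theorems~\ref{thm:ER_ROA} and~\ref{thm:ER_R}. A secondary subtlety is that Pinsker's inequality and the continuity bound are both generally loose, so while the resulting inequality is valid, one should not expect tightness --- this is consistent with the quantum switch example in Fig.~\ref{fig:switch_example}, where the analogous lower bound becomes tight only in restricted regimes.
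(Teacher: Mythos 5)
Your proposal is correct and follows essentially the same route as the paper's proof: restrict the maximisation to $\tilde{\gamma}_{AA'}$, rewrite the mutual information as $D(\sigma_{AB}^{\rm out}\|\gamma_A\otimes\Lambda(\gamma_{A'}))$, apply Pinsker's inequality, then the continuity of $R_\mathrm{T}$ on $AB$, and finally identify the two static robustnesses via Theorems~\ref{thm:ER_R} and~\ref{thm:ER_ROA} together with multiplicativity. All steps, including the bookkeeping of the reference Gibbs state $\gamma_{AB}$, match the paper's argument.
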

\begin{proof}
    The first inequality directly follows from the definition of the channel mutual information [Eq.~(\ref{eq:channel_mutual_info_def})]. To prove the second inequality, we denote $\rho_{AB}\equiv \mathcal{I}\otimes\Lambda(\tilde{\gamma}_{AA'})$. Therefore,
    \begin{align}
        I(A:B)_{\rho_{AB}} &= D(\rho_{AB}||\rho_A\otimes\rho_{B}) \\
        &= D(\rho_{AB}||\gamma_A\otimes\rho_{B})\\
        &\ge \frac{1}{2}||\rho_{AB} - \gamma_A\otimes\rho_{B}||_1^2\\
        &= \frac{1}{2}||\mathcal{I}\otimes\Lambda(\tilde{\gamma}_{AA'}) - \gamma_A\otimes\rho_{B}||_1^2\\
        &\ge 2 \left(g_{\min}^{(AB)}\right)^2\Big|R_\mathrm{T}[\mathcal{I}\otimes\Lambda(\tilde{\gamma}_{AA'})]-R_\mathrm{T}[\gamma_A\otimes\Lambda(\gamma_{A'})]\Big|^2,\\
        &= 2 \left(g_{\min}^{(AB)}\right)^2\left[R(\Lambda||\Gamma) - R_\mathrm{T}(\Lambda)\right]^2
    \end{align}
    where the first inequality comes from the Pinsker's inequality (Theorem~11.9.1 in Ref.~\cite{wilde_2013}), the second inequality follows from the continuity of $R_\mathrm{T}$ (Theorem~\ref{lem:ROA_continuity}) and the last line follows from the multiplicity of $R_\mathrm{T}$, Theorem~\ref{appthm:ER_ROA} and Theorem~\ref{appthm:ER_R}. We hence obtain the lower bound in Eq.~(\ref{appeq:lower_bound_mutual_info}).
\end{proof}
\begin{theoremApp}[Upper bound of channel mutual information by $R_\mathrm{S}$]\label{thm:I_upper_bound_ROS}
    Given a general channel $\Lambda\in\mathcal{O}(A'\rightarrow B)$, the following bound holds
    \begin{align}
        R_\mathrm{S}(\Lambda) \ge I(A:B)_{\mathcal{I}\otimes\Lambda(\tilde{\gamma}_{AA'})}.
    \end{align}
\end{theoremApp}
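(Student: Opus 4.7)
The plan is to show that $R_\mathrm{S}(\Lambda)$ dominates the output mutual information by converting the operator inequality defining $R_\mathrm{S}$ into a relative-entropy bound. First, I extract from the definition of $R_\mathrm{S}(\Lambda)$ [Eq.~(\ref{appeq:R_S_def})] an optimiser $\Phi_{\sigma^*}\in\mathrm{NSO}(A'\rightarrow B)$ satisfying $\Lambda \le (1+R_\mathrm{S}(\Lambda))\Phi_{\sigma^*}$. Because complete positivity is preserved under tensoring with the identity, applying both sides to $\tilde{\gamma}_{AA'}$ yields
\begin{align*}
\rho_{AB} \;\le\; (1+R_\mathrm{S}(\Lambda))\,\gamma_A\otimes\sigma^{*},
\end{align*}
where $\rho_{AB}\equiv\mathcal{I}\otimes\Lambda(\tilde{\gamma}_{AA'})$ and I have used $\mathcal{I}\otimes\Phi_{\sigma^{*}}(\tilde{\gamma}_{AA'}) = \gamma_A\otimes\sigma^{*}$.

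Next, I would note that trace-preservation of $\Lambda$ combined with the purification property of $\tilde{\gamma}_{AA'}$ gives $\rho_A = \mathrm{Tr}_B\{\rho_{AB}\} = \gamma_A$, so the inequality above rewrites as $\rho_{AB}\le (1+R_\mathrm{S}(\Lambda))\,\rho_A\otimes\sigma^{*}$. Using the standard fact that $\rho\le c\,\sigma$ implies $D(\rho||\sigma)\le \ln c$ (i.e., the Umegaki relative entropy is upper-bounded by the max-relative entropy), this operator inequality yields $D(\rho_{AB}||\rho_A\otimes\sigma^{*})\le \ln(1+R_\mathrm{S}(\Lambda))$. Combining this with the chain-rule identity
\begin{align*}
D(\rho_{AB}||\rho_A\otimes\sigma^{*}) = I(A:B)_{\rho_{AB}} + D(\rho_B||\sigma^{*}) \;\ge\; I(A:B)_{\rho_{AB}},
\end{align*}
together with the elementary inequality $\ln(1+x)\le x$ valid for $x\ge 0$, delivers the claimed bound $R_\mathrm{S}(\Lambda)\ge I(A:B)_{\mathcal{I}\otimes\Lambda(\tilde{\gamma}_{AA'})}$.

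The main subtlety lies in the choice of reference state $\sigma^{*}$: a naive attempt would set $\sigma^{*} = \rho_B = \Lambda(\gamma_{A'})$ so that $\rho_A\otimes\sigma^{*}$ matches the product of marginals exactly, but the optimiser in the definition of $R_\mathrm{S}(\Lambda)$ need not have this particular output. The chain-rule identity circumvents this by furnishing a $\sigma^{*}$-independent lower bound $I(A:B)_{\rho_{AB}}\le D(\rho_{AB}||\rho_A\otimes\sigma^{*})$, so the $R_\mathrm{S}$-optimiser can be inserted directly without having to match marginals. The only other step requiring care is the passage from the operator inequality to the relative-entropy bound, which rests on operator monotonicity of the logarithm; this is routine and can be invoked as a black-box.
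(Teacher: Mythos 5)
Your proof is correct and follows essentially the same route as the paper: both arguments reduce to the chain of inequalities $I(A{:}B)_{\rho_{AB}} \le D(\rho_{AB}\|\gamma_A\otimes\sigma_B) \le D_{\max}(\rho_{AB}\|\gamma_A\otimes\sigma_B) = \ln[1+R_\mathrm{S}(\Lambda)] \le R_\mathrm{S}(\Lambda)$, using $\rho_A=\gamma_A$, the bound $D\le D_{\max}$, and the variational/chain-rule property of the mutual information. The only cosmetic difference is that the paper minimises over all $\sigma_B$ and identifies the minimum of $D_{\max}$ with $\ln[1+R_\mathrm{S}(\Lambda)]$ via the Choi-operator form, whereas you insert the $R_\mathrm{S}$-optimiser $\sigma^*$ directly; these are the same argument.
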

\begin{proof}
    We consider the max-relative entropy~\cite{datta2009min}
    \begin{align}
        D_{\max}(\rho||\sigma) := \ln \min\{\lambda|\rho\le\lambda\sigma\}.
        \label{appeq:max_relative_entropy}
    \end{align}
    Define $\rho_{AB} \equiv \mathcal{I}\otimes\Lambda(\tilde{\gamma}_{AA'})$. We then have
    \begin{align}
        I(A:B)_{\mathcal{I}\otimes\Lambda(\tilde{\gamma}_{AA'})} &=D(\rho_{AB}||\rho_A\otimes\rho_{B})\\
        &= \min_{\sigma_{B}} D(\rho_{AB}||\rho_A\otimes\sigma_{B})\\
        &= \min_{\sigma_{B}} D(\rho_{AB}||\gamma_A\otimes\sigma_{B})\\
        &\le \min_{\sigma_{B}} D_{\max}(\rho_{AB}||\gamma_A\otimes\sigma_{B})\\ & = \min_{\sigma_{B}}  D_{\max}(J_\Lambda||\mathbbm{1}_A\otimes \sigma_{B})\\ &= \ln[1+R_\mathrm{S}(\Lambda)],
    \end{align}
    where the second line follows from 
    \begin{align}
    D(\rho_{AB}||\rho_A\otimes\rho_{B})&=\min_{\sigma_{B}} \left[D(\rho_{AB}||\rho_A\otimes\rho_{B}) + D(\rho_{B}||\sigma_{B})\right] = \min_{\sigma_{B}} D(\rho_{AB}||\rho_A\otimes\sigma_{B}),
    \end{align}
    the inequality is due to $D_{\max}(\rho||\sigma) \ge D(\rho||\sigma),\,\forall \rho,\sigma$~\cite{datta2009min}, the fifth line is from the fact that
    \begin{align}
        J_\Lambda \le \lambda\mathbbm{1}_A\otimes\sigma_{B} \Leftrightarrow (\gamma_A^{1/2}\otimes \mathbbm{1}_B)J_\Lambda (\gamma_A^{1/2}\otimes \mathbbm{1}_B) \le \lambda(\gamma_A^{1/2}\otimes \mathbbm{1}_B)\mathbbm{1}_A\otimes\sigma_{B} (\gamma_A^{1/2}\otimes \mathbbm{1}_B) \Leftrightarrow \rho_{AB} \le \lambda \gamma_A\otimes\sigma_{B},
    \end{align}
    and the last line follows from the definition of $R_\mathrm{S}(\Lambda)$ [Eq.~(\ref{appeq:R_S_def})] with noticing that $J_{\Phi_\sigma} \equiv \mathbbm{1}\otimes\sigma$. Because $R_\mathrm{S}(\Lambda)\ge 0$, we have $R_\mathrm{S}(\Lambda)\ge \ln[1+R_\mathrm{S}(\Lambda)] \ge I(A:B)_{\mathcal{I}\otimes\Lambda(\tilde{\gamma}_{AA'})}$.
\end{proof}
\subsection{Proof of Theorem~\ref{thm:R_S_bounds}} \label{app:proof_Thm_R_S_bounds}
The upper bound in Theorem~\ref{thm:R_S_bounds} is straightforward and has been shown in the main text. For the lower bound, according to Theorem~\ref{thm:I_lower_bound_ROA} and Theorem~\ref{thm:I_upper_bound_ROS}, for a general channel $\Lambda\in\mathcal{O}(A'\rightarrow B)$, we have
\begin{align}
    R_\mathrm{S}(\Lambda) \ge 2\left(g_{\min}^{(AB)}\right)^2\left[R(\Lambda||\Gamma) - R_\mathrm{T}(\Lambda)\right]^2.
\end{align}
Since $A'$ is a copy of system $A$, the above also holds for the channel $\Lambda\in\mathcal{O}(A\rightarrow B)$.
Therefore, we proved the lower bound in Theorem~\ref{thm:R_S_bounds}.

\subsection{Signalling channels with zero athermality transmitability}\label{app:exp_signalling_ch_zero_athermality_transmitability}
Here we exemplify that a signalling channel $\Lambda\in\mathcal{O}(A'\rightarrow B)$ with $R_\mathrm{S}(\Lambda) > 0$ can have zero athermality transmitability, i.e.,~$R(\Lambda||\Gamma) - R_\mathrm{T}(\Lambda) = 0$. Define $\rho_{AB} \equiv \mathcal{I}\otimes\Lambda(\tilde{\gamma}_{AA'})$. By Theorem~\ref{appthm:ER_R} and Theorem~\ref{appthm:ER_ROA}, we have $R(\Lambda||\Gamma) = R_\mathrm{T}(\rho_{AB})$ and $R_\mathrm{T}(\Lambda) = R_\mathrm{T}(\rho_B)$ where $\rho_B\equiv \mathrm{Tr}_{A}\{\rho_{AB}\}$. According to Ref.~\cite{datta2009min}, $1 + R_\mathrm{T}(\rho) = ||\gamma^{-1/2}\rho\gamma^{-1/2}||_\infty$. Therefore,
$R_\mathrm{T}(\rho_{AB}) = R_\mathrm{T}(\rho_B)$ is equivalent to 
\begin{align}
    ||(\gamma_{A}\otimes\gamma_B)^{-1/2}\rho_{AB}(\gamma_{A}\otimes\gamma_B)^{-1/2}||_\infty = ||\gamma_B^{-1/2}\rho_B\gamma_B^{-1/2}||_\infty.
    \label{appeq:infty_norm_equality}
\end{align}
We consider the infinite temperature case, where $\gamma_{A(B)} = \mathbbm{1}/d_{A(B)}$. When $\Lambda$ is entanglement-breaking~\cite{horodecki2003entanglement}, $\rho_{AB}$ is separable. We consider a special $\rho_{AB}$ which can be written as
\begin{align}
    \rho_{AB} = \sum_{i=0}^{d_A-1}\sum_{j=0}^{d_B-1} p_{ij}\ket{i}\!\bra{i}_A\otimes\ket{\psi_j}\!\bra{\psi_j}_B,
\end{align}
where $p_{ij}\ge 0, \, \forall\, i,j$, $\sum_{i=0}^{d_A-1}\sum_{j=0}^{d_B-1}p_{ij} = 1$, and $\{\ket{i}_A\}_{i=0}^{d_A-1}$ is the eigenbasis of $\gamma_A$ while $\{\ket{\psi_j}_B\}_{j=0}^{d_B-1}$ is a general orthonormal basis for system $B$. By requiring $\rho_A \equiv \mathrm{Tr}_{B}\{\rho_{AB}\} = \gamma_A$, we have $\sum_{j=0}^{d_B-1}p_{ij} = d_A^{-1}, \,\forall\, i$. Also,
\begin{align}
    ||(\gamma_{A}\otimes\gamma_B)^{-1/2}\rho_{AB}(\gamma_{A}\otimes\gamma_B)^{-1/2}||_\infty &= d_Ad_B||\rho_{AB}||_\infty = d_Ad_B\max_{i,j}p_{ij},\\
    ||\gamma_B^{-1/2}\rho_B\gamma_B^{-1/2}||_\infty &= d_B||\rho_B||_\infty = d_B\max_{j}\sum_{i=0}^{d_A-1}p_{ij}.
\end{align}
Thus, Eq.~(\ref{appeq:infty_norm_equality}) can hold when $\exists \, j$, $p_{ij} = \max_{i,j}p_{ij}, \, \forall\, i$. For example, when $d_A = d_B = 3$, we consider the following channel
\begin{align}
    \Lambda(\cdot) &:= \mathrm{Tr}\{\ket{0}\!\bra{0}_A(\cdot)\}(\frac{1}{2}\ket{0}\!\bra{0}_B + \frac{1}{4}\ket{1}\!\bra{1}_B + \frac{1}{4}\ket{2}\!\bra{2}_B)\nonumber\\
    &\quad + \mathrm{Tr}\{\ket{1}\!\bra{1}_A(\cdot)\}(\frac{1}{2}\ket{0}\!\bra{0}_B + \frac{3}{10}\ket{1}\!\bra{1}_B + \frac{1}{5}\ket{2}\!\bra{2}_B)\nonumber\\
    &\quad + \mathrm{Tr}\{\ket{2}\!\bra{2}_A(\cdot)\}(\frac{1}{2}\ket{0}\!\bra{0}_B + \frac{1}{5}\ket{1}\!\bra{1}_B + \frac{3}{10}\ket{2}\!\bra{2}_B)
    \label{appeq:EB_Lambda}
\end{align}
The corresponding $\rho_{AB}$ is given by
\begin{align}
    \rho_{AB} &= \frac{1}{6}\ket{0}\!\bra{0}_A\otimes\ket{0}\!\bra{0}_B + \frac{1}{12}\ket{0}\!\bra{0}_A\otimes \ket{1}\!\bra{1}_B + \frac{1}{12}\ket{0}\!\bra{0}_A\otimes \ket{2}\!\bra{2}_B \nonumber\\
    &\quad + \frac{1}{6}\ket{1}\!\bra{1}_A\otimes\ket{0}\!\bra{0}_B + \frac{1}{10}\ket{1}\!\bra{1}_A\otimes \ket{1}\!\bra{1}_B + \frac{1}{15}\ket{1}\!\bra{1}_A\otimes \ket{2}\!\bra{2}_B\nonumber\\
    &\quad + \frac{1}{6}\ket{2}\!\bra{2}_A\otimes\ket{0}\!\bra{0}_B + \frac{1}{15}\ket{2}\!\bra{2}_A\otimes \ket{1}\!\bra{1}_B + \frac{1}{10}\ket{2}\!\bra{2}_A\otimes \ket{2}\!\bra{2}_B
\end{align}
which leads to $||(\gamma_{A}\otimes\gamma_B)^{-1/2}\rho_{AB}(\gamma_{A}\otimes\gamma_B)^{-1/2}||_\infty = ||\gamma_B^{-1/2}\rho_B\gamma_B^{-1/2}||_\infty = 1.5$. Thus, for the channel $\Lambda$ defined in Eq.~(\ref{appeq:EB_Lambda}), despite of $R_\mathrm{S}(\Lambda) > 0$, $R(\Lambda||\Gamma) - R_\mathrm{T}(\Lambda) = 0$. 
\begin{figure}[h]
    \centering
    \includegraphics[width=0.5\linewidth]{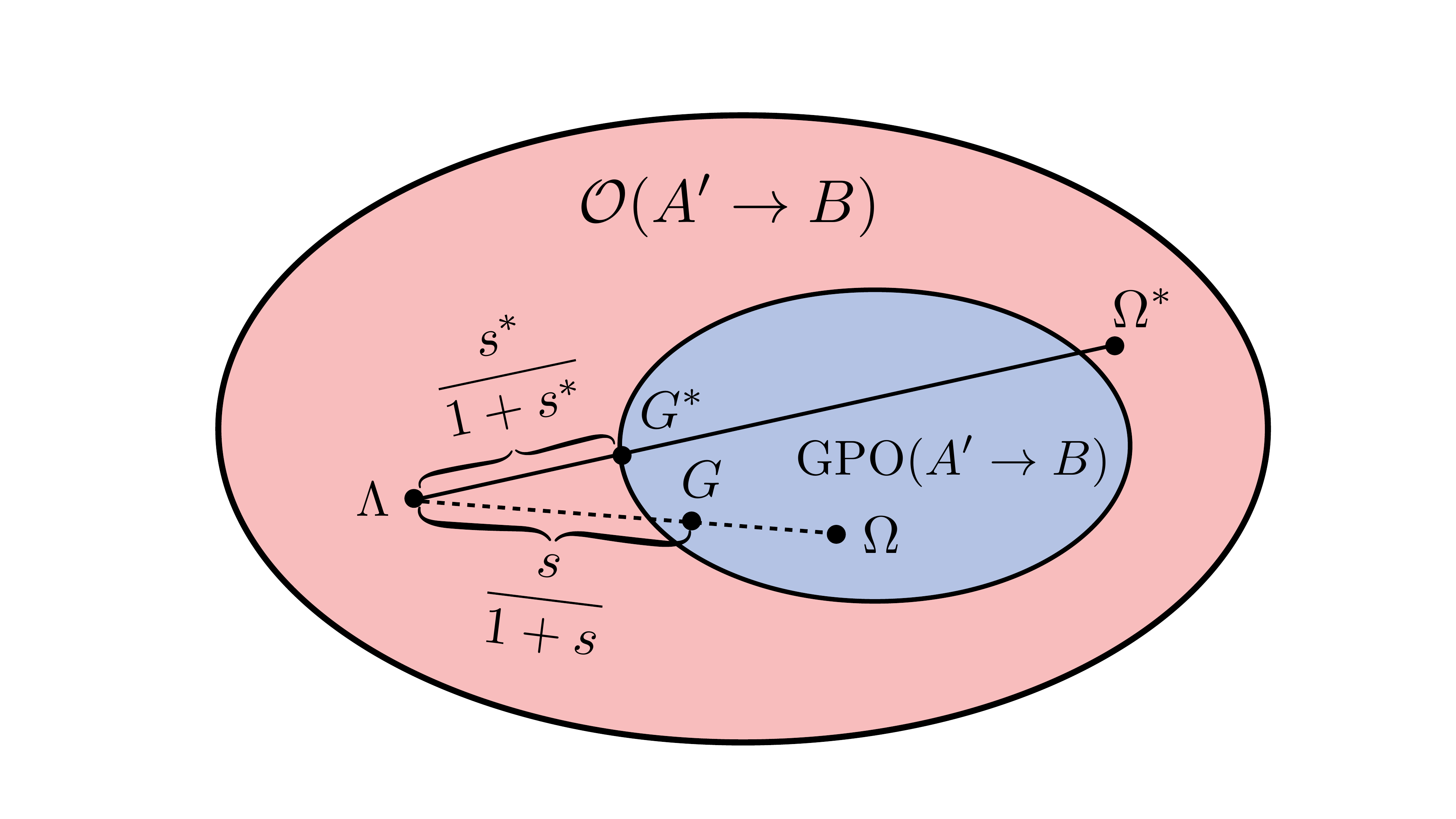}
    \caption{Geometric illustration of $R_\mathrm{T}(\Lambda)$.}
    \label{fig:robustness}
\end{figure}

Finally, we note that according to the standard definition of $R_\mathrm{T}(\Lambda)$ [Eq.~(\ref{appeq:ROA_ch_std})]:
\begin{align}
    R_\mathrm{T}(\Lambda) := \min\left\{s\Big|\frac{\Lambda + s\,\Omega}{1+s} = G,\,\Omega\in\mathcal{O}(A'\rightarrow B),\, G\in\mathrm{GPO}(A'\rightarrow B)\right\},
    \label{appeq:ROA_ch_std_exp}
\end{align}
$R_\mathrm{T}(\Lambda)$ characterises the ``distance'' of $\Lambda$ to the set $\mathrm{GPO}(A'\rightarrow B)$ with the closest GPO to $\Lambda$ being the optimal channel $G^*$ of the minimisation in Eq.~(\ref{appeq:ROA_ch_std_exp}) (See Fig.~\ref{fig:robustness}). Therefore, channels for which $R(\Lambda||\Gamma)-R_\mathrm{T}(\Lambda) = 0$ have the completely thermalising channel $\Gamma$ as their closest GPOs under this measure. This offers a heuristic interpretation of the vanishing athermality transmitability of such channels.

\section{GPO dilation of a general quantum channel (Proof of Theorem~\ref{thm:R_GPO_dilation})}\label{app:GPO_dilation}
In this section, we provide operational meanings of the measures $R_\mathrm{T}(\Lambda)$ and $R(\Lambda||\Gamma)$ by considering the GPO dilation of $\Lambda$.
A GPO dilation of a general channel $\Lambda\in\mathcal{O}(A\rightarrow B)$ can be viewed as the combination of a GPO $G\in\mathrm{GPO}(CA\rightarrow B)$ and an athermal state $\rho_C\in\mathcal{S}(C)$, such that $\Lambda(\cdot) = G(\rho_C \otimes \cdot)$. We emphasise that in a GPO dilation, the GPO $G$ needs not to be unitary, which is different from the Stinespring's representation of quantum channels~\cite{Stinespring1955}.

The following theorem relates this simulation task to $R_\mathrm{T}(\Lambda)$:
\begin{theoremApp}[Operational meaning of $R_\mathrm{T}(\Lambda)$]
Given a general quantum channel $\Lambda \in \mathcal{O}(A\rightarrow B)$, we define the set of states which can simulate this channel with a GPO as $\mathcal{S}^{\,\mathrm{sim}}_\mathrm{GPO}(\Lambda)$. The states in $\mathcal{S}^{\,\mathrm{sim}}_\mathrm{GPO}(\Lambda)$ can use different GPOs for simulation. The following equality holds:
\begin{align}
    R_\mathrm{T}(\Lambda) = \min_{\rho_C\in\mathcal{S}^{\,\mathrm{sim}}_\mathrm{GPO}(\Lambda)} R_\mathrm{T}(\rho_C).
\end{align}
\end{theoremApp}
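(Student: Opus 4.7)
\textit{Proof proposal.} The plan is to prove the two inequalities separately, exploiting the standard-form representations of $R_\mathrm{T}$ for states and for channels.

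For the ``$\le$'' direction, I would fix any $\rho_C \in \mathcal{S}^{\,\mathrm{sim}}_\mathrm{GPO}(\Lambda)$ together with a witness $G \in \mathrm{GPO}(CA \rightarrow B)$ satisfying $\Lambda(\cdot) = G(\rho_C \otimes \cdot)$. The standard form of $R_\mathrm{T}$ for states [Eq.~(\ref{appeq:ROA_st_std})] provides $\sigma_C^* \in \mathcal{S}(C)$ with $\rho_C = [1+R_\mathrm{T}(\rho_C)]\gamma_C - R_\mathrm{T}(\rho_C)\,\sigma_C^*$. Plugging this into $G(\rho_C \otimes \cdot)$ yields
\begin{equation*}
\Lambda = [1+R_\mathrm{T}(\rho_C)]\,G_0 - R_\mathrm{T}(\rho_C)\,\Omega,
\end{equation*}
where $G_0(\cdot) := G(\gamma_C \otimes \cdot)$ and $\Omega(\cdot) := G(\sigma_C^* \otimes \cdot)$. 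Since $\gamma_{CA} = \gamma_C \otimes \gamma_A$ and $G$ is a GPO, $G_0 \in \mathrm{GPO}(A \rightarrow B)$, while $\Omega \in \mathcal{O}(A \rightarrow B)$. Comparing with the standard form of $R_\mathrm{T}(\Lambda)$ [Eq.~(\ref{appeq:ROA_ch_std})] then gives $R_\mathrm{T}(\Lambda) \le R_\mathrm{T}(\rho_C)$.

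For the ``$\ge$'' direction, I would construct an explicit GPO dilation saturating the bound. Set $s := R_\mathrm{T}(\Lambda)$ and, via the channel standard form, write $\Lambda = (1+s)\,G_0 - s\,\Omega$ for some $G_0 \in \mathrm{GPO}(A \rightarrow B)$ and $\Omega \in \mathcal{O}(A \rightarrow B)$. Let $C$ be a qubit whose free Hamiltonian is tuned so that $\gamma_C = (1+s)^{-1}\ket{0}\!\bra{0} + s(1+s)^{-1}\ket{1}\!\bra{1}$, and set $\rho_C := \ket{0}\!\bra{0}$. Define the map
\begin{equation*}
G(X_{CA}) := \Lambda\!\left(\bra{0}_C X_{CA} \ket{0}_C\right) + \Omega\!\left(\bra{1}_C X_{CA} \ket{1}_C\right),
\end{equation*}
which I would verify (i) is CPTP (it is a dephasing in the $\{\ket{0},\ket{1}\}_C$ basis followed by a classically controlled application of $\Lambda$ or $\Omega$), (ii) obeys $G(\rho_C \otimes \tau) = \Lambda(\tau)$, and (iii) satisfies $G(\gamma_{CA}) = (1+s)^{-1}\Lambda(\gamma_A) + s(1+s)^{-1}\Omega(\gamma_A) = G_0(\gamma_A) = \gamma_B$, so that $G \in \mathrm{GPO}(CA \rightarrow B)$. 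Finally, Eq.~(\ref{appeq:R_T_st_def}) gives $1+R_\mathrm{T}(\rho_C) = 1/\bra{0}\gamma_C\ket{0} = 1+s$, whence $R_\mathrm{T}(\rho_C) = R_\mathrm{T}(\Lambda)$ and the minimum is attained.

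The main obstacle lies in the achievability step: the delicate point is arranging the ancilla's Hamiltonian so that the thermal populations of $\gamma_C$ match the coefficients $\{1/(1+s),\, s/(1+s)\}$ extracted from the channel's standard form. This matching is precisely what forces the convex mixture inside $G(\gamma_{CA})$ to reproduce $G_0$ and hence $\gamma_B$; the freedom to engineer $\gamma_C$ (equivalently, to choose a free Hamiltonian for the ancilla, as implicit in Fig.~\ref{fig:channel_simulation_GPO}) is essential to saturate the bound for arbitrary values of $s$.
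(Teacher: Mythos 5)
Your proof is correct and follows essentially the same strategy as the paper: the achievability direction uses the identical construction (a qubit ancilla with thermal populations $1/(1+s)$ and $s/(1+s)$, the pure state $\ket{0}\!\bra{0}_C$, and a classically controlled GPO whose second branch absorbs the defect so that $G(\gamma_C\otimes\gamma_A)=\gamma_B$). Your converse direction is a valid minor variation: instead of invoking $R_\mathrm{T}(\Lambda)=R_\mathrm{T}[\Lambda(\gamma_A)]$ together with monotonicity and multiplicativity of the static measure, you push the optimal decomposition of $\rho_C$ through $G$ to exhibit a feasible decomposition of $\Lambda$ directly in the channel standard form, which is slightly more self-contained but yields the same bound.
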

\begin{proof}
    First, we prove that $R_\mathrm{T}(\Lambda) \le \min_{\rho_C\in\mathcal{S}^{\,\mathrm{sim}}_\mathrm{GPO}(\Lambda)} R_\mathrm{T}(\rho_C)$. Assume that  $\exists\, \rho_C\in\mathcal{S}(C), \, G\in\mathrm{GPO}(CA\rightarrow B)$ such that $G(\rho_C\otimes\cdot) = \Lambda(\cdot)$. We have
    \begin{align}
        R_\mathrm{T}(\Lambda) = R_\mathrm{T}[\Lambda(\gamma_A)] = R_\mathrm{T}[G(\rho_C\otimes\gamma_A)] \le R_\mathrm{T}(\rho_C\otimes\gamma_A) = R_\mathrm{T}(\rho_C),
    \end{align}
    where the first equality follows from Theorem~\ref{appthm:ER_ROA}, the inequality is due to the monotonicity of $R_\mathrm{T}$ and in the final equality we used the multiplicity of $R_\mathrm{T}$. Thus, $R_\mathrm{T}(\Lambda)\le R_\mathrm{T}(\rho_C)$ holds for all $\rho_C$ and $G$ that simulate $\Lambda$. We conclude that $R_\mathrm{T}(\Lambda) \le \min_{\rho_C\in\mathcal{S}^{\,\mathrm{sim}}_\mathrm{GPO}(\Lambda)} R_\mathrm{T}(\rho_C)$.

    To prove that $R_\mathrm{T}(\Lambda) \ge \min_{\rho_C\in\mathcal{S}^{\,\mathrm{sim}}_\mathrm{GPO}(\Lambda)} R_\mathrm{T}(\rho_C)$. we consider a specific channel simulation protocol using a qubit system $C$, with the thermal state $\gamma_C$ given by
    \begin{align}
        \gamma_C &= \frac{1}{1+R_\mathrm{T}(\Lambda)}\ket{0}\!\bra{0}_C + \frac{R_\mathrm{T}(\Lambda)}{1+R_\mathrm{T}(\Lambda)}\ket{1}\!\bra{1}_C.
    \end{align}    
    The joint GPO for simulation is constructed in the following:
    \begin{align}
        \tilde{G}(\tau_C \otimes \rho_{A}) := \mathrm{Tr}\{\ket{0}\!\bra{0}_C\tau_C\} \Lambda(\rho_A) + \mathrm{Tr}\{\ket{1}\!\bra{1}_C\tau_C\} \sigma_B,
        \label{eq:measure_prepare_GPO_1}
    \end{align}
    where $\sigma_B$ is determined from
    \begin{align}
        \frac{1}{1+R_\mathrm{T}(\Lambda)}\Lambda(\gamma_A) + \frac{R_\mathrm{T}(\Lambda)}{1+R_\mathrm{T}(\Lambda)}\sigma_B = \gamma_B.
    \end{align}
    According to the equivalence relation between the static and dynamical $R_\mathrm{T}$ (Theorem~\ref{appthm:ER_ROA}), i.e.,~$R_\mathrm{T}(\Lambda) = R_\mathrm{T}[\Lambda(\gamma_A)]$, we note that the state $\sigma_B$ exists and is the optimal state in the standard form of $R_\mathrm{T}[\Lambda(\gamma_A)]$ [Eq.~(\ref{appeq:ROA_st_std})]. It is straightforward to check that using the state $\rho_C \equiv \ket{0}\!
    \bra{0}_C$, the channel
    $\tilde{G}$ in Eq.~(\ref{eq:measure_prepare_GPO_1}) is a valid GPO that simulates $\Lambda$:
    \begin{align}
        \tilde{G}(\ket{0}\!\bra{0}_C\otimes\cdot) &= \Lambda(\cdot),\\
        \tilde{G}(\gamma_C\otimes\gamma_A) &= \gamma_B.
    \end{align}
    Since $R_\mathrm{T}(\rho_C) = R_\mathrm{T}(\ket{0}\!\bra{0}_C) = R_\mathrm{T}(\Lambda)$, we have $R_\mathrm{T}(\Lambda) = R_\mathrm{T}(\rho_C) \ge \min_{\rho_C\in\mathcal{S}^{\,\mathrm{sim}}_\mathrm{GPO}(\Lambda)} R_\mathrm{T}(\rho_C)$. Hence, we have
    \begin{align}
        R_\mathrm{T}(\Lambda) = \min_{\rho_C\in\mathcal{S}^{\,\mathrm{sim}}_\mathrm{GPO}(\Lambda)} R_\mathrm{T}(\rho_C).
    \end{align}
\end{proof}
In a similar way, we can provide an operational meaning for $R(\Lambda||\Gamma)$ in terms of the athermality preservability $P_\mathrm{T}$:
\begin{theoremApp}[{Operational meaning of $R(\Lambda||\Gamma)$ (Theorem~\ref{thm:R_GPO_dilation})}]
Given a general quantum channel $\Lambda \in \mathcal{O}(A\rightarrow B)$, we define the set of GPOs which can simulate this channel with a state as $\mathcal{O}^{\,\mathrm{sim}}_\mathrm{GPO}(\Lambda)$. The GPOs in $\mathcal{O}^{\,\mathrm{sim}}_\mathrm{GPO}(\Lambda)$ can use different static resources for simulation.
The following inequalities hold:
\begin{align}
    R(\Lambda||\Gamma) \le \min_{G\in\mathcal{O}^{\,\mathrm{sim}}_\mathrm{GPO}(\Lambda)} P_\mathrm{T}(G) \le \max\left\{1/R_\mathrm{T}(\Lambda), \, R(\Lambda||\Gamma)\right\}.
    \label{appeq:R(Lambda||Gamma)_bounds}
\end{align}
Especially, whenever $R_\mathrm{T}(\Lambda) R(\Lambda||\Gamma) \ge 1$, we have
\begin{align}
    R(\Lambda||\Gamma) = \min_{G\in\mathcal{O}^{\,\mathrm{sim}}_\mathrm{GPO}(\Lambda)} P_\mathrm{T}(G).
    \label{appeq:R(Lambda||Gamma)=P_T(G)}
\end{align}
\end{theoremApp}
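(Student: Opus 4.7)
The plan is to establish the two inequalities in Eq.~\eqref{appeq:R(Lambda||Gamma)_bounds} separately; the equality~\eqref{appeq:R(Lambda||Gamma)=P_T(G)} under $R_\mathrm{T}(\Lambda) R(\Lambda||\Gamma) \ge 1$ then follows immediately, since this condition forces $1/R_\mathrm{T}(\Lambda) \le R(\Lambda||\Gamma)$ and so collapses the upper bound onto the lower one. For the lower bound, I would fix an arbitrary pair $(\rho_C, G)$ with $G \in \mathrm{GPO}(CA\to B)$ and $\Lambda(\cdot) = G(\rho_C\otimes\cdot)$, and set $\lambda = 1 + P_\mathrm{T}(G)$ so that $\lambda\Gamma_{CA\to B} - G$ is CP. Pre-composing this CP-ordering with the ancilla-preparation map $\rho_A \mapsto \rho_C\otimes\rho_A$ (itself CP), and using $\Gamma_{CA\to B}(\rho_C\otimes\rho_A) = \mathrm{Tr}\{\rho_A\}\gamma_B = \Gamma_{A\to B}(\rho_A)$, yields $\lambda\Gamma_{A\to B} - \Lambda$ CP, i.e.,~$1 + R(\Lambda||\Gamma) \le 1 + P_\mathrm{T}(G)$. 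Minimising over the simulating set then gives the lower bound.

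For the upper bound, I would reuse the measure-and-prepare GPO dilation already constructed in the proof of the operational-meaning theorem for $R_\mathrm{T}(\Lambda)$: take a qubit ancilla $C$ with thermal state $\gamma_C = [1/(1+R_\mathrm{T}(\Lambda))]\ket{0}\!\bra{0} + [R_\mathrm{T}(\Lambda)/(1+R_\mathrm{T}(\Lambda))]\ket{1}\!\bra{1}$, control input $\rho_C = \ket{0}\!\bra{0}_C$, and the GPO
\begin{align*}
    \tilde{G}(\tau_{CA}) = \Lambda\bigl(\langle 0|\tau_{CA}|0\rangle_C\bigr) + \mathrm{Tr}\bigl\{\langle 1|\tau_{CA}|1\rangle_C\bigr\}\,\sigma_B,
\end{align*}
where $\sigma_B$ is the state appearing in the standard form of $R_\mathrm{T}[\Lambda(\gamma_A)]$ (well-defined by Theorem~\ref{appthm:ER_ROA}). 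Since $\tilde{G}\in\mathcal{O}^{\,\mathrm{sim}}_\mathrm{GPO}(\Lambda)$, it suffices to upper-bound $P_\mathrm{T}(\tilde{G})$. A direct calculation gives the Choi operator $J_{\tilde{G}} = \ket{0}\!\bra{0}_{C'}\otimes J_\Lambda + \ket{1}\!\bra{1}_{C'}\otimes \mathbbm{1}_{A'}\otimes\sigma_B$, which is block-diagonal in $C'$. Comparing with $J_{\Gamma_{CA\to B}} = \mathbbm{1}_{C'A'}\otimes\gamma_B$ decouples the SDP defining $P_\mathrm{T}(\tilde{G})$ into two independent block constraints and yields $1 + P_\mathrm{T}(\tilde{G}) = \max\{1+R(\Lambda||\Gamma),\, 1+R_\mathrm{T}(\sigma_B)\}$.

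It remains to bound $R_\mathrm{T}(\sigma_B)$. Rearranging the defining relation $[R_\mathrm{T}(\Lambda)/(1+R_\mathrm{T}(\Lambda))]\sigma_B = \gamma_B - [1/(1+R_\mathrm{T}(\Lambda))]\Lambda(\gamma_A)$ shows that $\sigma_B \le \mu\gamma_B$ is equivalent to $\Lambda(\gamma_A) \ge [1+R_\mathrm{T}(\Lambda)-R_\mathrm{T}(\Lambda)\,\mu]\gamma_B$, which holds trivially for every $\mu \ge 1 + 1/R_\mathrm{T}(\Lambda)$ since $\Lambda(\gamma_A)\ge 0$; hence $R_\mathrm{T}(\sigma_B) \le 1/R_\mathrm{T}(\Lambda)$. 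Combining gives $P_\mathrm{T}(\tilde{G}) \le \max\{R(\Lambda||\Gamma),\, 1/R_\mathrm{T}(\Lambda)\}$, which establishes the upper bound. The main obstacle is the Choi-operator step: recognising that the block-diagonal structure in $C'$ converts the single CP-ordering constraint into a maximum of two scalar constraints — the $\ket{0}\!\bra{0}_{C'}$ block being precisely the defining SDP of $R(\Lambda||\Gamma)$ and the $\ket{1}\!\bra{1}_{C'}$ block reducing to the static robustness of $\sigma_B$ — is the pivot on which the argument turns.
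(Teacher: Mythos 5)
Your proposal is correct and follows essentially the same route as the paper: the same CP-precomposition argument for the lower bound (the paper phrases it via Choi operators and a partial trace against $\rho_C^T$, which is the same map-level step you describe), and the identical measure-and-prepare GPO dilation $\tilde{G}$ with the same bound $R_\mathrm{T}(\sigma_B)\le 1/R_\mathrm{T}(\Lambda)$ for the upper bound. Your observation that the block-diagonal Choi structure gives the exact value $1+P_\mathrm{T}(\tilde{G})=\max\{1+R(\Lambda||\Gamma),\,1+R_\mathrm{T}(\sigma_B)\}$ is marginally sharper than the paper's chain of inequalities but does not change the argument.
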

\begin{proof}
    Firstly, we prove the lower bound. 
    Assume that
    $\exists\, \rho_C\in\mathcal{S}(C), \, G\in\mathrm{GPO}(CA\rightarrow B)$ and $R_\mathrm{T}(\rho_C)\ge R_\mathrm{T}(\Lambda)$ that give $G(\rho_C\otimes\cdot) = \Lambda(\cdot)$. Recalling the definition of $P_\mathrm{T}$ in Eq.~(\ref{appeq:P_T_def}), we then have
    \begin{align}
        1 + P_\mathrm{T}(G) &\equiv \min\{\lambda \,|\, J_G \le \lambda(\mathbbm{1}_C\otimes\mathbbm{1}_A\otimes\gamma_B)\} \\
        &\ge \min\left\{\lambda\, | \,\mathrm{Tr}_C\{(\rho_C^T\otimes\mathbbm{1}_A\otimes\mathbbm{1}_B)J_G\} \le \lambda \mathrm{Tr}_C\{(\rho_C^T\otimes\mathbbm{1}_A\otimes\mathbbm{1}_B)(\mathbbm{1}_C\otimes\mathbbm{1}_A\otimes\gamma_B)\}\right\}\\
        &= \min\{\lambda \,|\, J_\Lambda \le \lambda (\mathbbm{1}_A\otimes\gamma_B)\}\\
        &= 1 + R(\Lambda||\Gamma),
    \end{align}
    where, in the second line we have used that multiplication with $\rho_C^\mathrm{T}$ followed by a trace over $C$ is a CP map. Hence, $R(\Lambda||\Gamma) \le P_\mathrm{T}(G)$ for all $\rho_C$ and $G$ used to simulate $\Lambda$, We conclude that $R(\Lambda||\Gamma) \le \min_{G\in \mathcal{O}^{\,\mathrm{sim}}_{\mathrm{GPO}}(\Lambda)} P_\mathrm{T}(G)$.
    
    To prove the upper bound, we consider a specific channel simulation protocol using a qubit system $C$, with the thermal state $\gamma_C$ given by
    \begin{align}
        \gamma_C &= \frac{1}{1+R_\mathrm{T}(\Lambda)}\ket{0}\!\bra{0}_C + \frac{R_\mathrm{T}(\Lambda)}{1+R_\mathrm{T}(\Lambda)}\ket{1}\!\bra{1}_C.
    \end{align}    
    The joint GPO for simulation is constructed in the following:
    \begin{align}
        \tilde{G}(\tau_C \otimes \rho_{A}) := \mathrm{Tr}\{\ket{0}\!\bra{0}_C\tau_C\} \Lambda(\rho_A) + \mathrm{Tr}\{\ket{1}\!\bra{1}_C\tau_C\} \sigma_B,
        \label{appeq:measure_prepare_GPO}
    \end{align}
    where $\sigma_B$ is determined from
    \begin{align}
        \frac{1}{1+R_\mathrm{T}(\Lambda)}\Lambda(\gamma_A) + \frac{R_\mathrm{T}(\Lambda)}{1+R_\mathrm{T}(\Lambda)}\sigma_B = \gamma_B.
        \label{appeq:sigma_B_def_relation}
    \end{align}
    According to the equivalence relation between the static and dynamical $R_\mathrm{T}$ (Theorem~\ref{appthm:ER_ROA}), i.e.,~$R_\mathrm{T}(\Lambda) = R_\mathrm{T}[\Lambda(\gamma_A)]$, we note that $\sigma_B$ is the optimal state in the standard form of $R_\mathrm{T}[\Lambda(\gamma_A)]$ [Eq.~(\ref{appeq:ROA_st_std})]. It is straightforward to check that using the state $\rho_C \equiv \ket{0}\!
    \bra{0}_C$, the channel
    $\tilde{G}$ in Eq.~(\ref{appeq:measure_prepare_GPO}) is a valid GPO that simulates $\Lambda$:
    \begin{align}
        \tilde{G}(\ket{0}\!\bra{0}_C\otimes\cdot) &= \Lambda(\cdot),\\
        \tilde{G}(\gamma_C\otimes\gamma_A) &= \gamma_B.
    \end{align}
    Consider the Choi operator of $\tilde{G}$. We have 
    \begin{align}
        J_{\tilde{G}} &= \ket{0}\!\bra{0}_C\otimes J_\Lambda + \ket{1}\!\bra{1}_C\otimes\mathbbm{1}_A\otimes\sigma_B \\
        &\le \left[1 + R(\Lambda||\Gamma)\right] \ket{0}\!\bra{0}_C\otimes \mathbbm{1}_A\otimes\gamma_B + \ket{1}\!\bra{1}_C\otimes\mathbbm{1}_A\otimes\sigma_B\\
        &\le \left[1 + R(\Lambda||\Gamma)\right] \ket{0}\!\bra{0}_C\otimes \mathbbm{1}_A\otimes\gamma_B + \left[1 + 1/R_\mathrm{T}(\Lambda)\right]\ket{1}\!\bra{1}_C\otimes \mathbbm{1}_A\otimes\gamma_B\\
        &\le \left[1 + \max\left\{1/R_\mathrm{T}(\Lambda), R(\Lambda||\Gamma)\right\}\right]\mathbbm{1}_C\otimes\mathbbm{1}_A\otimes\gamma_B,
    \end{align}
    where the first inequality is from the definition of $R(\Lambda||\Gamma)$ and the second inequality is from the definition of $\sigma_B$ in Eq.~(\ref{appeq:sigma_B_def_relation}). Recalling the definition of $P_\mathrm{T}(G)$, we then have
    \begin{align}
        P_\mathrm{T}(\tilde{G}) \le \max\left\{1/R_\mathrm{T}(\Lambda),\, R(\Lambda||\Gamma)\right\}.
    \end{align}
    Therefore,
    \begin{align}
        \min_{G\in \mathcal{O}^{\,\mathrm{sim}}_{\mathrm{GPO}}(\Lambda)} P_\mathrm{T}(G) \le P_\mathrm{T}(\tilde{G}) \le \max\left\{1/R_\mathrm{T}(\Lambda),\, R(\Lambda||\Gamma)\right\}.
    \end{align}
\end{proof}

\section{Resource analysis for quantum switch}\label{app:quantum_switch}
\subsection{Proof of the upper bound in Theorem~\ref{thm:switch_bounds}}\label{app:proof_switch_upper_bound}
The lower bound [Eq.~(\ref{eq:switch_bound_lower})] in Theorem~\ref{thm:switch_bounds} is a direct application of Theorem~\ref{thm:R_S_bounds}. In the rest of this subsection, we prove the upper bound [Eq.~(\ref{eq:switch_bound_upper})] in Theorem~\ref{thm:switch_bounds}.

Given a general quantum channel $\Omega\in\mathcal{O}(A\rightarrow B)$ with the Kraus representation $\Omega(\cdot) \equiv \sum_n K_n(\cdot) K_n^\dagger$, the Kraus representation of the quantum switch $\mathdutchcal{S}[\Omega, \Omega]\in\mathcal{O}(AC\rightarrow BC')$, where $C'$ is a copy of $C$, can be written as $\mathdutchcal{S}[\Omega, \Omega](\cdot) \equiv \sum_m\sum_n S_{mn}(\cdot)S^\dagger_{mn}$, where the Kraus operator $S_{mn}$ is given by
\begin{align}
    S_{mn} \equiv \ket{0}\!\bra{0}_C\otimes K_n K_m + \ket{1}\!\bra{1}_C\otimes K_m K_n.
    \label{eq:swith_Kraus_general}
\end{align}
Consider the case when systems $A$ and $B$ have the same thermal state, i.e.,~$\gamma_A = \gamma_B \equiv \gamma$ with the dimension of $d$. The completely thermalising channel $\Gamma\in\mathrm{GPO}(A\rightarrow B)$ has the Kraus representation:
\begin{align}
    \Gamma(\cdot) = \mathrm{Tr}\{\cdot\}\gamma = \sum_{i=0}^{d-1}\sum_{j=0}^{d-1}\sqrt{g_j}\ket{j}\!\bra{i}\cdot\ket{i}\!\bra{j}\sqrt{g_j} \equiv \sum_{i=0}^{d-1}\sum_{j=0}^{d-1} \tilde{K}_{ij}(\cdot)\tilde{K}_{ij}^\dagger,
    \label{eq:Gamma_Kraus}
\end{align}
where $g_j$ is the population of $\gamma$ on the $j$th energy level and $\tilde{K}_{ij}\equiv \sqrt{g_j}\ket{j}\!\bra{i}$. 
According to Eq.~(\ref{eq:swith_Kraus_general}), the Kraus operator of the switch $\mathdutchcal{S}[\Gamma, \Gamma]$ is given by
\begin{align}
    S_{ijk\ell}^{(\Gamma)} \equiv \ket{0}\!\bra{0}_C\otimes\tilde{K}_{k\ell}\tilde{K}_{ij} + \ket{1}\!\bra{1}_C\otimes\tilde{K}_{ij}\tilde{K}_{k\ell}, \quad \forall\, i,j,k,\ell = 0,1,\dots, d-1,
\end{align}
such that 
\begin{align}
    \mathdutchcal{S}[\Gamma,\Gamma](\cdot) = \sum_{i=0}^{d-1}\sum_{j=0}^{d-1}\sum_{k=0}^{d-1}\sum_{\ell=0}^{d-1} S_{ijk\ell}^{(\Gamma)}(\cdot)\left(S_{ijk\ell}^{(\Gamma)}\right)^\dagger.
    \label{eq:S[Gamma,Gamma]_Kraus}
\end{align}
The signalling GPO input to the switch is defined as
\begin{align}
    G(\cdot) &\equiv s\Gamma + (1-s)\mathcal{I} \label{appeq:signalling_G}\\
    &= s\sum_{i=0}^{d-1}\sum_{j=0}^{d-1} \tilde{K}_{ij}(\cdot)\tilde{K}_{ij}^\dagger + (1-s)\mathbbm{1}(\cdot)\mathbbm{1}\\
    &\equiv \sum_{i=0}^{d-1}\sum_{j=0}^{d-1}K_{ij}(\cdot)K_{ij}^\dagger + K_{dd}(\cdot)K_{dd}^\dagger,
    \label{eq:signalling_G_Kraus}
\end{align}
where $s\in[0,1]$ is the thermalising strength and we defined the Kraus operators:
\begin{align}
    K_{ij} &\equiv \sqrt{s}\tilde{K}_{ij} = \sqrt{s}\sqrt{g_j}\ket{j}\!\bra{i},\quad\forall\, i,j = 0,1,\dots,d-1, \\
    K_{dd} &\equiv \sqrt{1-s} \mathbbm{1}.
\end{align}
According to Eq.~(\ref{eq:swith_Kraus_general}), the Kraus operators of the switch $\mathdutchcal{S}[G,G]$ are given by
\begin{align}
    S_{ijk\ell}^{(G)} &\equiv \ket{0}\!\bra{0}_C\otimes K_{k\ell}K_{ij} + \ket{1}\!\bra{1}_C\otimes K_{ij}K_{k\ell}\\
    &= s\left(\ket{0}\!\bra{0}_C\otimes\tilde{K}_{k\ell}\tilde{K}_{ij} + \ket{1}\!\bra{1}_C\otimes\tilde{K}_{ij}\tilde{K}_{k\ell} \right)\\
    &= s\,S_{ijk\ell}^{(\Gamma)},\\
    S_{ijdd}^{(G)} &\equiv \ket{0}\!\bra{0}_C\otimes K_{dd}K_{ij} + \ket{1}\!\bra{1}_C \otimes K_{ij}K_{dd}\\
    &= \ket{0}\!\bra{0}_C\otimes\sqrt{1-s}\mathbbm{1}\sqrt{s}\tilde{K}_{ij} + \ket{1}\!\bra{1}_C\otimes \sqrt{s}\tilde{K}_{ij}\sqrt{1-s}\mathbbm{1}\\
    &= \sqrt{s(1-s)}\mathbbm{1}_C\otimes \tilde{K}_{ij},\\
    S_{ddk\ell}^{(G)} &\equiv \ket{0}\!\bra{0}_C\otimes K_{k\ell}K_{dd} + \ket{1}\!\bra{1}_C \otimes K_{dd}K_{k\ell}\\
    &= \ket{0}\!\bra{0}_C\otimes\sqrt{s}\tilde{K}_{k\ell}\sqrt{1-s}\mathbbm{1} + \ket{1}\!\bra{1}_C\otimes\sqrt{1-s}\mathbbm{1}\sqrt{s}\tilde{K}_{k\ell}\\
    &= \sqrt{s(1-s)}\mathbbm{1}_C\otimes \tilde{K}_{k\ell},\\
    S_{dddd}^{(G)} &\equiv \ket{0}\!\bra{0}_C \otimes K_{dd}K_{dd} + \ket{1}\!\bra{1}_C\otimes K_{dd}K_{dd}\\
    &= (1-s)\mathbbm{1}_C\otimes\mathbbm{1},
\end{align}
$\forall\, i,j,k,\ell = 0,1,\dots, d-1$. The channel $\mathdutchcal{S}[G,G]$ can therefore be written as
\begin{align}
    \mathdutchcal{S}[G,G](\cdot) &= \sum_{i=0}^{d-1}\sum_{j=0}^{d-1}\sum_{k=0}^{d-1}\sum_{\ell=0}^{d-1} S_{ijk\ell}^{(G)}(\cdot)\left(S_{ijk\ell}^{(G)}\right)^\dagger + \sum_{i=0}^{d-1}\sum_{j=0}^{d-1} S_{ijdd}^{(G)}(\cdot) \left(S_{ijdd}^{(G)}\right)^\dagger + \sum_{k=0}^{d-1}\sum_{\ell=0}^{d-1} S_{ddk\ell}^{(G)}(\cdot) \left(S_{ddk\ell}^{(G)}\right)^\dagger + S_{dddd}^{(G)}(\cdot) \left(S_{dddd}^{(G)}\right)^\dagger\\
    &= s^2\sum_{i=0}^{d-1}\sum_{j=0}^{d-1}\sum_{k=0}^{d-1}\sum_{\ell=0}^{d-1}S_{ijk\ell}^{(\Gamma)}(\cdot)\left(S_{ijk\ell}^{(\Gamma)}\right)^\dagger+ s(1-s)\sum_{i=0}^{d-1}\sum_{j=0}^{d-1}(\mathbbm{1}_C\otimes\tilde{K}_{ij})(\cdot)(\mathbbm{1}_C\otimes\tilde{K}_{ij}^\dagger)\nonumber\\
    &\quad+ s(1-s)\sum_{k=0}^{d-1}\sum_{\ell=0}^{d-1}(\mathbbm{1}_C\otimes\tilde{K}_{k\ell})(\cdot)(\mathbbm{1}_C\otimes\tilde{K}_{k\ell}^\dagger)+ (1-s)^2(\mathbbm{1}_C\otimes\mathbbm{1})(\cdot)(\mathbbm{1}_C\otimes\mathbbm{1})\\
    &= s^2 \mathdutchcal{S}[\Gamma,\Gamma](\cdot) + 2s(1-s)\,\mathcal{I}_C\otimes\Gamma(\cdot) + (1-s)^2\mathcal{I}_C\otimes\mathcal{I}(\cdot),
    \label{eq:S[G,G]_cvx_mixture}
\end{align}
where in the last line we used the Kraus representations for $\mathdutchcal{S}[\Gamma,\Gamma]$ [Eq.~(\ref{eq:S[Gamma,Gamma]_Kraus})] and $\Gamma$ [Eq.~(\ref{eq:Gamma_Kraus})]. From Eq.~(\ref{eq:S[G,G]_cvx_mixture}), the induced channel $\Lambda_\mathdutchcal{S}^{(G)}(\cdot) \equiv \mathdutchcal{S}[G,G](\rho_C\otimes\cdot) \in \mathcal{O}(A\rightarrow BC')$ can be written as
\begin{align}
    \Lambda_\mathdutchcal{S}^{(G)} = s^2\Lambda_\mathdutchcal{S}^{(\Gamma)} + 2s(1-s)\rho_C\otimes\Gamma + (1-s)^2\rho_C\otimes\mathcal{I},
    \label{eq:Lambda_S^G_cvx_mixture}
\end{align}
where $\Lambda_\mathdutchcal{S}^{(\Gamma)}(\cdot) \equiv \mathdutchcal{S}[\Gamma,\Gamma](\rho_C\otimes\cdot) \in \mathcal{O}(A\rightarrow BC')$ is the channel induced by $\mathdutchcal{S}[\Gamma,\Gamma]$. 
We then consider the control qubit in the state:
\begin{align}
    \rho_C\equiv \ket{\psi}\!\bra{\psi}_C \text{ with } \ket{\psi}_C\equiv \sqrt{\alpha}\ket{0} + \mathrm{e}^{\mathrm{i}\varphi}\sqrt{1-\alpha}\ket{1},
    \label{appeq:rho_C}
\end{align}
for $\alpha\in [0,1]$ and $\varphi\in [0, 2\pi]$. The thermal state of the control qubit is $\gamma_C \equiv \mathbbm{1}_C/2$. Since $\gamma_C$ can be rewritten as $\gamma_C = (\ket{\psi}\!\bra{\psi}_C + \ket{\psi^\perp}\!\bra{\psi^\perp}_C)/2$, where $\langle \psi|\psi^\perp \rangle_C = 0$, it is easy to check that $R_\mathrm{T}(\rho_C) = 1,\,\forall\, \alpha\in[0,1]$ and $\varphi\in [0, 2\pi]$.

In order to prove the upper bound of $R(\Lambda_\mathdutchcal{S}^{(G)}||\Gamma_{A\rightarrow BC'})$ in Eq.~(\ref{eq:switch_bound_upper}), we first prove the following lemma.
\begin{lemmaApp}\label{lem:R_Lambda_S^Gamma}
    Given the control qubit in Eq.~(\ref{appeq:rho_C}), the following equality holds
    \begin{align}
        R(\Lambda_\mathdutchcal{S}^{(\Gamma)}||\Gamma_{A\rightarrow BC'}) = 1, \quad\forall\, \alpha\in [0,1],\, \varphi\in [0, 2\pi].
    \end{align}
\end{lemmaApp}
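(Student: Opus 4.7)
The plan is to compute $\Lambda_\mathdutchcal{S}^{(\Gamma)}$ explicitly and then bound $\min\{\lambda:J_{\Lambda_\mathdutchcal{S}^{(\Gamma)}}\le\lambda J_\Gamma\}$ directly, showing it equals $2$. First, using the Kraus operators $\tilde K_{ij}=\sqrt{g_j}\ket{j}\!\bra{i}$ of $\Gamma$ together with the collision identity $\tilde K_{k\ell}\tilde K_{ij}=\sqrt{g_jg_\ell}\,\delta_{kj}\ket{\ell}\!\bra{i}$, the Kraus sums of $\mathdutchcal{S}[\Gamma,\Gamma]$ collapse: the two same-order blocks each yield $\Gamma\circ\Gamma(\rho_A)=\mathrm{Tr}\{\rho_A\}\gamma$, while the cross-order blocks yield $\gamma\rho_A\gamma$. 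Assembling the four control-basis components gives
\begin{equation*}
\Lambda_\mathdutchcal{S}^{(\Gamma)}(\rho_A)=\mathrm{Tr}\{\rho_A\}\,\rho_C^{\mathrm{diag}}\otimes\gamma + \rho_C^{\mathrm{off}}\otimes\gamma\rho_A\gamma,
\end{equation*}
where $\rho_C^{\mathrm{diag}}$ and $\rho_C^{\mathrm{off}}$ denote the diagonal and off-diagonal parts of $\rho_C$ in the $\{\ket{0},\ket{1}\}_C$ basis. The Choi state becomes
\begin{equation*}
J_{\Lambda_\mathdutchcal{S}^{(\Gamma)}}=\mathbbm{1}_A\otimes\gamma_B\otimes\rho_C^{\mathrm{diag}}+\ketbra{g}{g}_{AB}\otimes\rho_C^{\mathrm{off}},\qquad \ket{g}\equiv\sum_i g_i\ket{ii}_{AB},
\end{equation*}
while $J_\Gamma=\mathbbm{1}_A\otimes\gamma_B\otimes(\mathbbm{1}_{C'}/2)$.

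The central operator inequality I will prove is $\mathbbm{1}_A\otimes\gamma_B\ge\ketbra{g}{g}_{AB}$. For any $\ket{\phi}=\sum_{ij}\phi_{ij}\ket{ij}$, Cauchy-Schwarz with weights $g_i$ (and $\sum_i g_i=1$) gives $|\sum_i g_i\phi_{ii}|^2\le\sum_i g_i|\phi_{ii}|^2\le\sum_{ij}g_j|\phi_{ij}|^2$, which is exactly $|\braket{\phi|g}|^2\le\bra{\phi}\mathbbm{1}\otimes\gamma\ket{\phi}$. I will also use the identity $\bra{g}\mathbbm{1}\otimes\gamma^{-1}\ket{g}=\sum_i g_i=1$.

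For the upper bound $R(\Lambda_\mathdutchcal{S}^{(\Gamma)}||\Gamma)\le 1$, I view $2J_\Gamma-J_{\Lambda_\mathdutchcal{S}^{(\Gamma)}}$ as a $2\times 2$ block matrix on $C'$ with diagonal blocks $(1-\alpha)\mathbbm{1}\otimes\gamma$, $\alpha\mathbbm{1}\otimes\gamma$ and off-diagonal blocks $-\sqrt{\alpha(1-\alpha)}\,\mathrm{e}^{\mp\mathrm{i}\varphi}\ketbra{g}{g}$. Taking the Schur complement with respect to the $\ket{1}\!\bra{1}_{C'}$ block, the identity $\bra{g}\mathbbm{1}\otimes\gamma^{-1}\ket{g}=1$ cancels all $\alpha$-factors and leaves $(1-\alpha)[\mathbbm{1}\otimes\gamma-\ketbra{g}{g}]\ge 0$, which holds by the Cauchy-Schwarz estimate (the boundary cases $\alpha\in\{0,1\}$ are handled directly since then $\rho_C^{\mathrm{off}}=0$ and $J_{\Lambda_\mathdutchcal{S}^{(\Gamma)}}$ is a product whose $R_\mathrm{T}$ equals that of a pure qubit). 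For the matching lower bound $R(\Lambda_\mathdutchcal{S}^{(\Gamma)}||\Gamma)\ge 1$, I invoke Lemma~\ref{lem:barely_positive} with the witness $\ket{\Psi}=\ket{\tilde v}_{AB}\otimes\ket{\psi}_{C'}$, where $\ket{\tilde v}\equiv\sum_i\ket{ii}$: this vector saturates the Cauchy-Schwarz since $\bra{\tilde v}\mathbbm{1}\otimes\gamma\ket{\tilde v}=|\braket{\tilde v|g}|^2=1$, and together with $\bra{\psi}\rho_C\ket{\psi}=1$ this yields $\bra{\Psi}2J_\Gamma\ket{\Psi}=1=\bra{\Psi}J_{\Lambda_\mathdutchcal{S}^{(\Gamma)}}\ket{\Psi}$. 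Hence $\min\lambda=2$ and $R(\Lambda_\mathdutchcal{S}^{(\Gamma)}||\Gamma)=1$.

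The main obstacle is the Kraus-level bookkeeping reducing $\sum\tilde K_{k\ell}\tilde K_{ij}(\cdot)\tilde K_{k\ell}^\dagger\tilde K_{ij}^\dagger$ cleanly to $\gamma(\cdot)\gamma$ via the two independent $\delta$-collisions. Once the explicit form of $\Lambda_\mathdutchcal{S}^{(\Gamma)}$ is in hand, both bounds on $R$ collapse to the single positivity statement $\mathbbm{1}_A\otimes\gamma_B\ge\ketbra{g}{g}$, and the maximally correlated vector $\sum_i\ket{ii}$ is precisely the kernel direction that makes it tight.
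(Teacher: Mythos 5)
Your proof is correct and follows essentially the same route as the paper's: an explicit Kraus computation of the induced channel, a $2\times 2$ block decomposition over the control qubit, a Schur-complement reduction, and an application of Lemma~\ref{lem:barely_positive} with an explicit kernel vector. The only difference is cosmetic — you work directly with Choi operators while the paper works with $\mathcal{I}\otimes\Lambda_{\mathdutchcal{S}}^{(\Gamma)}(\tilde{\gamma}_{AA'})$; the two pictures are related by conjugation with $\gamma_A^{1/2}\otimes\mathbbm{1}$ (which is the content of Theorem~\ref{appthm:ER_R}), your inequality $\mathbbm{1}_A\otimes\gamma_B\ge\ketbra{g}{g}_{AB}$ is the conjugate of the paper's $\gamma\otimes\gamma\ge\sigma$, and your witness $\sum_i\ket{ii}$ is the conjugate of the paper's $(\gamma^{-1/2}\otimes\mathbbm{1})\ket{\phi^+}$.
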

\begin{proof}
    By Theorem~\ref{appthm:ER_R}, we have
    \begin{align}
        R(\Lambda_\mathdutchcal{S}^{(\Gamma)}||\Gamma_{A\rightarrow BC'}) = R_\mathrm{T}[\mathcal{I}\otimes\Lambda_\mathdutchcal{S}^{(\Gamma)}(\tilde{\gamma}_{AA'})],
        \label{eq:ER_Lambda_S^Gamma}
    \end{align}
    where $\tilde{\gamma}_{AA'} = \ket{\tilde{\gamma}}\!\bra{\tilde{\gamma}}_{AA'}$ with $\ket{\tilde{\gamma}}_{AA'} \equiv \sum_{i=0}^{d-1}\sqrt{g_i}\ket{i}_{A}\otimes\ket{i}_{A'}$. Define the output state 
    \begin{align}
        \rho^{\mathrm{out}}_{ABC'} &\equiv \mathcal{I}\otimes\Lambda_\mathdutchcal{S}^{(\Gamma)}(\tilde{\gamma}_{AA'})\\
        &= \mathcal{I}\otimes\mathdutchcal{S}[\Gamma,\Gamma](\rho_C\otimes\tilde{\gamma}_{AA'})\\
        &=\sum_{k,\ell=0}^{d-1}\sum_{m,n=0}^{d-1}(S_{k\ell mn}^{(\Gamma)}\otimes \mathbbm{1}_A)(\rho_C\otimes\tilde{\gamma}_{A'A})\left[\left(S_{k\ell mn}^{(\Gamma)}\right)^\dagger\otimes \mathbbm{1}_A\right]\\ &=\sum_{i,j=0}^{d-1}\sum_{k,\ell=0}^{d-1}\sum_{m,n=0}^{d-1}\sqrt{g_i}\sqrt{g_j} \left[\alpha\ket{0}\!\bra{0}_C\otimes \tilde{K}_{mn}\tilde{K}_{k\ell}(\ket{i}\!\bra{j})\tilde{K}^\dagger_{k\ell} \tilde{K}_{mn}^{\dagger}+(1-\alpha)\ket{1}\!\bra{1}_C\otimes  \tilde{K}_{k\ell}\tilde{K}_{mn}(\ket{i}\!\bra{j})\tilde{K}_{mn}^{\dagger} \tilde{K}_{k\ell}^{\dagger}\right. \nonumber\\&\left.\quad + \mathrm{e}^{-\mathrm{i}\varphi}\sqrt{\alpha(1-\alpha)}\ket{0}\!\bra{1}_C\otimes \tilde{K}_{mn}\tilde{K}_{k\ell}(\ket{i}\!\bra{j})\tilde{K}_{mn}^{\dagger} \tilde{K}_{k\ell}^{\dagger}+\mathrm{e}^{\mathrm{i}\varphi}\sqrt{\alpha(1-\alpha)}\ket{1}\!\bra{0}_C\otimes \tilde{K}_{k\ell}\tilde{K}_{mn}(\ket{i}\!\bra{j})\tilde{K}_{k\ell}^{\dagger} \tilde{K}_{mn}^{\dagger}\right]\otimes\ket{i}\!\bra{j}_A\\&=(\alpha\ket{0}\!\bra{0}_C+(1-\alpha)\ket{1}\!\bra{1}_C)\otimes\gamma_{B}\otimes\gamma_A + \sqrt{\alpha(1-\alpha)}(\mathrm{e}^{-\mathrm{i}\varphi}\ket{0}\!\bra{1}_C+\mathrm{e}^{\mathrm{i}\varphi}\ket{1}\!\bra{0}_C)\otimes\sigma_{BA},
        \label{eq:rho_ABC'_out}
    \end{align}
    where for the convenience of writing, we permute the order of the subsystems but use subscripts to avoid potential ambiguity. The state $\sigma_{BA}$ in Eq.~(\ref{eq:rho_ABC'_out}) is given by
    \begin{align}
        \sigma_{BA} \equiv (\gamma_B\otimes\mathbbm{1}_A)\tilde{\gamma}_{A'A}(\gamma_B\otimes \mathbbm{1}_A) = \sum_{i,j=0}^{d-1}g_i^{3/2}g_j^{3/2}\ket{i}\!\bra{j}_B\otimes\ket{i}\!\bra{j}_A,
        \label{eq:sigma_BA}
    \end{align}
    where we note that $\gamma_A = \gamma_B = \gamma_{A'} \equiv \gamma$. Recalling the definition in Eq.~(\ref{appeq:R_T_st_def}), we have
    \begin{align}
        1 + R_\mathrm{T}[\mathcal{I}\otimes\Lambda_\mathdutchcal{S}^{(\Gamma)}(\tilde{\gamma}_{AA'})] = \min\left\{\lambda | \rho_{ABC'}^\mathrm{out} \le \lambda \gamma_{C'}\otimes\gamma_B\otimes\gamma_A\right\}.
        \label{eq:rho_ABC'_out_ROA}
    \end{align}
    To find the minimal $\lambda$, we write $(\lambda \gamma_{C'}\otimes\gamma_B\otimes\gamma_A - \rho_{ABC'}^\mathrm{out})$ in the energy eigenbasis of $C'$ as a block matrix $M$ (from now we ignore the labels of systems):
    \begin{align}
        M = \begin{pmatrix}
            (\lambda/2 - \alpha)\gamma\otimes\gamma & -\mathrm{e}^{-\mathrm{i}\varphi}\sqrt{\alpha(1-\alpha)}\sigma \\
            -\mathrm{e}^{\mathrm{i}\varphi}\sqrt{\alpha(1-\alpha)}\sigma & (\lambda/2 + \alpha - 1)\gamma\otimes\gamma
        \end{pmatrix},\label{eq:M_matrix}
    \end{align}
    where $\lambda\ge 1$. When $\alpha = 0$ or $1$, it is straightforward that the minimal $\lambda$ ensuring $M\ge 0$ is 2, and therefore we have $R_\mathrm{T}[\mathcal{I}\otimes\Lambda_\mathdutchcal{S}^{(\Gamma)}(\tilde{\gamma}_{AA'})] = 2-1 = 1$. For $\alpha\in(0,1)$, to let $M\ge 0$, we need $\lambda/2 > \max\{\alpha, 1-\alpha\}$, which renders the top-left and bottom-right blocks in $M$ [Eq.~(\ref{eq:M_matrix})] be positive definite.
    By Schur complement (Theorem~1.12 in Ref.~\cite{zhang2006schur}), we know that
    \begin{align}
        M\ge 0 \Leftrightarrow (\lambda/2+\alpha-1)\gamma\otimes\gamma - \frac{\alpha(1-\alpha)}{\lambda/2-\alpha}\sigma(\gamma^{-1}\otimes\gamma^{-1})\sigma \ge 0.
    \end{align}
    The condition can be rewritten as:
    \begin{align}
        \sigma \le \frac{(\lambda/2+\alpha-1)(\lambda/2-\alpha)}{\alpha(1-\alpha)}\gamma\otimes\gamma,
        \label{eq:schur_complement_cond_M}
    \end{align}
    where we used the fact that $\sigma(\gamma^{-1}\otimes\gamma^{-1})\sigma = \sigma$ which can be seen from the following:
    \begin{align}
        (\gamma^{1/2}\otimes\gamma^{1/2})\tilde{\gamma}(\gamma^{1/2}\otimes\gamma^{1/2}) &= (\gamma^{1/2}\otimes\gamma^{1/2})(\gamma^{1/2}\otimes\mathbbm{1})\ket{\phi^+}\!\bra{\phi^+}(\gamma^{1/2}\otimes\mathbbm{1})(\gamma^{1/2}\otimes\gamma^{1/2}) \\
        &= (\gamma\otimes\mathbbm{1})(\mathbbm{1}\otimes\gamma^{1/2})\ket{\phi^+}\!\bra{\phi^+}(\mathbbm{1}\otimes\gamma^{1/2})(\gamma\otimes\mathbbm{1})\\
        &= (\gamma\otimes\mathbbm{1})\tilde{\gamma}(\gamma\otimes\mathbbm{1})\\
        &= \sigma,
        \label{eq:sigma_BA_alternative_expression}
    \end{align}
    where $\ket{\phi^+}\equiv\sum_{i=0}^{d-1}\ket{i}\otimes\ket{i}$ and we used Eq.~(\ref{eq:sigma_BA}) and the fact that $\ket{\tilde{\gamma}} = (\gamma^{1/2}\otimes\mathbbm{1})\ket{\phi^+} = (\mathbbm{1}\otimes\gamma^{1/2})\ket{\phi^+}$.

    Define the function $f(\lambda) \equiv \frac{(\lambda/2+\alpha-1)(\lambda/2-\alpha)}{\alpha(1-\alpha)}$. $f(\lambda)$ is monotonically increasing with $\lambda$. In the following, we will show that for Eq.~(\ref{eq:schur_complement_cond_M}) to hold, $f(\lambda) \ge 1$ such that the minimum $\lambda_{\min} = 2$ (note that $\lambda \ge 1$) satisfying $f(\lambda_{\min}) = 1$. We then conclude that $M\ge 0 \Leftrightarrow \lambda\ge \lambda_{\min} = 2$. By Eq.~(\ref{eq:rho_ABC'_out_ROA}), we have $R_\mathrm{T}[\mathcal{I}\otimes\Lambda_\mathdutchcal{S}^{(\Gamma)}(\tilde{\gamma}_{AA'})] = \lambda_{\min}-1 = 1,\,\forall\,\alpha\in(0,1),\,\varphi\in[0,2\pi]$. Since $R_\mathrm{T}[\mathcal{I}\otimes\Lambda_\mathdutchcal{S}^{(\Gamma)}(\tilde{\gamma}_{AA'})] = 1$ also for $\alpha = 0$ or $1$, $\varphi\in[0,2\pi]$,
    using Eq.~(\ref{eq:ER_Lambda_S^Gamma}), we therefore obtain $R(\Lambda_\mathdutchcal{S}^{(\Gamma)}||\Gamma_{A\rightarrow BC'}) = 1,\,\forall\,\alpha\in[0,1],\,\varphi\in[0,2\pi]$.

    Now the only thing left is to show that
    \begin{align}
        \min\{f|\sigma \le f\gamma\otimes\gamma\} = 1.
        \label{eq:minimisation_barely_positive}
    \end{align}
By Lemma~\ref{lem:barely_positive}, we note that Eq.~(\ref{eq:minimisation_barely_positive}) $\Leftrightarrow$ $\exists \ket{\phi},\, \bra{\phi}\gamma\otimes\gamma-\sigma\ket{\phi} = 0$ and $\gamma\otimes\gamma-\sigma \ge 0$. The (unnormalised) state $\ket{\phi}$ can be found as $\ket{\phi} \equiv (\gamma^{-1/2}\otimes\mathbbm{1})\ket{\phi^+}$. We have
    \begin{align}
        \bra{\phi}\gamma\otimes\gamma-\sigma\ket{\phi} &= \bra{\phi^+}(\gamma^{-1/2}\otimes\mathbbm{1})(\gamma\otimes\gamma-\sigma)(\gamma^{-1/2}\otimes\mathbbm{1})\ket{\phi^+}\\ &= \bra{\phi^+}\mathbbm{1}\otimes\gamma\ket{\phi^+} - \bra{\phi^+}\gamma\otimes\mathbbm{1}\ket{\phi^+}\bra{\phi^+}\gamma\otimes\mathbbm{1}\ket{\phi^+}\\ &= \mathrm{Tr}\{\gamma\}-\mathrm{Tr}\{\gamma\}\mathrm{Tr}\{\gamma\}\\ & = 0,
    \end{align}
    where in the second line we used the fact that $\sigma = (\gamma^{3/2}\otimes\mathbbm{1})\ket{\phi^+}\!\bra{\phi^+}(\gamma^{3/2}\otimes\mathbbm{1})$.

    Now we show that $\gamma\otimes\gamma-\sigma \ge 0$. Using  
    Eq.~(\ref{eq:sigma_BA_alternative_expression}), we have 
    \begin{align}
        \gamma\otimes\gamma - \sigma = (\gamma^{1/2}\otimes\gamma^{1/2})(\mathbbm{1}\otimes\mathbbm{1} - \tilde{\gamma})(\gamma^{1/2}\otimes\gamma^{1/2}).
    \end{align}
    Since $\mathbbm{1}\otimes\mathbbm{1} - \tilde{\gamma} \ge 0$ and $\gamma$ is full-rank, we obtain $\gamma\otimes\gamma - \sigma \ge 0$.
    Hence, Eq.~(\ref{eq:minimisation_barely_positive}) is proven and as we have argued above, this leads us to $R(\Lambda_\mathdutchcal{S}^{(\Gamma)}||\Gamma_{A\rightarrow BC'}) = 1,\,\forall\,\alpha\in[0,1],\,\varphi\in[0,2\pi]$.
\end{proof}
Now we prove the upper bound [Eq.~(\ref{eq:switch_bound_upper})] in Theorem~\ref{thm:switch_bounds}. 
According to Eq.~(\ref{eq:Lambda_S^G_cvx_mixture}), we can rewrite $\Lambda_\mathdutchcal{S}^{(G)}$ as
\begin{align}
    \Lambda_\mathdutchcal{S}^{(G)} &= s^2\Lambda_\mathdutchcal{S}^{(\Gamma)} + 2s(1-s)\rho_C\otimes\Gamma + (1-s)^2\rho_C\otimes\mathcal{I}\label{appeq:Lambda_S^G_expression}\\
    &= s^2\Lambda_\mathdutchcal{S}^{(\Gamma)} + (1-s^2)\rho_C\otimes G',
\end{align}
where $G'\in\mathrm{GPO}(A\rightarrow B)$ is defined as
\begin{align}
    G'\equiv \frac{2s(1-s)}{1-s^2}\Gamma + \frac{(1-s)^2}{1-s^2}\mathcal{I}.
\end{align}
We then have
\begin{align}
    R(\Lambda_\mathdutchcal{S}^{(G)}||\Gamma_{A\rightarrow BC'}) &\le s^2 R(\Lambda_\mathdutchcal{S}^{(\Gamma)}||\Gamma_{A\rightarrow BC'}) + (1-s^2) R(\rho_C\otimes G'||\Gamma_{A\rightarrow BC'})\\
    &= s^2 + (1-s^2)\left[\left(1+R(G'||\Gamma)\right)\left(1+R_\mathrm{T}(\rho_C)\right) - 1\right]\\
    &= s^2 + (1-s^2)\left[2\left(1+P_\mathrm{T}(G')\right) - 1\right]\\
    &=s^2 + (1-s^2)\left[1 + \frac{2(1-s)^2}{1-s^2}P_\mathrm{T}(\mathcal{I})\right]\\
    &= 1 + 2(1-s)^2 P_\mathrm{T}(\mathcal{I})\\
    &= R_\mathrm{T}(\rho_C) + 2 P_\mathrm{T}(G)^2/P_\mathrm{T}(\mathcal{I}), \label{appeq:R(Lambda_S^G||Gamma)_upper_bound}
\end{align}
where the inequality is from the convexity of $R(\Lambda||\Gamma)$, the second line is from Lemma~\ref{lem:R_Lambda_S^Gamma} and the multiplicity of $R(\Lambda||\Gamma)$ by considering $\rho_C$ as a state-preparing map, and in the third and the last line we used the fact that $R_\mathrm{T}(\rho_C) = 1,\,\forall\,\alpha\in[0,1],\,\varphi\in[0,2\pi]$. Moreover, in the forth line and the last line we used the following Lemma~\ref{lem:P_T(G)}: 
\begin{lemmaApp}\label{lem:P_T(G)}
    Given a GPO $G \equiv q\Gamma + (1-q)\mathcal{I}$ for $q\in[0,1]$, the following equality holds:
    \begin{align}
        P_\mathrm{T}(G) = (1-q)P_\mathrm{T}(\mathcal{I}).
    \end{align}
\end{lemmaApp}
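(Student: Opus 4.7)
The plan is to work directly from the Choi-operator form of $R(\Lambda\|\Gamma)$ introduced in Section~A, namely $1+R(\Lambda\|\Gamma)=\min\{\lambda\mid J_\Lambda\le\lambda J_\Gamma\}$ with $J_\Gamma=\mathbbm{1}_A\otimes\gamma_B$, and reduce the SDP for $P_\mathrm{T}(G)$ to the SDP for $P_\mathrm{T}(\mathcal{I})$ by a simple affine rearrangement. Since the Choi map is linear, $J_G=qJ_\Gamma+(1-q)J_\mathcal{I}$, so the feasibility condition $J_G\le\lambda J_\Gamma$ is equivalent to
\begin{equation}
(1-q)J_\mathcal{I}\le(\lambda-q)J_\Gamma.
\end{equation}
For $q\in[0,1)$ this rewrites as $J_\mathcal{I}\le\frac{\lambda-q}{1-q}J_\Gamma$, which by definition of $P_\mathrm{T}(\mathcal{I})$ has minimum achieved precisely when $\frac{\lambda-q}{1-q}=1+P_\mathrm{T}(\mathcal{I})$.

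Thus I would solve for $\lambda$ to obtain $\lambda^\ast=q+(1-q)\bigl(1+P_\mathrm{T}(\mathcal{I})\bigr)$, and subtract one to conclude
\begin{equation}
P_\mathrm{T}(G)=\lambda^\ast-1=(1-q)P_\mathrm{T}(\mathcal{I}),
\end{equation}
as claimed. The boundary case $q=1$ is handled separately by noting that then $G=\Gamma$, so $P_\mathrm{T}(G)=R(\Gamma\|\Gamma)=0$, which matches the formula $(1-1)P_\mathrm{T}(\mathcal{I})=0$.

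There is essentially no real obstacle here: the argument is a one-line manipulation of the defining SDP using linearity of the Choi map. The only subtle point is making sure the division by $(1-q)$ is legitimate, which is why $q=1$ must be treated as a trivial special case. I would present the proof in roughly four lines: state the Choi decomposition, perform the rearrangement, invoke the definition of $P_\mathrm{T}(\mathcal{I})$ to identify $\lambda^\ast$, and verify the boundary case.
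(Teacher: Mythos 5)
Your proposal is correct and takes essentially the same approach as the paper: both rest on the rearrangement $G\le\lambda\Gamma \Leftrightarrow \mathcal{I}\le\frac{\lambda-q}{1-q}\Gamma$ for $q<1$, with the degenerate value of $q$ handled separately. The only (cosmetic) difference is that the paper uses this implication for the lower bound and invokes convexity of $R(\cdot\|\Gamma)$ for the upper bound, whereas you observe the rearrangement is a two-way equivalence of feasible sets and obtain both directions at once.
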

\begin{proof}
    It is straightforward that the equality holds when $q = 0$ or $1$. For $q\in(0,1)$, 
    define $\lambda^* \equiv 1+P_\mathrm{T}(G) \equiv 1+R(G||\Gamma)$. From the definition of $R(G||\Gamma)$ [Eq.~(\ref{appeq:R_def})], we have $G\le \lambda^*\Gamma \Rightarrow \mathcal{I} \le \frac{\lambda^* - q}{1-q}\Gamma \Rightarrow (1-q)P_\mathrm{T}(\mathcal{I})\le \lambda^*-1\equiv P_\mathrm{T}(G)$. On the other hand, due to the convexity of $R(G||\Gamma)$, we have $P_\mathrm{T}(G)\equiv R(G||\Gamma) \le (1-q)P_\mathrm{T}(\mathcal{I})$. Hence, $P_\mathrm{T}(G) = (1-q)P_\mathrm{T}(\mathcal{I})$.
\end{proof}
We remark that the upper bound Eq.~(\ref{appeq:R(Lambda_S^G||Gamma)_upper_bound}) is saturated as either $\alpha$ or $s = 0$ or $1$.  To see this, we note that when $\alpha = 0$ or $1$, the switch is fully off and according to Eq.~(\ref{appeq:Lambda_S^G_expression}), the induced channel $\Lambda_\mathdutchcal{S}^{(G)} = s^2\rho_C\otimes\Gamma + 2s(1-s)\rho_C\otimes\Gamma + (1-s)^2\rho_C\otimes \mathcal{I} = \rho_C\otimes[(1-(1-s)^2)\Gamma + (1-s)^2\mathcal{I}]$. Thus, using the multiplicity of $R$, $R_\mathrm{T}(\rho_C) = 1$ and Lemma.~\ref{lem:P_T(G)}, we have $R(\Lambda_\mathdutchcal{S}^{(G)}||\Gamma_{A\rightarrow BC'}) = R_\mathrm{T}(\rho_C) + 2 P_\mathrm{T}(G)^2/P_\mathrm{T}(\mathcal{I})$. The upper bound in Eq.~(\ref{appeq:R(Lambda_S^G||Gamma)_upper_bound}) is achieved as $\alpha = 0$ or $1$. When $s=0$, by Eq.~(\ref{appeq:Lambda_S^G_expression}), we have $\Lambda_\mathdutchcal{S}^{(G)} = \rho_C\otimes \mathcal{I}$ with $R(\Lambda_\mathdutchcal{S}^{(G)}||\Gamma_{A\rightarrow BC'}) = R_\mathrm{T}(\rho_C) + 2P_\mathrm{T}(\mathcal{I})$. When $s=1$, $\Lambda_\mathdutchcal{S}^{(G)} = \Lambda_\mathdutchcal{S}^{(\Gamma)}$ such that $R(\Lambda_\mathdutchcal{S}^{(G)}||\Gamma_{A\rightarrow BC'}) = R_\mathrm{T}(\rho_C)$ by Lemma~\ref{lem:R_Lambda_S^Gamma}. Hence, the upper bound in Eq.~(\ref{appeq:R(Lambda_S^G||Gamma)_upper_bound}) is saturated when $s=0$ or $1$.

The proof of the upper bound [Eq.~(\ref{eq:switch_bound_upper})] in Theorem~\ref{thm:switch_bounds} is completed.

\subsection{Derivation of Eq.~(\ref{eq:R_T(Lambda_S)}) in Theorem~\ref{thm:switch_bounds}}\label{app:R_T(Lambda_S)}
In this subsection, we derive the analytical expression of $R_\mathrm{T}(\Lambda_\mathdutchcal{S}^{(G)})$ in Eq.~(\ref{eq:R_T(Lambda_S)}). By Theorem~\ref{appthm:ER_ROA}, we know that $R_\mathrm{T}(\Lambda_\mathdutchcal{S}^{(G)}) = R_\mathrm{T}[\Lambda_\mathdutchcal{S}^{(G)}(\gamma_A)]$.
To find $\Lambda_\mathdutchcal{S}^{(G)}(\gamma_A)$, we 
recall Eq.~(\ref{eq:Lambda_S^G_cvx_mixture}) and obtain
\begin{align}
    \Lambda_\mathdutchcal{S}^{(G)}(\gamma_A) &= s^2\Lambda_\mathdutchcal{S}^{(\Gamma)}(\gamma_A) + 2s(1-s)\rho_C\otimes\Gamma(\gamma_A) + (1-s)^2\rho_C\otimes\gamma_B\\
    &= s^2\,\mathrm{Tr}_A\!\left\{\rho_{ABC'}^\mathrm{out}\right\} + (1-s^2)\rho_C\otimes\gamma_B,
    \label{appeq:Lambda_S^G(gamma_A)}
\end{align}
where $\rho_{ABC'}^\mathrm{out}$ given in Eq.~(\ref{eq:rho_ABC'_out}) is the output of $\mathcal{I}\otimes\Lambda_\mathdutchcal{S}^{(\Gamma)}(\tilde{\gamma}_{AA'})$. The last line follows from the fact that $\mathrm{Tr}_A\!\left\{\rho_{ABC'}^\mathrm{out}\right\} = \mathrm{Tr}_A\left\{\mathcal{I}\otimes\Lambda_\mathdutchcal{S}^{(\Gamma)}(\tilde{\gamma}_{AA'})\right\} = \Lambda_\mathdutchcal{S}^{(\Gamma)}\left(\mathrm{Tr}_A\left\{\tilde{\gamma}_{AA'}\right\}\right) = \Lambda_\mathdutchcal{S}^{(\Gamma)}(\gamma_{A'}) = \Lambda_\mathdutchcal{S}^{(\Gamma)}(\gamma_A)$ (note that $A'$ is a copy of $A$). Given the $\rho_C$ in Eq.~(\ref{appeq:rho_C}) and $\lambda\ge 1$, we write $[\lambda\gamma_{C'}\otimes\gamma_B - \Lambda_\mathdutchcal{S}^{(G)}(\gamma_A)]$ in the energy eigenbasis of $C'$ as a block matrix $N$ (from now we ignore the labels of systems):
\begin{align}
    N &= \lambda\begin{pmatrix}
        \gamma/2 & 0 \\
        0 & \gamma/2
    \end{pmatrix}
    -s^2\begin{pmatrix}
        \alpha\gamma & \mathrm{e}^{-\mathrm{i}\varphi}\sqrt{\alpha(1-\alpha)}\gamma^3 \\
        \mathrm{e}^{\mathrm{i}\varphi}\sqrt{\alpha(1-\alpha)}\gamma^3 & (1-\alpha)\gamma
    \end{pmatrix} - (1-s^2)\begin{pmatrix}
        \alpha\gamma & \mathrm{e}^{-\mathrm{i}\varphi}\sqrt{\alpha(1-\alpha)}\gamma \\
        \mathrm{e}^{\mathrm{i}\varphi}\sqrt{\alpha(1-\alpha)}\gamma & (1-\alpha)\gamma
    \end{pmatrix}\\
    &= \begin{pmatrix}
        (\lambda/2 - \alpha)\gamma & -\mathrm{e}^{-\mathrm{i}\varphi}\sqrt{\alpha(1-\alpha)}[s^2\gamma^3 + (1-s^2)\gamma] \\
        -\mathrm{e}^{\mathrm{i}\varphi}\sqrt{\alpha(1-\alpha)}[s^2\gamma^3 + (1-s^2)\gamma] & (\lambda/2 + \alpha-1) \gamma
    \end{pmatrix}. \label{eq:N_matrix}
\end{align}
When $\alpha = 0$ or $1$, it is easy to find that the minimal $\lambda$ ensuring $N\ge 0$ is 2, and thus $R_\mathrm{T}(\Lambda_\mathdutchcal{S}^{(G)}) = R_\mathrm{T}[\Lambda_\mathdutchcal{S}^{(G)}(\gamma_A)] = 2-1 = 1$. For $\alpha\in(0,1)$, 
to let $N\ge 0$, we need $\lambda/2 > \max\{\alpha, 1-\alpha\}$, which renders the top-left and bottom-right blocks in $N$ [Eq.~(\ref{eq:N_matrix})] be positive definite.
By Schur complement (Theorem~1.12 in Ref.~\cite{zhang2006schur}), we have
\begin{align}
    N\ge 0 \Leftrightarrow (\lambda/2+\alpha-1)\gamma - \frac{\alpha(1-\alpha)}{\lambda/2-\alpha}[s^2\gamma^3 + (1-s^2)\gamma]\gamma^{-1}[s^2\gamma^3 + (1-s^2)\gamma] \ge 0.
\end{align}
The condition can be rewritten as
\begin{align}
    \gamma \ge \frac{\alpha(1-\alpha)}{(\lambda/2+\alpha-1)(\lambda/2-\alpha)}[s^2\gamma^3 + (1-s^2)\gamma]\gamma^{-1}[s^2\gamma^3 + (1-s^2)\gamma].
\end{align}
Since both sides are diagonal in the same basis, it is equivalent to require
\begin{align}
    &g_i \ge \frac{\alpha(1-\alpha)}{(\lambda/2+\alpha-1)(\lambda/2-\alpha)}[s^2g_i^3 + (1-s^2)g_i]g_i^{-1}[s^2g_i^3 + (1-s^2)g_i],\quad \forall\, i=0,1,\dots,d-1, \\
    \Rightarrow \quad & \lambda \ge 1 + \sqrt{1-4(1-g_i^2)[2-(1-g_i^2)s^2]s^2\alpha(1-\alpha)}, \quad \forall\, i=0,1,\dots,d-1,\\
    \Rightarrow \quad & \lambda \ge 1 + \sqrt{1-4(1-g_{\max}^2)[2-(1-g_{\max}^2)s^2]s^2\alpha(1-\alpha)},
\end{align}
where $g_{\max}\equiv \max_i g_i$ is the maximal diagonal of $\gamma$. In the second line we have used $\alpha\in(0,1)$, $\lambda\ge 1$ and $\lambda/2 > \max\{\alpha, 1-\alpha\}$ to neglect other ranges of $\lambda$ and the last line follows from that the right-hand side of the second line is a monotonically increasing function of $g_i$. 
Therefore, we have 
\begin{align}
    R_\mathrm{T}(\Lambda_\mathdutchcal{S}^{(G)}) = R_\mathrm{T}[\Lambda_\mathdutchcal{S}^{(G)}(\gamma_A)] = \lambda_{\min} - 1 = \sqrt{1-4(1-g_{\max}^2)[2-(1-g_{\max}^2)s^2]s^2\alpha(1-\alpha)},
    \label{appeq:R_T(Lambda_S^G)}
\end{align}
$\forall\, \alpha\in[0,1],\,\varphi\in[0,2\pi],\,s\in[0,1]$, where we have included the case of $\alpha = 0$ or 1 because the expression correctly gives $R_\mathrm{T}(\Lambda_\mathdutchcal{S}^{(G)}) = 1$ as $\alpha = 0$ or $1$. Eq.~(\ref{eq:R_T(Lambda_S)}) is hence derived.
\subsection{Generalisation of Theorem~\ref{thm:switch_bounds} for a general \texorpdfstring{$\rho_C$}{RC}}\label{app:general_rho_C}
In previous subsections, Theorem~\ref{thm:switch_bounds} is proved for a pure state $\rho_C$ defined in Eq.~(\ref{appeq:rho_C}). In this subsection, we extend Theorem~\ref{thm:switch_bounds} for a fully general $\rho_C$, which can always be written as
\begin{align}
    \rho_C\equiv r\ket{\psi}\!\bra{\psi}_C + (1-r)\gamma_C \text{ with } \ket{\psi}_C\equiv \sqrt{\alpha}\ket{0} + \mathrm{e}^{\mathrm{i}\varphi}\sqrt{1-\alpha}\ket{1},
    \label{appeq:rho_C_gen}
\end{align}
for $r\in[0,1]$, $\alpha\in [0,1]$ and $\varphi\in [0, 2\pi]$. Since $\gamma_C \equiv \mathbbm{1}_C/2$ is the centre of Bloch sphere while $\ket{\psi}\!\bra{\psi}_C$ parametrises all points on the Bloch sphere, $\rho_C$ as a convex combination of $\gamma_C$ and $\ket{\psi}\!\bra{\psi}_C$ covers all possible quantum states of the control qubit $C$. Following the same idea of the proof of Lemma~\ref{lem:P_T(G)}, it is easy to show that $R_\mathrm{T}(\rho_C) = rR_\mathrm{T}(\ket{\psi}\!\bra{\psi}_C) = r,\,\forall\,\alpha\in[0,1],\,\varphi\in[0,2\pi]$. In this case,
\begin{align}
    \Lambda_\mathdutchcal{S}^{(\Gamma)}(\cdot)\equiv\mathdutchcal{S}[\Gamma,\Gamma](\rho_C\otimes\cdot) = r\mathdutchcal{S}[\Gamma,\Gamma](\ket{\psi}\!\bra{\psi}_C\otimes\cdot) + (1-r)\mathdutchcal{S}[\Gamma,\Gamma](\gamma_C\otimes\cdot)
    = r\tilde{\Lambda}_\mathdutchcal{S}^{(\Gamma)}(\cdot) + (1-r)\gamma_{C}\otimes\Gamma(\cdot),
\end{align}
where $\tilde{\Lambda}_\mathdutchcal{S}^{(\Gamma)}$ denotes $\Lambda_\mathdutchcal{S}^{(\Gamma)}$ in Section~\ref{app:proof_switch_upper_bound}, which is the induced channel with the pure control qubit $\ket{\psi}\!\bra{\psi}_C$. By Eq.~(\ref{eq:Lambda_S^G_cvx_mixture}), we can write $\Lambda_\mathdutchcal{S}^{(G)}(\cdot) \equiv \mathdutchcal{S}[G,G](\rho_C\otimes\cdot)$ as
\begin{align}
    \Lambda_\mathdutchcal{S}^{(G)}&= s^2\Lambda_\mathdutchcal{S}^{(\Gamma)} + 2s(1-s)\rho_C\otimes\Gamma + (1-s)^2\rho_C\otimes\mathcal{I}\\
    &= rs^2 \tilde{\Lambda}_\mathdutchcal{S}^{(\Gamma)} + (1-r)s^2\gamma_{C}\otimes\Gamma + (1-s^2)\rho_C\otimes G',
    \label{appeq:Lambda_S^G_expression_gen}
\end{align}
where $G'\in\mathrm{GPO}(A\rightarrow B)$ is defined as
\begin{align}
    G' \equiv \frac{2s(1-s)}{1-s^2}\Gamma + \frac{(1-s)^2}{1-s^2}\mathcal{I}.
\end{align}
We then have
\begin{align}
    R(\Lambda_\mathdutchcal{S}^{(G)}||\Gamma_{A\rightarrow BC'}) &\le rs^2 R(\tilde{\Lambda}_\mathdutchcal{S}^{(\Gamma)}||\Gamma_{A\rightarrow BC'}) + (1-r)s^2 R(\gamma_{C}\otimes\Gamma||\Gamma_{A\rightarrow BC'}) + (1-s^2) R(\rho_C\otimes G'||\Gamma_{A\rightarrow BC'})\\
    &= rs^2 + (1-s^2)\left[\left(1+R(G'||\Gamma)\right)\left(1+R_\mathrm{T}(\rho_C)\right) - 1\right]\\
    &= rs^2 + (1-s^2)\left[\left(1+P_\mathrm{T}(G')\right)(1+r) - 1\right]\\
    &=rs^2 + (1-s^2)\left[r+\frac{(1+r)(1-s)^2}{1-s^2}P_\mathrm{T}(\mathcal{I})\right]\\
    &= r + (1+r)(1-s)^2 P_\mathrm{T}(\mathcal{I})\\
    &= R_\mathrm{T}(\rho_C) + [1+R_\mathrm{T}(\rho_C)] P_\mathrm{T}(G)^2/P_\mathrm{T}(\mathcal{I}), \label{appeq:R(Lambda_S^G||Gamma)_upper_bound_gen}
\end{align}
where the inequality is from the convexity of $R(\Lambda||\Gamma)$, the second line is from Lemma~\ref{lem:R_Lambda_S^Gamma} and the multiplicity of $R(\Lambda||\Gamma)$ by considering $\rho_C$ as a state-preparing map, and in the third and the last line we used the fact that $R_\mathrm{T}(\rho_C) = r,\,\forall\,\alpha\in[0,1],\,\varphi\in[0,2\pi]$. Moreover, in the forth line and the last line we used Lemma~\ref{lem:P_T(G)}. By similar arguments around Eq.~(\ref{appeq:R(Lambda_S^G||Gamma)_upper_bound}), we find that the upper bound Eq.~(\ref{appeq:R(Lambda_S^G||Gamma)_upper_bound_gen}) is saturated as either $\alpha$ or $s = 0$ or 1.

To derive the analytical expression of $R_\mathrm{T}(\Lambda_\mathdutchcal{S}^{(G)})$ for a general $\rho_C$, by Theorem~\ref{appthm:ER_ROA}, we know that $R_\mathrm{T}(\Lambda_\mathdutchcal{S}^{(G)}) = R_\mathrm{T}[\Lambda_\mathdutchcal{S}^{(G)}(\gamma_A)]$.
Using Eqs.~(\ref{appeq:rho_C_gen}) and (\ref{appeq:Lambda_S^G_expression_gen}), we rewrite $\Lambda_\mathdutchcal{S}^{(G)}(\gamma_A)$ as
\begin{align}
    \Lambda_\mathdutchcal{S}^{(G)}(\gamma_A) &= rs^2\tilde{\Lambda}_\mathdutchcal{S}^{(\Gamma)}(\gamma_A) + (1-r)s^2\gamma_C\otimes\gamma_B + (1-s^2)\rho_C\otimes\gamma_B\\
    &= rs^2\tilde{\Lambda}_\mathdutchcal{S}^{(\Gamma)}(\gamma_A) + r(1-s^2)\ket{\psi}\!\bra{\psi}_C\otimes\gamma_B + (1-r)\gamma_C\otimes\gamma_B\\
    &= r\tilde{\Lambda}_\mathdutchcal{S}^{(G)}(\gamma_A) + (1-r)\gamma_C\otimes\gamma_B,
\end{align}
where 
\begin{align}
    \tilde{\Lambda}_\mathdutchcal{S}^{(G)}(\gamma_A) \equiv s^2\tilde{\Lambda}_\mathdutchcal{S}^{(\Gamma)}(\gamma_A) +(1-s^2)\ket{\psi}\!\bra{\psi}_C\otimes\gamma_B.
\end{align}
Following the same idea of the proof of Lemma~\ref{lem:P_T(G)}, it is straightforward to show that $R_\mathrm{T}[\Lambda_\mathdutchcal{S}^{(G)}(\gamma_A)] = rR_\mathrm{T}[\tilde{\Lambda}_\mathdutchcal{S}^{(G)}(\gamma_A)]$. Noticing that $\tilde{\Lambda}_\mathdutchcal{S}^{(G)}(\gamma_A)$ denotes $\Lambda_\mathdutchcal{S}^{(G)}(\gamma_A)$ in Eq.~(\ref{appeq:Lambda_S^G(gamma_A)}) in Section~\ref{app:R_T(Lambda_S)}, corresponding to the induced channel with the pure control qubit $\ket{\psi}\!\bra{\psi}_C$. Using the expression of $R_\mathrm{T}[\tilde{\Lambda}_\mathdutchcal{S}^{(G)}(\gamma_A)]$ in Eq.~(\ref{appeq:R_T(Lambda_S^G)}), we have
\begin{align}
    R_\mathrm{T}[\Lambda_\mathdutchcal{S}^{(G)}(\gamma_A)] = rR_\mathrm{T}[\tilde{\Lambda}_\mathdutchcal{S}^{(G)}(\gamma_A)] = r\sqrt{1-4(1-g_{\max}^2)[2-(1-g_{\max}^2)s^2]s^2\alpha(1-\alpha)},
    \label{appeq:R_T(Lambda_S^G)_gen}
\end{align}
$\forall\, r\in[0,1],\,\alpha\in[0,1],\,\varphi\in[0,2\pi],\,s\in[0,1]$.

Comparing the results with a general $\rho_C$ [Eqs.~(\ref{appeq:R(Lambda_S^G||Gamma)_upper_bound_gen}) and~(\ref{appeq:R_T(Lambda_S^G)_gen})] with those with a pure $\rho_C$ [Eqs.~(\ref{appeq:R(Lambda_S^G||Gamma)_upper_bound}) and~(\ref{appeq:R_T(Lambda_S^G)})], we note that the mixedness of $\rho_C$ only changes our results by a pre-factor $r$, which relates to the purity of $\rho_C$ as $\mathcal{P}(\rho_C)\equiv \mathrm{Tr}\{\rho_C^2\} = (1+r^2)/2$.

\section{Resource analysis for coherent control}\label{app:coherent_control}
An alternative way to activate the signalling ability of a quantum channel by multiple using it is the coherent control of quantum channels~\cite{abbott2020communication}. In this setup, the implementation of one of two quantum channels is conditioned on the state of a control qubit, either $\ket{0}$ or $\ket{1}$. By preparing the control qubit in the superposition state $\ket{+}$, the two channel implementations can be coherently superposed. This differs from the case of the quantum switch, where it is the causal order of the two channels—not their implementations—that is placed in a quantum superposition.
Due to the similar structures, the performances of coherent control and quantum switch have been compared in the literature~\cite{loizeau2020channel,abbott2020communication}. In this section, we perform a similar resource analysis as we did for the quantum switch and show how the two objects are similar.
\begin{figure}
    \centering
    \includegraphics[width=0.6\linewidth]{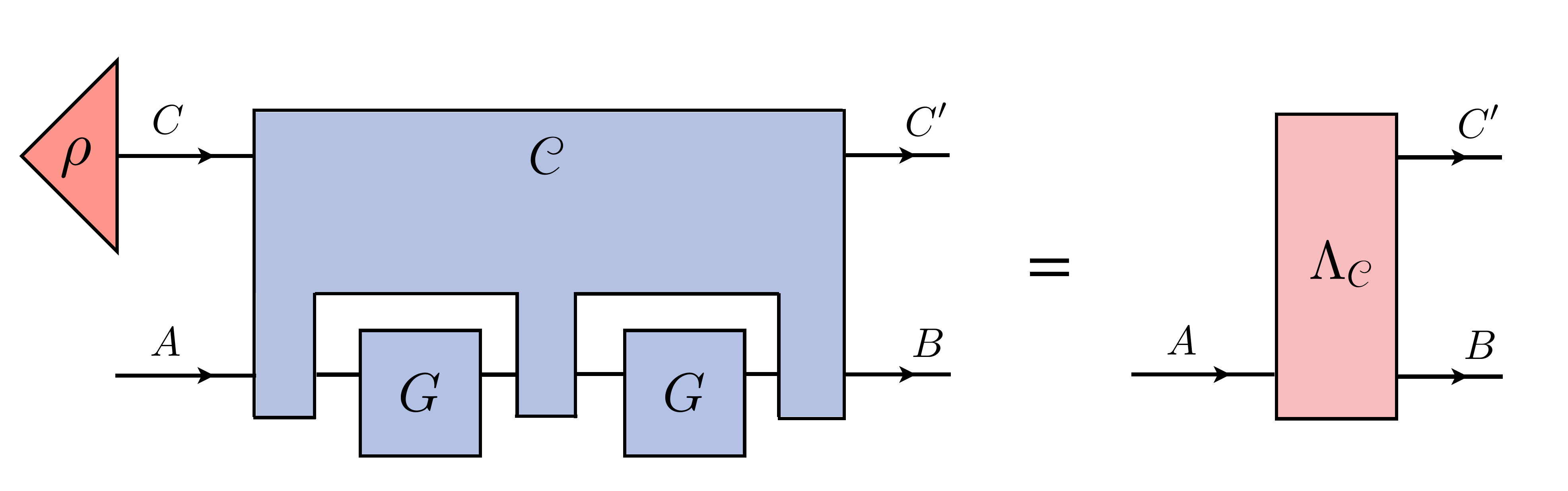}
    \caption{A coherent control of two GPOs with a control qubit $\rho_C$ can be considered as a channel simulation of $\Lambda_\mathdutchcal{C}$.}
    \label{fig:coherent_control}
\end{figure}

For a fair comparison, we choose the same control qubit $\rho_C$ [Eq.~(\ref{appeq:rho_C})], systems $A$ and $B$, and the signalling GPO [Eq.~(\ref{appeq:signalling_G})] as those for the quantum switch. The channel induced by the coherent control is denoted as $\Lambda_\mathdutchcal{C} \equiv \mathdutchcal{C}[G,G](\rho_C\otimes\cdot) \in \mathcal{O}(A\rightarrow BC')$ (See Fig.~\ref{fig:coherent_control}). We have the following theorem:
\begin{theoremApp}\label{appthm:coh_cntrl_bounds}
    Given the coherent control with the two input channels as $G$ in Eq.~(\ref{appeq:signalling_G}), the following bounds hold
    \begin{align}
        R(\Lambda_\mathdutchcal{C}||\Gamma) &\le R_\mathrm{T}(\rho_C) + 2P_\mathrm{T}(G) + \sqrt{\alpha(1-\alpha)s(1-s)}f(\gamma), \\
        R(\Lambda_\mathdutchcal{C}||\Gamma) &\ge R_\mathrm{S}(\Lambda_\mathdutchcal{C}) \ge (g_{\min}^4/2)\left[R(\Lambda_\mathdutchcal{C}||\Gamma) - R_\mathrm{T}(\Lambda_\mathdutchcal{C})\right]^2,
        \label{appeq:coh_ctrl_bound_lower}
    \end{align}
    where $R_\mathrm{T}(\rho_C) = 1,\,\forall\,\alpha\in[0,1],\,\varphi\in[0,2\pi]$ and $f(\gamma)>0$ is a function of $\gamma$ (See Section~\ref{app:derivation_of_R_T(Lambda_C)}). When $s=1$, i.e.,~the input GPOs are $G\equiv \Gamma$, 
    \begin{align}
        R_\mathrm{T}(\Lambda_\mathdutchcal{C}) = \sqrt{(1-2\alpha)^2 + 4d^{-2}\alpha(1-\alpha)},\,\, \forall\, \alpha\in[0,1],\,\varphi\in[0,2\pi].
        \label{eq:R_T(Lambda_C)}
    \end{align}
    The upper bound of $R(\Lambda_\mathdutchcal{C}||\Gamma)$ is saturated for $s=0$ or $1$.
\end{theoremApp}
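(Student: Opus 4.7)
The plan is to mirror the strategy used for the switch in Sections~\ref{app:proof_switch_upper_bound} and~\ref{app:R_T(Lambda_S)}. First, I would write $\Lambda_\mathdutchcal{C}(\cdot)=\mathdutchcal{C}[G,G](\rho_C\otimes\cdot)$ in Kraus form, using that each Kraus operator of $\mathdutchcal{C}[G_0,G_1]$ has the structure $K=\ket{0}\!\bra{0}_C\otimes K^{(0)}+\ket{1}\!\bra{1}_C\otimes K^{(1)}$. Bilinearity in the two input channels together with $G=s\Gamma+(1-s)\mathcal{I}$ then yields a four-term decomposition
\begin{align*}
\Lambda_\mathdutchcal{C}=s^2\Lambda_\mathdutchcal{C}^{(\Gamma,\Gamma)}+s(1-s)\bigl[\Lambda_\mathdutchcal{C}^{(\Gamma,\mathcal{I})}+\Lambda_\mathdutchcal{C}^{(\mathcal{I},\Gamma)}\bigr]+(1-s)^2\Lambda_\mathdutchcal{C}^{(\mathcal{I},\mathcal{I})},
\end{align*}
analogous to Eq.~\eqref{eq:S[G,G]_cvx_mixture}. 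Unlike the switch, the cross terms are not equivalent to an incoherent $\rho_C\otimes\Gamma$ or $\rho_C\otimes\mathcal{I}$ contribution: they carry coherent off-diagonal blocks in the control basis with amplitude $\sqrt{\alpha(1-\alpha)}$, which is ultimately responsible for the extra square-root term $\sqrt{\alpha(1-\alpha)s(1-s)}f(\gamma)$ in the upper bound.

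For the upper bound on $R(\Lambda_\mathdutchcal{C}||\Gamma)$ I would apply convexity and multiplicativity of $R(\cdot||\Gamma)$, together with $R_\mathrm{T}(\rho_C)=1$ and Lemma~\ref{lem:P_T(G)}, following the chain of inequalities used in Eq.~\eqref{appeq:R(Lambda_S^G||Gamma)_upper_bound}. This reduces the task to (i) bounding $R(\Lambda_\mathdutchcal{C}^{(\Gamma,\Gamma)}||\Gamma)$ by a switch-style block-Schur argument paralleling the proof of Lemma~\ref{lem:R_Lambda_S^Gamma}, and (ii) controlling $R$ of the cross terms $\Lambda_\mathdutchcal{C}^{(\Gamma,\mathcal{I})}+\Lambda_\mathdutchcal{C}^{(\mathcal{I},\Gamma)}$, where the off-diagonal control blocks are computed explicitly and their spectral radius (as a function of $\gamma$) defines $f(\gamma)$. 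Saturation at $s\in\{0,1\}$ is then automatic because the $\sqrt{s(1-s)}$ prefactor annihilates the cross contribution and $\Lambda_\mathdutchcal{C}$ degenerates into the product channel $\rho_C\otimes(\text{single GPO})$, whose $R$ is evaluated exactly via multiplicativity.

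For the analytical expression of $R_\mathrm{T}(\Lambda_\mathdutchcal{C})$ at $s=1$, I would apply Theorem~\ref{appthm:ER_ROA} to pass to $R_\mathrm{T}[\Lambda_\mathdutchcal{C}(\gamma_A)]$, compute the output directly from the $\Gamma$-Kraus operators (both branches map to $\gamma_B$, plus a coherent off-diagonal block proportional to $\mathrm{e}^{\pm\mathrm{i}\varphi}\sqrt{\alpha(1-\alpha)}$ carrying a $\gamma$-dependent scalar), write the SDP witness $\lambda\gamma_{C'}\otimes\gamma_B-\Lambda_\mathdutchcal{C}(\gamma_A)$ as a $2\times 2$ block matrix in the control basis, and apply the Schur complement (Theorem~1.12 of Ref.~\cite{zhang2006schur}) exactly as in Eqs.~\eqref{eq:N_matrix}--\eqref{appeq:R_T(Lambda_S^G)}. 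Solving the resulting scalar quadratic in $\lambda$ will yield $\lambda_{\min}=1+\sqrt{(1-2\alpha)^2+4d^{-2}\alpha(1-\alpha)}$, matching Eq.~\eqref{eq:R_T(Lambda_C)}. The signaling lower bound is then a direct instantiation of Theorem~\ref{thm:R_S_bounds} with $g_{\min}^{(AB)}=g_{\min}^{2}$, so nothing beyond quoting that theorem is needed.

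The main obstacle is the analogue of Lemma~\ref{lem:R_Lambda_S^Gamma}. For the switch, the two branches recombine symmetrically, forcing $R(\Lambda_\mathdutchcal{S}^{(\Gamma)}||\Gamma)=1$ independently of $\alpha,\varphi$; in coherent control the branches are not automatically rebalanced by the channel action, so the off-diagonal block between control sectors retains a genuine thermal-state weighting. Pinning down the precise form of $f(\gamma)$ and, where the Schur step is not tight on its own, constructing a witness state $\ket{\phi}$ that saturates Lemma~\ref{lem:barely_positive} is the technically delicate part; once this is in hand, everything else is a direct application of the machinery already developed for the switch.
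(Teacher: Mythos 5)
Your overall architecture (decompose the coherent control, handle the pure-$\Gamma$ part via a Schur-complement lemma, bound the cross terms, combine via convexity, and get the $s=1$ expression for $R_\mathrm{T}$ from Theorem~\ref{appthm:ER_ROA} plus a $2\times2$ block analysis) matches the paper's. However, there is a genuine gap at the very first step: the coherent control, as defined in Eq.~(\ref{eq:coherent_control_Kraus_general}) with the $1/\sqrt{N_{\mathrm{Kraus}}}$ normalisation inherited from the uniform environment state, is \emph{not} bilinear in the input channels, so the decomposition you posit, with weights $s^2$, $s(1-s)$, $(1-s)^2$, is incorrect. Because $G=s\Gamma+(1-s)\mathcal{I}$ has $d^2+1$ Kraus operators while $\Gamma$ has $d^2$ and $\mathcal{I}$ has one, each group of Kraus terms is reweighted by the ratio of Kraus counts, and the correct decomposition [Eq.~(\ref{eq:C[Gamma,Gamma]_cvx_mixture})] is
\begin{align}
\mathdutchcal{C}[G,G]=\tfrac{d^2 s}{d^2+1}\,\mathdutchcal{C}[\Gamma,\Gamma]+\tfrac{s+(1-s)d^2}{d^2+1}\,\mathcal{M}+\tfrac{1-s}{d^2+1}\,\mathcal{I}_C\otimes\mathcal{I},
\end{align}
where the cross-term channel $\mathcal{M}$ contains, besides the $\sqrt{\alpha(1-\alpha)s(1-s)}$ off-diagonal control blocks you correctly anticipate, a control-diagonal part that is an $s$ versus $(1-s)d^2$ weighted mixture of $\Gamma$ and $\mathcal{I}$. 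This matters quantitatively: with your coefficients the convexity chain reproduces the switch-type bound with a term quadratic in $(1-s)$ [cf.\ Eq.~(\ref{appeq:R(Lambda_S^G||Gamma)_upper_bound})], whereas the claimed bound contains $2P_\mathrm{T}(G)=2(1-s)P_\mathrm{T}(\mathcal{I})$, linear in $(1-s)$; that linear term arises precisely from the $(1-s)d^2$ identity contribution hidden inside the cross terms. Without the correct bookkeeping the stated upper bound is not reached.

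Two smaller points. First, your claim that at $s=1$ the channel ``degenerates into the product channel $\rho_C\otimes(\text{single GPO})$'' so that saturation follows from multiplicativity is false: $\Lambda_\mathdutchcal{C}^{(\Gamma)}$ retains coherence between the control sectors (off-diagonal blocks proportional to $\hat{\gamma}\otimes\hat{\gamma}$), and establishing $R(\Lambda_\mathdutchcal{C}^{(\Gamma)}||\Gamma)=1$ requires the full Schur-complement argument with the witness of Lemma~\ref{lem:barely_positive} and the identity $\hat{\gamma}\gamma^{-1}\hat{\gamma}=d\hat{\gamma}$ (Lemma~\ref{lem:R_Lambda_C^Gamma}) -- you do flag this as the main obstacle, but the saturation claim cannot be waved through by multiplicativity. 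Second, in instantiating Theorem~\ref{thm:R_S_bounds} the relevant minimal population is that of $\gamma_A\otimes\gamma_B\otimes\gamma_{C'}$ with $\gamma_{C'}=\mathbbm{1}/2$, i.e.\ $g_{\min}^2/2$ rather than $g_{\min}^2$; this is where the prefactor $g_{\min}^4/2$ in Eq.~(\ref{appeq:coh_ctrl_bound_lower}) comes from.
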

When $s = 1$, by Theorem~\ref{appthm:coh_cntrl_bounds}, we have $R(\Lambda_\mathdutchcal{C}||\Gamma) = R_\mathrm{T}(\rho_C)$. Eq.~(\ref{appeq:coh_ctrl_bound_lower}) reduces to
\begin{align}
    R_\mathrm{T}(\rho_C) \ge R_\mathrm{S}(\Lambda_\mathdutchcal{C}) \ge (g_{\min}^4/2)[R_\mathrm{T}(\rho_C) - R_\mathrm{T}(\Lambda_\mathdutchcal{C})]^2.
\end{align}
$R_\mathrm{T}(\Lambda_\mathdutchcal{C})$ and $R_\mathrm{S}(\Lambda_\mathdutchcal{C})$ thus exhibit a similar trade-off as in the case of quantum switch.
\subsection{Proof of the upper bound in Theorem~\ref{appthm:coh_cntrl_bounds}}
Given a general quantum channel $\Omega\in\mathcal{O}(A\rightarrow B)$ with the Kraus representation $\Omega(\cdot) \equiv \sum_{n=0}^{N_{\mathrm{Kraus}}-1} K_n(\cdot) K_n^\dagger$ where $N_\mathrm{Kraus}$ is the number of Kraus operators, the Kraus representation of the coherent control $\mathdutchcal{C}[\Omega,\Omega]\in\mathcal{O}(AC\rightarrow BC')$, where $C'$ is a copy of $C$, can be written as $\mathdutchcal{C}[\Omega,\Omega](\cdot) \equiv \sum_{m=0}^{N_{\mathrm{Kraus}}-1}\sum_{n=0}^{N_{\mathrm{Kraus}}-1} C_{mn}(\cdot)C_{mn}^\dagger$, where the Kraus operator $C_{mn}$ is given by
\begin{align}
    C_{mn} \equiv \frac{1}{\sqrt{N_\mathrm{Kraus}}}\left(\ket{0}\!\bra{0}_C\otimes K_m + \ket{1}\!\bra{1}_C\otimes K_n\right),
    \label{eq:coherent_control_Kraus_general}
\end{align}
where the normalised factor $1/\sqrt{N_\mathrm{Kraus}}$ ensures $\sum_{m=0}^{N_{\mathrm{Kraus}}-1}\sum_{n=0}^{N_{\mathrm{Kraus}}-1} C_{mn}^\dagger C_{mn} = \mathbbm{1}$. As pointed in Ref.~\cite{abbott2020communication}, the action of coherent control can depend on the implementation of the controlled channels. By considering the Stinespring's dilation of the channel $\Omega(\cdot) = \mathrm{Tr}_{E}\{U(\cdot\otimes\ket{\eta}\!\bra{\eta}_E)U^\dagger\}$, the Kraus operator $C_{mn}$ in Eq.~(\ref{eq:coherent_control_Kraus_general}) corresponds to the case when $\ket{\eta}_E = \sum_{j=0}^{N_{\text{Kraus}}-1}1/\sqrt{N_{\text{Kraus}}}\ket{j}_E$ where $\{\ket{j}_E\}_j$ is an orthonormal basis of the environment $E$ (See the discussion around Eq.~(5) in Ref.~\cite{abbott2020communication}). We choose this specific form for a fair comparison with the quantum switch, whose Kraus operators are given in Eq.~(\ref{eq:swith_Kraus_general}).

Consider the case when systems $A$ and $B$ have the same thermal state, i.e.,~$\gamma_A = \gamma_B \equiv \gamma$ with the dimension of $d$. The coherent control of the completely thermalising channel $\Gamma\in\mathrm{GPO}(A\rightarrow B)$ is $\mathdutchcal{C}[\Gamma,\Gamma]$, whose Kraus operator is given by
\begin{align}
    C_{ijk\ell}^{(\Gamma)} \equiv \frac{1}{d}\left(\ket{0}\!\bra{0}_C\otimes\tilde{K}_{ij} + \ket{1}\!\bra{1}_C\otimes\tilde{K}_{k\ell}\right),
\end{align}
where $\tilde{K}_{ij}\equiv \sqrt{g_j}\ket{j}\!\bra{i}$ is the Kraus operator of $\Gamma$ [See Eq.~(\ref{eq:Gamma_Kraus})]. Thus,
\begin{align}
    \mathdutchcal{C}[\Gamma,\Gamma](\cdot) \equiv \sum_{i=0}^{d-1}\sum_{j=0}^{d-1}\sum_{k=0}^{d-1}\sum_{\ell=0}^{d-1}C_{ijk\ell}^{(\Gamma)}(\cdot)\left(C_{ijk\ell}^{(\Gamma)}\right)^\dagger.
    \label{eq:C[Gamma,Gamma]_Kraus}
\end{align}
Now consider the signalling GPO $G$ in Eq.~(\ref{appeq:signalling_G}). 
\begin{align}
    G(\cdot) &\equiv s\Gamma + (1-s)\mathcal{I} \\
    &= s\sum_{i=0}^{d-1}\sum_{j=0}^{d-1} \tilde{K}_{ij}(\cdot)\tilde{K}_{ij}^\dagger + (1-s)\mathbbm{1}(\cdot)\mathbbm{1}\\
    &\equiv \sum_{i=0}^{d-1}\sum_{j=0}^{d-1}K_{ij}(\cdot)K_{ij}^\dagger + K_{dd}(\cdot)K_{dd}^\dagger,
\end{align}
where $s\in[0,1]$ and the Kraus operators are given by 
\begin{align}
    K_{ij} &\equiv \sqrt{s}\tilde{K}_{ij} = \sqrt{s}\sqrt{g_j}\ket{j}\!\bra{i},\quad\forall\, i,j = 0,1,\dots,d-1, \\
    K_{dd} &\equiv \sqrt{1-s} \mathbbm{1}.
\end{align}
Now we consider the case when $s\in (0,1)$. The channel $G$ thus has $d^2+1$ Kraus operators. 
According to Eq.~(\ref{eq:coherent_control_Kraus_general}), the Kraus operator of the coherent control $\mathdutchcal{C}[G,G]$ is given by
\begin{align}
    C_{ijk\ell}^{(G)} &\equiv \frac{1}{\sqrt{d^2+1}}\left(\ket{0}\!\bra{0}_C\otimes K_{ij} + \ket{1}\!\bra{1}_C\otimes K_{k\ell}\right)\\
    &= \frac{\sqrt{s}}{\sqrt{d^2+1}}\left(\ket{0}\!\bra{0}_C\otimes\tilde{K}_{ij} + \ket{1}\!\bra{1}_C\otimes\tilde{K}_{k\ell} \right)\\
    &= \frac{d\sqrt{s}}{\sqrt{d^2+1}}\,C_{ijk\ell}^{(\Gamma)},\\
    C_{ijdd}^{(G)} &\equiv \frac{1}{\sqrt{d^2+1}}\left(\ket{0}\!\bra{0}_C\otimes K_{ij} + \ket{1}\!\bra{1}_C \otimes K_{dd}\right)\\
    &= \frac{1}{\sqrt{d^2+1}}\left(\ket{0}\!\bra{0}_C\otimes\sqrt{s}\tilde{K}_{ij} + \ket{1}\!\bra{1}_C\otimes \sqrt{1-s}\mathbbm{1}\right),\\
    C_{ddk\ell}^{(G)} &\equiv \frac{1}{\sqrt{d^2+1}}\left(\ket{0}\!\bra{0}_C\otimes K_{dd} + \ket{1}\!\bra{1}_C \otimes K_{k\ell}\right)\\
    &= \frac{1}{\sqrt{d^2+1}}\left(\ket{0}\!\bra{0}_C\otimes\sqrt{1-s}\mathbbm{1} + \ket{1}\!\bra{1}_C\otimes\sqrt{s}\tilde{K}_{k\ell}\right),\\
    C_{dddd}^{(G)} &\equiv \frac{1}{\sqrt{d^2+1}}\left(\ket{0}\!\bra{0}_C \otimes K_{dd} + \ket{1}\!\bra{1}_C\otimes K_{dd}\right)\\
    &= \frac{\sqrt{1-s}}{\sqrt{d^2+1}}\mathbbm{1}_C\otimes\mathbbm{1},
\end{align}
$\forall\, i,j,k,\ell = 0,1,\dots,d-1$. The factor $1/\sqrt{d^2+1}$ comes from the fact that $G$ has $d^2+1$ Kraus operators. The channel $\mathdutchcal{C}[G,G]$ can therefore be written as
\begin{align}
    \mathdutchcal{C}[G,G](\cdot) &= \sum_{i=0}^{d-1}\sum_{j=0}^{d-1}\sum_{k=0}^{d-1}\sum_{\ell=0}^{d-1} C_{ijk\ell}^{(G)}(\cdot)\left(C_{ijk\ell}^{(G)}\right)^\dagger + \sum_{i=0}^{d-1}\sum_{j=0}^{d-1} C_{ijdd}^{(G)}(\cdot) \left(C_{ijdd}^{(G)}\right)^\dagger + \sum_{k=0}^{d-1}\sum_{\ell=0}^{d-1} C_{ddk\ell}^{(G)}(\cdot) \left(C_{ddk\ell}^{(G)}\right)^\dagger + C_{dddd}^{(G)}(\cdot) \left(C_{dddd}^{(G)}\right)^\dagger \\
    &= \frac{d^2 s}{d^2+1}\sum_{i=0}^{d-1}\sum_{j=0}^{d-1}\sum_{k=0}^{d-1}\sum_{\ell=0}^{d-1}C_{ijk\ell}^{(\Gamma)}(\cdot)\left(C_{ijk\ell}^{(\Gamma)}\right)^\dagger + \sum_{i=0}^{d-1}\sum_{j=0}^{d-1} C_{ijdd}^{(G)}(\cdot) \left(C_{ijdd}^{(G)}\right)^\dagger + \sum_{k=0}^{d-1}\sum_{\ell=0}^{d-1} C_{ddk\ell}^{(G)}(\cdot) \left(C_{ddk\ell}^{(G)}\right)^\dagger \nonumber\\
    &\quad + \frac{1-s}{d^2+1}(\mathbbm{1}_C\otimes\mathbbm{1})(\cdot)(\mathbbm{1}_C\otimes\mathbbm{1})\\
    &= \frac{d^2s}{d^2+1}\mathdutchcal{C}[\Gamma,\Gamma](\cdot) + \frac{s+(1-s)d^2}{d^2+1}\mathcal{M}(\cdot) + \frac{1-s}{d^2+1}\mathcal{I}_C\otimes\mathcal{I}(\cdot),
    \label{eq:C[Gamma,Gamma]_cvx_mixture}
\end{align}
where in the last line we used the Kraus representation for $\mathdutchcal{C}[\Gamma,\Gamma]$ [Eq.~(\ref{eq:C[Gamma,Gamma]_Kraus})] and $\Gamma$ [Eq.~(\ref{eq:Gamma_Kraus})], and
the channel $\mathcal{M}\in\mathcal{O}(AC\rightarrow BC')$ is defined as
\begin{align}
    \mathcal{M}(\cdot) := \frac{d^2+1}{s+(1-s)d^2}\left[\sum_{i=0}^{d-1}\sum_{j=0}^{d-1} C_{ijdd}^{(G)}(\cdot) \left(C_{ijdd}^{(G)}\right)^\dagger + \sum_{k=0}^{d-1}\sum_{\ell=0}^{d-1} C_{ddk\ell}^{(G)}(\cdot) \left(C_{ddk\ell}^{(G)}\right)^\dagger\right].
    \label{eq:M_channel}
\end{align}
$\mathcal{M}$ is trace-preserving because
\begin{align}
    &\frac{d^2+1}{s+(1-s)d^2}\left[\sum_{i=0}^{d-1}\sum_{j=0}^{d-1} \left(C_{ijdd}^{(G)}\right)^\dagger C_{ijdd}^{(G)} + \sum_{k=0}^{d-1}\sum_{\ell=0}^{d-1} \left(C_{ddk\ell}^{(G)}\right)^\dagger C_{ddk\ell}^{(G)}\right] \nonumber\\
    =\, \,&\frac{1}{s+(1-s)d^2}\left[\ket{0}\!\bra{0}_C\otimes s\sum_{i,j=0}^{d-1}
    \tilde{K}_{ij}^\dagger\tilde{K}_{ij} + \ket{1}\!\bra{1}_C\otimes (1-s)d^2\mathbbm{1} + \ket{0}\!\bra{0}_C\otimes(1-s)d^2\mathbbm{1} + \ket{1}\!\bra{1}_C\otimes s\sum_{k,\ell=0}^{d-1}\tilde{K}_{k\ell}^\dagger\tilde{K}_{k\ell}\right]\\
    =\,\,&\frac{1}{s+(1-s)d^2}\left[\ket{0}\!\bra{0}_C\otimes s\mathbbm{1} + \ket{1}\!\bra{1}_C\otimes (1-s)d^2\mathbbm{1} + \ket{0}\!\bra{0}_C\otimes(1-s)d^2\mathbbm{1} + \ket{1}\!\bra{1}_C\otimes s\mathbbm{1}\right]\\
    =\,\,&\mathbbm{1}_C\otimes\mathbbm{1}.
\end{align}
From Eq.~(\ref{eq:C[Gamma,Gamma]_cvx_mixture}), the induced channel $\Lambda_\mathdutchcal{C}^{(G)}(\cdot) \equiv \mathdutchcal{C}[G,G](\rho_C\otimes\cdot)\in\mathcal{O}(A\rightarrow BC')$ can be written as
\begin{align}
    \Lambda_\mathdutchcal{C}^{(G)} = \frac{d^2s}{d^2+1}\Lambda_\mathdutchcal{C}^{(\Gamma)} + \frac{s+(1-s)d^2}{d^2+1}\mathcal{N} + \frac{1-s}{d^2+1}\rho_C\otimes\mathcal{I},
    \label{eq:Lambda_C^G_cvx_mixture}
\end{align}
where $\Lambda_\mathdutchcal{C}^{(\Gamma)}(\cdot) \equiv \mathdutchcal{C}[\Gamma,\Gamma](\rho_C\otimes\cdot)$ and $\mathcal{N}(\cdot)\equiv \mathcal{M}(\rho_C\otimes\cdot)$.

\begin{lemmaApp}\label{lem:R_Lambda_C^Gamma}
    Given the control qubit in Eq.~(\ref{appeq:rho_C}), the following equality holds
    \begin{align}
        R(\Lambda_\mathdutchcal{C}^{(\Gamma)}||\Gamma_{A\rightarrow BC'}) = 1,\quad \forall\,\alpha\in[0,1],\,\varphi\in[0,2\pi]
    \end{align}
\end{lemmaApp}
\begin{proof}
    By Theorem~\ref{appthm:ER_R}, we have
    \begin{align}
        R(\Lambda_\mathdutchcal{C}^{(\Gamma)}||\Gamma_{A\rightarrow BC'}) = R_\mathrm{T}[\mathcal{I}\otimes\Lambda_\mathdutchcal{C}^{(\Gamma)}(\tilde{\gamma}_{AA'})],
        \label{eq:ER_Lambda_C_Gamma}
    \end{align}
    where $\tilde{\gamma}_{AA'} = \ket{\tilde{\gamma}}\!\bra{\tilde{\gamma}}_{AA'}$ with $\ket{\tilde{\gamma}}_{AA'} \equiv \sum_{i=0}^{d-1}\sqrt{g_i}\ket{i}_A\otimes\ket{i}_{A'}$. Define the output state
    \begin{align}
        \rho_{ABC'}^{\mathrm{out}} &\equiv \mathcal{I}\otimes\Lambda_\mathdutchcal{C}^{(\Gamma)}(\tilde{\gamma}_{AA'})\\
        &= \mathcal{I}\otimes\mathdutchcal{C}[\Gamma,\Gamma](\rho_C\otimes\tilde{\gamma}_{AA'})\\
        &= \sum_{k,\ell=0}^{d-1}\sum_{m,n=0}^{d-1}(C_{k\ell mn}^{(\Gamma)}\otimes \mathbbm{1}_A)(\rho_C\otimes\tilde{\gamma}_{A'A})\left[\left(C_{k\ell mn}^{(\Gamma)}\right)^\dagger\otimes \mathbbm{1}_A\right]\\
        &= \sum_{i,j=0}^{d-1}\sum_{k,\ell=0}^{d-1}\sum_{m,n=0}^{d-1}\frac{\sqrt{g_i}\sqrt{g_j}}{d^2}\left[\alpha\ket{0}\!\bra{0}_C\otimes\tilde{K}_{k\ell}(\ket{i}\!\bra{j})\tilde{K}_{k\ell}^\dagger+(1-\alpha)\ket{1}\!\bra{1}_C\otimes\tilde{K}_{mn}(\ket{i}\!\bra{j})\tilde{K}_{mn}^\dagger \right. \nonumber\\
        &\left. \quad + \mathrm{e}^{-\mathrm{i}\varphi}\sqrt{\alpha(1-\alpha)}\ket{0}\!\bra{1}_C\otimes\tilde{K}_{k\ell}(\ket{i}\!\bra{j})\tilde{K}_{mn}^\dagger +  \mathrm{e}^{\mathrm{i}\varphi}\sqrt{\alpha(1-\alpha)}\ket{1}\!\bra{0}_C\otimes\tilde{K}_{mn}(\ket{i}\!\bra{j})\tilde{K}_{k\ell}^\dagger\right]\otimes\ket{i}\!\bra{j}_A\\
        &= (\alpha\ket{0}\!\bra{0}_C + (1-\alpha)\ket{1}\!\bra{1}_C)\otimes\gamma_B\otimes\gamma_A + \frac{\sqrt{\alpha(1-\alpha)}}{d^2}( \mathrm{e}^{-\mathrm{i}\varphi}\ket{0}\!\bra{1}_C +  \mathrm{e}^{\mathrm{i}\varphi}\ket{1}\!\bra{0}_C)\otimes\hat{\gamma}_B\otimes\hat{\gamma}_A,
        \label{eq:rho_ABC'_out_coh_cntrl}
    \end{align}
    where we note that $\gamma_A = \gamma_B \equiv \gamma$ and the state $\hat{\gamma}$ is defined as
    \begin{align}
        \hat{\gamma} \equiv \sum_{i,j=0}^{d-1}\sqrt{g_i}\sqrt{g_j}\ket{i}\!\bra{j}.
        \label{eq:gamma_hat}
    \end{align}
    Recalling the definition in Eq.~(\ref{appeq:R_T_st_def}), we have
    \begin{align}
        1 + R_\mathrm{T}[\mathcal{I}\otimes\Lambda_\mathdutchcal{C}^{(\Gamma)}(\tilde{\gamma}_{AA'})] = \min\left\{\lambda | \rho_{ABC'}^\mathrm{out}\le \lambda\gamma_{C'}\otimes\gamma_B\otimes\gamma_A\right\}. 
        \label{eq:rho_ABC'_out_ROA_coh_cntrl}
    \end{align}
    To find the minimal $\lambda$, we write $(\lambda\gamma_{C'}\otimes\gamma_B\otimes\gamma_A - \rho_{ABC'}^{\mathrm{out}})$ in the energy eigenbasis of $C'$ as a block matrix $M$ (from now we ignore the labels of systems):
    \begin{align}
        M = \begin{pmatrix}
            (\lambda/2-\alpha)\gamma\otimes\gamma & -d^{-2} \mathrm{e}^{-\mathrm{i}\varphi}\sqrt{\alpha(1-\alpha)}\hat{\gamma}\otimes\hat{\gamma} \\
            -d^{-2} \mathrm{e}^{\mathrm{i}\varphi}\sqrt{\alpha(1-\alpha)}\hat{\gamma}\otimes\hat{\gamma} & (\lambda/2+\alpha-1)\gamma\otimes\gamma
        \end{pmatrix},
        \label{eq:M_matrix_coh_cntrl}
    \end{align}
    with $\lambda\ge 1$. When $\alpha = 0$ or $1$, it is clear that the minimal $\lambda$ guaranteeing $M\ge 0$ is $2$, and thus $R_\mathrm{T}[\mathcal{I}\otimes\Lambda_\mathdutchcal{C}^{(\Gamma)}(\tilde{\gamma}_{AA'})] = 2-1 = 1$. For $\alpha\in(0,1)$, to let $M\ge 0$, we need $\lambda/2 > \max\{\alpha,1-\alpha\}$, which makes the top-left and bottom-right blocks in $M$ [Eq.~(\ref{eq:M_matrix_coh_cntrl})] be positive definite. By Schur complement (Theorem~1.12 in Ref.~\cite{zhang2006schur}), we have
    \begin{align}
        M\ge 0 \Leftrightarrow (\lambda/2+\alpha - 1)\gamma\otimes\gamma - \frac{\alpha(1-\alpha)}{d^4(\lambda/2-\alpha)}(\hat{\gamma}\otimes\hat{\gamma})(\gamma^{-1}\otimes\gamma^{-1})(\hat{\gamma}\otimes\hat{\gamma})\ge 0.
    \end{align}
    This condition is equivalent to the following:
    \begin{align}
        \hat{\gamma}\otimes\hat{\gamma} \le d^2\frac{(\lambda/2+\alpha-1)(\lambda/2-\alpha)}{\alpha(1-\alpha)}\gamma\otimes\gamma,
        \label{eq:schur_complement_M_coh_cntrl}
    \end{align}
    which comes from
    \begin{align}
        \hat{\gamma}\gamma^{-1}\hat{\gamma} = \sum_{i,j=0}^{d-1}\sum_{k,\ell=0}^{d-1}\sqrt{g_i}\sqrt{g_j}\sqrt{g_k}\sqrt{g_\ell}\ket{i}\!\bra{j}g_k^{-1}\ket{k}\!\bra{\ell}
        = \sum_{i,\ell=0}^{d-1}\sum_{j=0}^{d-1}\sqrt{g_i}\sqrt{g_\ell}\ket{i}\!\bra{\ell}
        = d\hat{\gamma}.
        \label{eq:hat_gamma_gamma(-1)_hat_gamma}
    \end{align}
    Define the function $f(\lambda) \equiv \frac{\alpha(1-\alpha)}{(\lambda/2+\alpha-1)(\lambda/2-\alpha)}$. $f(\lambda)$ is monotonically increasing with $\lambda$. In the following, we will show that for Eq.~(\ref{eq:schur_complement_M_coh_cntrl}) to hold, $f(\lambda)\ge 1$ such that the minimum $\lambda_{\min} = 2$ (note that $\lambda \ge 1$) satisfying $f(\lambda_{\min})$ = 1. We then conclude that $M\ge 0 \Leftrightarrow \lambda\ge \lambda_{\min}=2$. By Eq.~(\ref{eq:rho_ABC'_out_ROA_coh_cntrl}), we have $R_\mathrm{T}[\mathcal{I}\otimes\Lambda_\mathdutchcal{C}^{(\Gamma)}(\tilde{\gamma}_{AA'})] = \lambda_{\min}-1 = 1,\,\forall\,\alpha\in(0,1)$. Since $R_\mathrm{T}[\mathcal{I}\otimes\Lambda_\mathdutchcal{C}^{(\Gamma)}(\tilde{\gamma}_{AA'})] = 1$ also for $\alpha = 0$ or $1$, using Eq.~(\ref{eq:ER_Lambda_C_Gamma}), we therefore obtain $R(\Lambda_\mathdutchcal{C}^{(\Gamma)}||\Gamma_{A\rightarrow BC'}) = 1,\, \forall \, \alpha\in [0,1],\,\varphi\in[0,2\pi]$.
    
    Now the only thing left is to show that
    \begin{align}
        \min\{f|\hat{\gamma}\otimes\hat{\gamma} \le f\cdot d^2(\gamma\otimes\gamma)\} = 1.
        \label{eq:minimisation_barely_positive_coh_cntrl}
    \end{align}
    By Lemma~\ref{lem:barely_positive}, we have: Eq.~(\ref{eq:minimisation_barely_positive_coh_cntrl}) $\Leftrightarrow$ $\exists \ket{\phi},\, \bra{\phi}d^2 (\gamma\otimes\gamma)-\hat{\gamma}\otimes\hat{\gamma}\ket{\phi} = 0$ and $d^2(\gamma\otimes\gamma)-\hat{\gamma}\otimes\hat{\gamma} \ge 0$. Such a state $\ket{\phi}$ can be chosen as $\ket{\phi}\equiv \sum_{i,j=0}^{d-1}g_i^{-1/2}g_j^{-1/2}\ket{i}\otimes\ket{j}$ so that
    \begin{align}
        \bra{\phi}d^2 (\gamma\otimes\gamma) \ket{\phi} &= d^2\left(\sum_{k,\ell=0}^{d-1}g_k^{-1/2}g_\ell^{-1/2}\bra{k}\otimes\bra{\ell}\right)\left(\sum_{i,j=0}^{d-1}g_i^{1/2}g_j^{1/2}\ket{i}\otimes\ket{j}\right)
        = d^2\sum_{i,j=0}^{d-1} 1 = d^4, \\
        \bra{\phi} \hat{\gamma}\otimes\hat{\gamma} \ket{\phi} &= \bra{\phi}\sum_{i,j=0}^{d-1}\sum_{k,\ell=0}^{d-1}\sum_{m,n=0}^{d-1}g_k^{1/2}g_\ell^{1/2}g_m^{1/2}g_n^{1/2}g_i^{-1/2}g_j^{-1/2}(\ket{k}\!\bra{\ell}\otimes\ket{m}\!\bra{n})(\ket{i}\otimes\ket{j})\\
        &= \bra{\phi}\left(\sum_{i,j=0}^{d-1} 1\right)\sum_{k,m=0}^{d-1}g_k^{1/2}g_m^{1/2}\ket{k}\otimes\ket{m}\\
        &= d^2\left(\sum_{i,j=0}^{d-1}g_i^{-1/2}g_j^{-1/2}\bra{i}\otimes\bra{j}\right)\left(\sum_{k,m=0}^{d-1}g_k^{1/2}g_m^{1/2}\ket{k}\otimes\ket{m}\right)\\
        &= d^2\sum_{i,j=0}^{d-1} 1\\
        &= d^4, \\
        \Rightarrow \quad \bra{\phi}d^2 (\gamma\otimes\gamma) &-\hat{\gamma}\otimes\hat{\gamma}\ket{\phi} = d^4 - d^4 = 0.
    \end{align}
    To show that $d^2(\gamma\otimes\gamma) -\hat{\gamma}\otimes\hat{\gamma} \ge 0$, we note that 
    \begin{align}
        d^2(\gamma\otimes\gamma) - \hat{\gamma}\otimes\hat{\gamma} = (\gamma^{1/2}\otimes\gamma^{1/2})(d\mathbbm{1}\otimes d\mathbbm{1} - O_d\otimes O_d)(\gamma^{1/2}\otimes\gamma^{1/2}),
    \end{align}
    where $O_d\equiv \sum_{i,j=0}^{d-1}\ket{i}\!\bra{j}$. Since $d\mathbbm{1}\otimes d\mathbbm{1} - O_d\otimes O_d\ge 0$ and $\gamma$ is full-rank, we have $d^2(\gamma\otimes\gamma) -\hat{\gamma}\otimes\hat{\gamma} \ge 0$. Therefore, Eq.~(\ref{eq:minimisation_barely_positive_coh_cntrl}) is proven and as we have argued above, this leads us to $R(\Lambda_\mathdutchcal{C}^{(\Gamma)}||\Gamma_{A\rightarrow BC'}) = 1,\,\forall\, \alpha\in[0,1],\,\varphi\in[0,2\pi]$.
\end{proof}

    For later use, we calculate the action of $\mathcal{N}$ on a state can be obtained from $\mathcal{M}$ in Eq.~(\ref{eq:M_channel}) by using $\mathcal{N}(\cdot)\equiv \mathcal{M}(\rho_C\otimes\cdot)$. We can find
    \begin{align}
        \mathcal{N}(\cdot) &:= \frac{1}{s+(1-s)d^2}\Bigg[\left(\alpha\ket{0}\!\bra{0}_C+(1-\alpha)\ket{1}\!\bra{1}_C\right)\otimes\left(s\Gamma + (1-s)d^2\mathcal{I}\right)(\cdot) \nonumber\\
        & \quad + \sqrt{\alpha(1-\alpha)}\sqrt{s(1-s)}\left(\mathrm{e}^{-\mathrm{i}\varphi}\ket{0}\!\bra{1}_C + \mathrm{e}^{\mathrm{i}\varphi}\ket{1}\!\bra{0}_C\right)\otimes \sum_{k,\ell=0}^{d-1}\left(\tilde{K}_{k\ell}(\cdot) + (\cdot)\tilde{K}_{k\ell}^\dagger\right)\Bigg].
    \end{align}
    The output state of $\mathcal{I}\otimes\mathcal{N}$ with the input $\tilde{\gamma}_{AA'}$ is thus given as
    \begin{align}
        \rho_{ABC'}^{\mathrm{out}(\mathcal{N})} &\equiv \mathcal{I}\otimes\mathcal{N}(\tilde{\gamma}_{AA'})\\
        &=\left(\alpha\ket{0}\!\bra{0}_C+(1-\alpha)\ket{1}\!\bra{1}_C\right)\otimes\frac{s\gamma\otimes\gamma + (1-s)d^2\tilde{\gamma}}{s+(1-s)d^2}\nonumber\\
        &\quad + \frac{\sqrt{\alpha(1-\alpha)s(1-s)}}{s+(1-s)d^2}\left(\mathrm{e}^{-\mathrm{i}\varphi}\ket{0}\!\bra{1}_C + \mathrm{e}^{\mathrm{i}\varphi}\ket{1}\!\bra{0}_C\right)\otimes\tau ,
    \end{align}
    where $\tau$ is defined as
    \begin{align}
        \tau &\equiv |\hat{\gamma}\rangle\!\rangle\!\bra{\tilde{\gamma}} + \ket{\tilde{\gamma}}\!\langle\!\langle\hat{\gamma}|
        = \sum_{i=0}^{d-1}\sum_{k,\ell=0}^{d-1}\sqrt{g_i}\sqrt{g_k}\sqrt{g_\ell}\left(\ket{k}\!\bra{i}\otimes\ket{\ell}\!\bra{i} + \ket{i}\!\bra{k}\otimes\ket{i}\!\bra{\ell}\right),
    \end{align}
    where $|\hat{\gamma}\rangle\!\rangle \equiv \sum_{k,\ell=0}^{d-1}\sqrt{g_k}\sqrt{g_\ell}\ket{k}\otimes\ket{\ell}$ is the vectorisation of $\hat{\gamma}$ defined in Eq.~(\ref{eq:gamma_hat}). By Theorem~\ref{appthm:ER_R}, we have
    \begin{align}
        R(\mathcal{N}||\Gamma) = R_\mathrm{T}\left[\mathcal{I}\otimes\mathcal{N}(\tilde{\gamma}_{AA'})\right] &= R_\mathrm{T}(\rho_{ABC'}^{\mathrm{out}(\mathcal{N})}) \\
        &\le R_\mathrm{T}\left[\left(\alpha\ket{0}\!\bra{0}_C+(1-\alpha)\ket{1}\!\bra{1}_C\right)\otimes\frac{s\gamma\otimes\gamma + (1-s)d^2\tilde{\gamma}}{s+(1-s)d^2}\right] \nonumber \\
        &\quad + 1 + R_\mathrm{T}\left[\frac{\sqrt{\alpha(1-\alpha)s(1-s)}}{s+(1-s)d^2}\left(\mathrm{e}^{-\mathrm{i}\varphi}\ket{0}\!\bra{1}_C + \mathrm{e}^{\mathrm{i}\varphi}\ket{1}\!\bra{0}_C\right)\otimes\tau\right]\\
        &= [1+R_\mathrm{T}(\alpha\ket{0}\!\bra{0}_C+(1-\alpha)\ket{1}\!\bra{1}_C)]\left[1+\frac{(1-s)d^2}{s+(1-s)d^2} R_\mathrm{T}(\tilde{\gamma})\right]-1 \nonumber \\
        &\quad + \frac{\sqrt{\alpha(1-\alpha)s(1-s)}}{s+(1-s)d^2}\nonumber\\
        &\quad \times \left[1+R_\mathrm{T}((\mathrm{e}^{-\mathrm{i}\varphi}\ketbra{0}{1}_C + \mathrm{e}^{\mathrm{i}\varphi}\ketbra{1}{0}_C)\otimes\tau)\right]\\
        &\le \frac{2(1-s)d^2}{s+(1-s)d^2}P_\mathrm{T}(\mathcal{I})+1 + \frac{\sqrt{\alpha(1-\alpha)s(1-s)}}{s+(1-s)d^2}\left[1+R_\mathrm{T}(\chi)\right],
        \label{eq:R(N||Gamma)_bound}
    \end{align}
    where the first inequality comes from the fact that $R_\mathrm{T}(\rho+\sigma) \le 1 + R_\mathrm{T}(\rho) + R_\mathrm{T}(\sigma)$, the second line follows from the equality $1+R_\mathrm{T}(c\rho) = c[1+R_\mathrm{T}(\rho)]$ for a constant $c\ge 0$, and
    the second inequality comes from the fact that $1 + R_\mathrm{T}(\alpha\ket{0}\!\bra{0}_C+(1-\alpha)\ket{1}\!\bra{1}_C) = 2\max\{\alpha,1-\alpha\} \le 2$, $\chi\equiv(\mathrm{e}^{-\mathrm{i}\varphi}\ket{0}\!\bra{1}_C+\mathrm{e}^{\mathrm{i}\varphi}\ket{1}\!\bra{0}_C)\otimes\tau$ and $P_\mathrm{T}(\mathcal{I}) = R_\mathrm{T}(\tilde{\gamma})$ (Corollary~\ref{appcorol:all_purfications_R_T}).

    According to Eq.~(\ref{eq:Lambda_C^G_cvx_mixture}), we have
    \begin{align}
        R(\Lambda_\mathdutchcal{C}^{(G)}||\Gamma) &\le \frac{d^2 s}{d^2+1}R(\Lambda_\mathdutchcal{C}^{(\Gamma)}||\Gamma) + \frac{s+(1-s)d^2}{d^2+1}R(\mathcal{N}||\Gamma) + \frac{1-s}{d^2+1}\left[\left(1+R_\mathrm{T}(\rho_C)\right)\left(1+P_\mathrm{T}(\mathcal{I})\right)-1\right]\\
        &\le \frac{d^2 s}{d^2+1} + \frac{2(1-s)d^2}{d^2+1}P_\mathrm{T}(\mathcal{I}) + \frac{s+(1-s)d^2}{d^2+1} + \frac{\sqrt{\alpha(1-\alpha)}\sqrt{s(1-s)}}{d^2+1}\left[1+R_\mathrm{T}(\chi) \right] + \frac{1-s}{d^2+1}\left(2P_\mathrm{T}(\mathcal{I}) + 1\right)\\
        &= \frac{d^2 s}{d^2+1} + \frac{2d^2}{d^2+1}P_\mathrm{T}(G) + \frac{s+(1-s)d^2}{d^2+1} + \frac{\sqrt{\alpha(1-\alpha)}\sqrt{s(1-s)}}{d^2+1}\left[1+R_\mathrm{T}(\chi)\right] + \frac{2}{d^2+1}P_\mathrm{T}(G) + \frac{1-s}{d^2+1}\\
        &= 1 + 2P_\mathrm{T}(G) + \frac{\sqrt{\alpha(1-\alpha)}\sqrt{s(1-s)}}{d^2+1}[1+R_\mathrm{T}(\chi)]\\
        &= R_\mathrm{T}(\rho_C) + 2P_\mathrm{T}(G) + \frac{\sqrt{\alpha(1-\alpha)}\sqrt{s(1-s)}}{d^2+1}[1+R_\mathrm{T}(\chi)],
        \label{eq:R(Lambda_C^G||Gamma)_upper_bound}
    \end{align}
    where the first inequality comes from the convexity and the multiplicity of $R(\Lambda||\Gamma)$, the second inequality is obtained by substituting the upper bound of $R(\mathcal{N}||\Gamma)$ in Eq.~(\ref{eq:R(N||Gamma)_bound}), the third line follows the Lemma.~\ref{lem:P_T(G)} and the last line used the fact that $R_\mathrm{T}(\rho_C) = 1,\,\forall\,\alpha\in[0,1]$. Note that the upper bound in Eq.~(\ref{eq:R(Lambda_C^G||Gamma)_upper_bound}) is derived for the case when $s\in (0,1)$.
    When $s = 0$, we have $\Lambda_{\mathdutchcal{C}}^{(G)} = \rho_C\otimes\mathcal{I}$ with $R(\Lambda_\mathdutchcal{C}^{(G)}||\Gamma) = R_\mathrm{T}(\rho_C) + 2P_\mathrm{T}(\mathcal{I})$.
    When $s = 1$, 
    $\Lambda_\mathdutchcal{C}^{(G)}$ reduces to $\Lambda_\mathdutchcal{C}^{(\Gamma)}$ and
    by Lemma~\ref{lem:R_Lambda_C^Gamma}, $R(\Lambda_\mathdutchcal{C}^{(\Gamma)}||\Gamma) = 1 = R_\mathrm{T}(\rho_C)$. Therefore, Eq.~(\ref{eq:R(Lambda_C^G||Gamma)_upper_bound}) is an upper bound of $R(\Lambda_\mathdutchcal{C}^{(G)}||\Gamma)$ for $s\in [0,1]$ and is saturated at $s=0$ or $1$. We eventually obtain the upper bound in Theorem~\ref{appthm:coh_cntrl_bounds}:
    \begin{align}
        R(\Lambda_\mathdutchcal{C}^{(\Gamma)}||\Gamma) \le R_\mathrm{T}(\rho_C) + 2P_\mathrm{T}(G) + \sqrt{\alpha(1-\alpha)s(1-s)}f(\gamma), \quad\forall\, \alpha, s\in[0,1],\,\varphi\in[0,2\pi].
    \end{align}
    where $f(\gamma)\equiv [1+R_\mathrm{T}(\chi)]/(d^2+1)$.

\subsection{Derivation of Eq.~(\ref{eq:R_T(Lambda_C)}) in Theorem~\ref{appthm:coh_cntrl_bounds}}\label{app:derivation_of_R_T(Lambda_C)}
In this subsection, we derive the analytical expression of $R_\mathrm{T}(\Lambda_\mathdutchcal{C}^{(\Gamma)})$ in Eq.~(\ref{eq:R_T(Lambda_C)}). By Theorem~\ref{appthm:ER_ROA}, we know that $R_\mathrm{T}(\Lambda_\mathdutchcal{C}^{(\Gamma)}) = R_\mathrm{T}[\Lambda_\mathdutchcal{C}^{(\Gamma)}(\gamma_A)]$. To find $\Lambda_\mathdutchcal{C}^{(\Gamma)}(\gamma_A)$, we recall $\rho_{ABC'}^{\mathrm{out}}$ in Eq.~(\ref{eq:rho_ABC'_out_coh_cntrl}) and have $\mathrm{Tr}_A\{\rho_{ABC'}^\mathrm{out}\} = \mathrm{Tr}_A\left\{\mathcal{I}\otimes\Lambda_\mathdutchcal{C}^{(\Gamma)}(\tilde{\gamma}_{AA'})\right\} = \Lambda_\mathdutchcal{C}^{(\Gamma)}(\mathrm{Tr}_A\{\tilde{\gamma}_{AA'}\}) = \Lambda_\mathdutchcal{C}^{(\Gamma)}(\gamma_{A'}) = \Lambda_\mathdutchcal{C}^{(\Gamma)}(\gamma_A)$ (note that $A'$ is a copy of $A$). Given $\lambda\ge 1$, we write $[\lambda\gamma_{C'}\otimes\gamma_B - \Lambda_\mathdutchcal{C}^{(\Gamma)}(\gamma_A)]$ in the energy eigenbasis of $C$ as a block matrix $N$ (from now we ignore the labels of systems):
\begin{align}
    N &= \lambda\begin{pmatrix}
        \gamma/2 & 0 \\
        0 & \gamma/2
    \end{pmatrix} - \begin{pmatrix}
        \alpha\gamma & d^{-2}\mathrm{e}^{-\mathrm{i}\varphi}\sqrt{\alpha(1-\alpha)}\hat{\gamma} \\
        d^{-2}\mathrm{e}^{\mathrm{i}\varphi}\sqrt{\alpha(1-\alpha)}\hat{\gamma} & (1-\alpha)\gamma
    \end{pmatrix}\\
    &= \begin{pmatrix}
        (\lambda/2 - \alpha)\gamma & -d^{-2}\mathrm{e}^{-\mathrm{i}\varphi}\sqrt{\alpha(1-\alpha)}\hat{\gamma} \\
        -d^{-2}\mathrm{e}^{\mathrm{i}\varphi}\sqrt{\alpha(1-\alpha)}\hat{\gamma} & (\lambda/2+\alpha - 1)\gamma
    \end{pmatrix}.
    \label{eq:N_matrix_coh_cntrl}
\end{align}
When $\alpha = 0$ or $1$, it is easy to find that the minimal $\lambda$ ensuring $N\ge 0$ is $2$, and thus $R_\mathrm{T}(\Lambda_\mathdutchcal{C}^{(\Gamma)}) = R_\mathrm{T}[\Lambda_\mathdutchcal{C}^{(\Gamma)}(\gamma_A)] = 2-1 = 1$. For $\alpha\in(0,1)$, to let $N\ge 0$, we need $\lambda/2 > \max\{\alpha, 1-\alpha\}$, which renders the top-left and bottom-right blocks in $N$ [Eq.~(\ref{eq:N_matrix_coh_cntrl})] be positive definite. By Schur complement (Theorem~1.12 in Ref.~\cite{zhang2006schur}), we have
\begin{align}
    N\ge 0 \Leftrightarrow (\lambda/2 + \alpha - 1)\gamma - \frac{\alpha(1-\alpha)}{d^4(\lambda/2 - \alpha)}\,\hat{\gamma}\gamma^{-1}\hat{\gamma}\ge 0.
\end{align}
Using the equality $\hat{\gamma}\gamma^{-1}\hat{\gamma} = d\hat{\gamma}$ [Eq.~(\ref{eq:hat_gamma_gamma(-1)_hat_gamma})], the condition can be rewritten as
\begin{align}
    \gamma \ge \frac{\alpha(1-\alpha)}{d^3(\lambda/2+\alpha - 1)(\lambda/2 - \alpha)}\hat{\gamma}.
    \label{appeq:gamma_>=_f*hat_gamma}
\end{align}
Now we show that $\min\{f|\hat{\gamma}\le f\cdot d\gamma\} = 1$. By Lemma~\ref{lem:barely_positive}, it is equivalent to show that $\exists\,\ket{\phi},\, \bra{\phi}d\gamma - \hat{\gamma}\ket{\phi} = 0$ and $d\gamma - \hat{\gamma} \ge 0$. Such a state $\ket{\phi}$ can be chosen as $\ket{\phi} \equiv \sum_{i=0}^{d-1}g_i^{-1/2}\ket{i}$ so that
\begin{align}
    \bra{\phi}d\gamma \ket{\phi} &= d\left(\sum_{i=0}^{d-1}g_i^{-1/2}\bra{i}\right)\gamma\left(\sum_{j=0}^{d-1}g_j^{-1/2}\ket{j}\right)
    = d\left(\sum_{i=0}^{d-1}g_i^{-1/2}\bra{i}\right)\left(\sum_{j=0}^{d-1}g_j^{1/2}\ket{j}\right) = d\sum_{i=0}^{d-1}1 = d^2.\\
    \bra{\phi}\hat{\gamma}\ket{\phi} &= \bra{\phi}\sum_{i,j=0}^{d-1}\sum_{k=0}^{d-1}g_i^{1/2}g_j^{1/2}g_k^{-1/2}\ket{i}\!\bra{j}\ket{k}
    = d\bra{\phi}\sum_{i=0}^{d-1}g_i^{1/2}\ket{i}
    = d\left(\sum_{\ell=0}^{d-1}g_\ell^{-1/2}\bra{\ell}\right)\left(\sum_{i=0}^{d-1}g_i^{1/2}\ket{i}\right)
    = d\sum_{i=0}^{d-1}1 = d^2,
\end{align}
Thus, $\bra{\phi}d\gamma - \hat{\gamma}\ket{\phi} = d^2 - d^2 = 0$. Besides, we note that $d\gamma - \hat{\gamma} = \gamma^{1/2}(d\mathbbm{1} - O_d)\gamma^{1/2}$ where $O_d \equiv \sum_{i,j=0}^{d-1}\ket{i}\!\bra{j}$. Since $d\mathbbm{1}-O_d \ge 0$ and $\gamma$ is full-rank, we have $d\gamma - \hat{\gamma} \ge 0$. Therefore, for Eq.~(\ref{appeq:gamma_>=_f*hat_gamma}) to holds, we have
\begin{align}
    &d\le \frac{d^3(\lambda/2+\alpha - 1)(\lambda/2 - \alpha)}{\alpha(1-\alpha)}, \\
    \Rightarrow \quad & \lambda \ge 1 + \sqrt{(1-2\alpha)^2 + 4d^{-2}\alpha(1-\alpha)}
\end{align}
where in the second line we used $\alpha\in(0,1)$, $\lambda\ge 1$ and $\lambda/2 > \max\{\alpha, 1-\alpha\}$ to neglect other ranges of $\lambda$. We thus have
\begin{align}
    R_\mathrm{T}(\Lambda_\mathdutchcal{C}^{(\Gamma)}) = R_\mathrm{T}[\Lambda_\mathdutchcal{C}^{(\Gamma)}(\gamma_A)] = \lambda_{\min} - 1 =\sqrt{(1-2\alpha)^2 + 4d^{-2}\alpha(1-\alpha)},\quad \forall\, \alpha\in [0,1],\,\varphi\in[0,2\pi],
\end{align}
where we have included the case of $\alpha = 0$ or $1$ because the expression correctly gives $R_\mathrm{T}(\Lambda_\mathdutchcal{C}^{(\Gamma)}) = 1$ as $\alpha = 0$ or $1$. Eq.~(\ref{eq:R_T(Lambda_C)}) is hence derived.

\end{document}